\renewcommand{\orcidID}[1]{\href{https://orcid.org/#1}{\includegraphics[scale=.03]{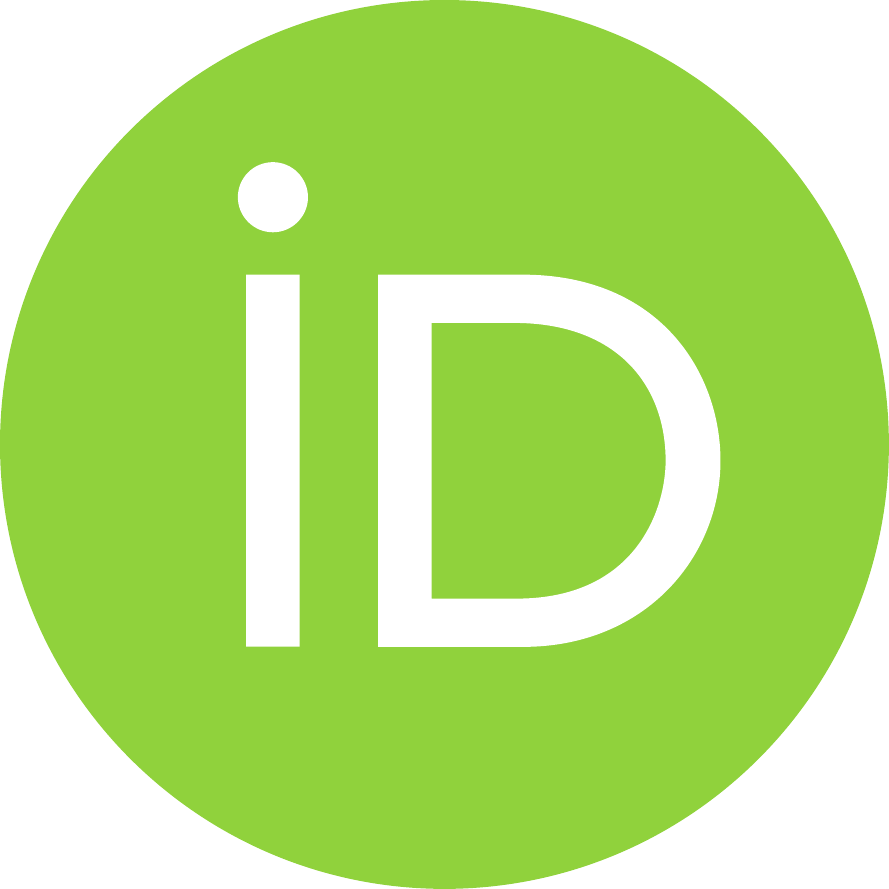}}} 
\Crefname{observation}{Observation}{Observations}
\Crefname{proposition}{Proposition}{Propositions}
\Crefname{claim}{Claim}{Claims}
\Crefname{property}{Property}{Properties}
\Crefname{enumi}{Property}{Properties}
\newcommand{\ubt}[1]{\ensuremath{\mathrm{UBT(}#1\mathrm{)}}}
\newcommand{\ube}{UBE\xspace}
\definecolor{MyBlue}{rgb}{0.022,0.263,0.394}
\renewcommand{\emph}[1]{{\color{MyBlue}{\em #1}}\xspace}
\newcommand{\shorttitle}{On the Upward Book Thickness Problem\xspace}
\title{%
\shorttitle: Combinatorial and Complexity Results
}
\titlerunning{\shorttitle}
\author{Sujoy Bhore\inst{1}\orcidID{0000-0003-0104-1659} \and
Giordano Da Lozzo\inst{2}\orcidID{0000-0003-2396-5174} \and \\
Fabrizio Montecchiani\inst{3}\orcidID{0000-0002-0543-8912} \and Martin~N\"ollenburg\inst{4}\orcidID{0000-0003-0454-3937}}
\authorrunning{S. Bhore et al.}
\institute{Indian Institute of Science Education and Research, Bhopal, India\\
\email{sujoy.bhore@gmail.com}
\and  Roma Tre University, Rome, Italy\\
\email{giordano.dalozzo@uniroma3.it}
\and Department of Engineering, University of Perugia, Italy\\
\email{fabrizio.montecchiani@unipg.it}
\and Algorithms and Complexity Group, TU Wien, Vienna, Austria\\
\email{noellenburg@ac.tuwien.ac.at }}
\begin{document}

\maketitle

\begin{abstract}

A long-standing conjecture by Heath, Pemmaraju, and Trenk states that the upward book thickness of outerplanar DAGs is bounded above by a constant. In this paper, we show that the conjecture holds for subfamilies of upward outerplanar graphs, namely those whose underlying graph is an internally-triangulated outerpath or a cactus, and those whose biconnected components are $st$-outerplanar graphs. On the complexity side, it is known that deciding whether a graph has upward book thickness $k$ is \NP-hard for any fixed $k \ge 3$. We show that the problem, for any $k \ge 5$, remains \NP-hard for graphs whose domination number is $O(k)$, but it is \textsf{FPT} in the vertex cover number.

\end{abstract}

\section{Introduction}
A \emph{$k$-page book embedding} (or \emph{$k$-stack layout}) of an $n$-vertex graph $G=(V,E)$ is a pair $\langle \pi, \sigma \rangle$ consisting of a bijection $\pi \colon V \rightarrow \{1, \dots, n\}$, defining a total order on $V$, and a page assignment $\sigma \colon E \rightarrow \{1, \dots, k\}$, partitioning $E$ into $k$ subsets $E_i = \{e \in E \mid \sigma(e) = i\}$ ($i=1, \dots, k$) called \emph{pages} (or \emph{stacks}) such that no two edges $uv, wx \in E$ mapped to the same page $\sigma(uv) = \sigma(wx)$ cross in the following sense. 
Assume, w.l.o.g., $\pi(u) < \pi(v)$ and $\pi(w) < \pi(x)$ as well as $\pi(u) < \pi(w)$. 
Then $uv$ and $wx$ \emph{cross} if $\pi(u) < \pi(w) < \pi(v) < \pi(x)$, i.e., their endpoints interleave.
The \emph{book thickness} (or \emph{stack number}) of $G$ is the smallest $k$ for which $G$ admits a $k$-page book embedding. 
Book embeddings and book thickness of graphs are well-studied topics in graph drawing and graph theory~\cite{kainen74,DujmovicW04,ollmann73,ChungLR87}. For instance, it is \NP-complete to decide for $k \ge 2$ if the book thickness of a graph is at most $k$~\cite{BernhartK79,ChungLR87} and it is known that planar graphs have book thickness at most 4~\cite{Yannakakis89}; this bound has recently been shown tight~\cite{KaufmannBKPRU20}. More in general, the book thickness of graphs of genus $g$ is $O(\sqrt{g})$~\cite{DBLP:journals/jal/Malitz94a} and constant upper bounds are known for some families of non-planar graphs~\cite{DBLP:journals/algorithmica/BekosBKR17,DBLP:journals/dcg/DujmovicW07,DBLP:conf/compgeom/BekosLGGMR20}. %

\emph{Upward book embeddings} (UBEs) are a natural extension of book embeddings to directed acyclic graphs (DAGs) with the additional requirement that the vertex order $\pi$ respects the      directions of all edges, i.e., $\pi(u) < \pi(v)$ for each $uv \in E$ (and hence $G$ must be acyclic). Thus the ordering induced by $\pi$ is a topological ordering of $V$.
Book embeddings with different constraints on the vertex ordering have also been studied in~\cite{DBLP:conf/soda/AngeliniLBFP21,DBLP:journals/tcs/AngeliniLN15,DBLP:conf/soda/FulekT20}.
Analogously to book embeddings, the \emph{upward book thickness} (UBT) of a DAG $G$ is defined as the smallest $k$ for which~$G$ admits a $k$-page UBE. 
The notion of upward book embeddings is similar to upward planar drawings~\cite{DBLP:journals/tcs/BattistaT88,Didimo2014}, i.e., crossing-free drawings, where additionally each directed edge $uv$ must be a y-monotone curve from $u$ to $v$. Upward book embeddings have been introduced by Heath et al.~\cite{HPT99b,HPT99a}. They showed that graphs with UBT 1 can be recognized in linear time, whereas Binucci et al.~\cite{DBLP:conf/compgeom/BinucciLGDMP19} proved that deciding the UBT of a graph is generally \NP-complete, even for fixed values of $k\ge 3$. On the positive side, deciding if a graph admits a $2$-page \ube can be solved in polynomial time for $st$-graphs of bounded treewidth~\cite{DBLP:conf/compgeom/BinucciLGDMP19}.

Constant upper bounds on the UBT are known for some graph classes: directed trees have UBT 1~\cite{HPT99a}, unicyclic DAGs,  series-parallel DAGs, and N-free upward planar DAGs have UBT 2~\cite{HPT99a,GGLW06,MS09,AlzohairiR96}. Frati et al.~\cite{FFR13} studied UBEs of upward planar triangulations and gave several conditions under which they have constant UBT.
Interestingly, upward planarity is a necessary condition to obtain constant UBT, as there is a family of planar, but non-upward planar,  DAGs that require $\Omega(n)$ pages in any UBE~\cite{HP97}.
Back in 1999, Heath et al.~\cite{HPT99a} conjectured that the UBT of outerplanar graphs is bounded by a constant, regardless of their upward planarity. Another long-standing open problem~\cite{NP89} is whether upward planar DAGs have constant UBT; in this respect, examples with a lower bound of 5 pages are known~\cite{M20} and \mbox{there is no known upper bound better than $O(n)$.}

\paragraph{Contributions.} In this paper, we contribute to the research on the upward book thickness problem from two different directions. We first report some notable progress towards the conjecture of Heath et al.~\cite{HPT99a}. We consider subfamilies of upward outerplanar graphs (see \cref{ref:prelims} for definitions), namely those whose underlying graph is an internally-triangulated outerpath or a cactus, and those whose biconnected components are $st$-outerplanar graphs, and provide constant upper bounds on their UBT (\cref{se:outerplanar}). Our proofs are constructive and give rise to polynomial-time book embedding algorithms. 
We then investigate the complexity of the problem (\cref{se:complexity}) and show that for any $k \ge 5$ it remains \NP-complete for graphs whose domination number is in $O(k)$. On the positive side, we prove that the upward book thickness problem is fixed-parameter tractable in the vertex cover number. These two results narrow the gap between tractable and intractable parameterizations of the problem. 
Proofs of  statements marked with a $(\star)$ have been sketched or omitted~and~are~in~the~appendix.
\clearpage

\section{Preliminaries}\label{ref:prelims}
We assume familiarity with basic concepts in graph drawing (see also~\cite{DBLP:books/ph/BattistaETT99}). 

\paragraph{BC-tree.} The \emph{BC-tree} of a connected graph $G$ is the incidence graph between the  (maximal) biconnected components of $G$, called \emph{blocks}, and the cut-vertices of $G$. A block is \emph{trivial} if it consists of a single edge, otherwise it is \emph{non-trivial}.

\paragraph{Outerplanarity.} 
An \emph{outerplanar graph} is a graph that admits an \emph{outerplanar drawing}, i.e., a planar drawing in which all vertices are on the outer face, which defines an \emph{outerplanar embedding}. 
Unless otherwise specified, we will assume our graphs to have planar or outerplanar embeddings. An outerplanar graph $G$ is \emph{internally triangulated} if it is biconnected and all its inner faces are cycles of length 3. An edge of $G$ is \emph{outer} if it belongs to the outer face of $G$, and it is \emph{inner} otherwise. A \emph{cactus} is a connected outerplanar graph in which any two simple cycles have at most one vertex in common. Therefore, the blocks of a cactus graph are either single edges (and hence trivial) or cycles. The \emph{weak dual} $\overline{G}$ of a planar graph~$G$ is the graph having a node for each inner face of $G$, and an edge between two nodes if and only if the two corresponding faces share an edge. For an outerplanar graph~$G$, its weak dual $\overline{G}$ is a tree. If $\overline{G}$ is a path, then $G$ is an \emph{outerpath}. A \emph{fan} is an internally-triangulated outerpath whose inner edges all share an end-vertex.

\paragraph{Directed graphs.} A \emph{directed graph} $G=(V,E)$, or \emph{digraph}, is a graph whose edges have an orientation.  We assume each edge $e=uv$ of $G$ to be oriented from $u$ to $v$, and hence denote $u$ and $v$ as the \emph{tail} and \emph{head} of $e$, respectively. A vertex $u$ of $G$ is a \emph{source} (resp.\ a \emph{sink}) if it is the tail (resp.\ the head) of all its incident edges. If $u$ is neither a source nor a sink of $G$, then it is  \emph{internal}. 
A  \emph{DAG} is a digraph that contains no directed cycle. An \emph{$st$-DAG} is a DAG with a single source $s$ and a single sink $t$; if needed, we may use different letters to denote $s$ and $t$. A digraph is \emph{upward (outer)planar}, if it has a (outer)planar drawing such that each edge is a y-monotone curve. Such a drawing (if any) defines an upward (outer)planar embedding. An upward planar digraph $G$ (with an upward planar embedding) is always a DAG and it is \emph{bimodal}, that is, the sets of incoming and outgoing edges at each vertex $v$ of $G$ are contiguous around $v$ (see also~\cite{DBLP:books/ph/BattistaETT99}).  
The \emph{underlying graph} of a digraph is the graph obtained by disregarding the edge orientations. An \emph{$st$-outerplanar graph} (resp.\ \emph{$st$-outerpath}) is an $st$-DAG whose underlying graph is outerplanar graph (resp. \ an outerpath). An \emph{$st$-fan} is an $st$-DAG whose  underlying graph is a fan and whose inner edges have $s$~as~an~end-vertex.

\begin{restatable}[$\star$]{lemma}{lembimodality}\label{lem:bimodality}Let $G$ be an upward outerplanar graph and let~$c$ be a cut-vertex of~$G$. Then there are at most two blocks of $G$ for which $c$ is internal.
\end{restatable}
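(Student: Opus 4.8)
The plan is to argue by contradiction using the bimodality of upward outerplanar graphs together with the cyclic structure imposed around a cut-vertex by the outerplanar embedding.

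First I would set up the local picture at the cut-vertex $c$. Since $G$ is upward outerplanar, it admits an upward outerplanar embedding, which is in particular bimodal: around each vertex the incoming edges form one contiguous arc and the outgoing edges form another contiguous arc. I would recall what it means for $c$ to be \emph{internal} to a block $B$: since a block is a maximal biconnected component, $c$ has at least two neighbors inside $B$, and $c$ being internal to $B$ means it is neither a source nor a sink \emph{within} $B$, i.e. among the edges of $B$ incident to $c$ there is at least one incoming and at least one outgoing edge of $B$. The key observation is that this forces $B$ to contribute at least one \emph{switch} at $c$ between the incoming arc and the outgoing arc in the cyclic rotation at $c$: as we read the edges of $B$ around $c$ we must pass from an outgoing edge to an incoming edge (or vice versa) at least once, because $B$ has edges of both orientations at $c$.

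Next I would count switches globally. By bimodality, the full rotation at $c$ has exactly two switches: one where the contiguous incoming arc meets the contiguous outgoing arc on one side, and one where they meet on the other side. These are the only two places around $c$ where consecutive edges change orientation type. The crucial structural fact from outerplanarity is that the edges of distinct blocks appear in the rotation at $c$ in non-interleaving (consecutive) groups — because $c$ is a cut-vertex, the blocks hang off $c$ in separate regions of the outerplanar drawing, so each block occupies a contiguous angular interval around $c$. Therefore each block $B$ for which $c$ is internal contributes a switch that lies strictly \emph{inside} its own angular interval, and these interior switches are distinct for distinct blocks.

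The main step, and the one I expect to be the crux, is to make precise that each block to which $c$ is internal must ``use up'' at least one of the two available switches, and that a switch lying at the boundary between two consecutive block intervals can be shared, whereas an interior switch cannot. If three blocks $B_1, B_2, B_3$ all had $c$ internal, each would force at least one orientation change within its own interval, yielding at least three switches around $c$; combined with the bimodality bound of exactly two switches, this is a contradiction. I would handle the boundary bookkeeping carefully — making sure that the two global switches cannot both be absorbed by block boundaries while still giving each internal block its own interior switch — but the heart of the argument is simply the clash between ``at least three switches forced'' and ``exactly two switches allowed.'' The main obstacle is thus the precise definition and localization of switches relative to the contiguous angular intervals of the blocks, which relies on the outerplanar embedding separating the blocks cleanly around $c$; once that separation is established, the counting is immediate.
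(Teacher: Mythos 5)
Your approach is sound and reaches the statement by a genuinely different route than the paper. The paper's proof takes the three cycles $C_1, C_2, C_3$ (one inside each block for which $c$ is internal), uses planarity to argue that the left-to-right order of the edges entering $c$ must match that of the edges leaving $c$, and concludes that $C_2$ encloses all vertices of $C_1$ or of $C_3$ other than $c$, contradicting outerplanarity directly. You instead derive a contradiction with bimodality: the rotation at $c$ has exactly two switches between incoming and outgoing edges, each internal block needs a private switch inside its own angular interval, and three blocks would need three. Both arguments use the same two ingredients (bimodality of upward planar embeddings, and the fact that outerplanarity forbids enclosing vertices inside a cycle), just applied in opposite order. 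One caution: your ``crucial structural fact'' --- that the edges of distinct blocks occupy contiguous angular intervals around a cut-vertex --- is genuinely a consequence of outerplanarity and not of planarity alone. Three nested triangles sharing $c$ give a planar (even upward planar, bimodal) embedding in which the middle block's edges are not contiguous around $c$; blocks at a cut-vertex can nest in planar embeddings. So this step cannot be waved off as a standard fact about planar embeddings: proving it requires precisely the cycle-enclosure argument (an edge of a different block pointing into the interior of a cycle of block $B$ has its far endpoint strictly inside that cycle, hence off the outer face), which is in essence the paper's entire proof. Once that is written out, your switch-counting finish is clean and correct, and it has the merit of making the role of bimodality more explicit than the paper's version does.
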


\newcommand{\lembimodalityproof}{
\begin{proof}
Observe that if $c$ is internal for a block $\beta$ of $G$, then $\beta$ must be non-trivial.
Suppose, for a contradiction, that there exist three non-trivial blocks $\beta_1$, $\beta_2$, and $\beta_3$ of $G$ incident to $c$ for which $c$ is internal.
Let $\cal E$ be the upward outerplanar embedding of $G$. 
Since each $\beta_i$ is non-trivial, it contains a cycle of edges $C_i$, for which $c$ is internal.
Let $e_i$ and $h_i$ be the edges of $C_i$ for which $c$ is the head and the tail, respectively. Assume, w.l.o.g., that $e_1$, $e_2$, and $e_3$ appear in this left-to-right order in $\cal E$. Then, by the upward planarity of $\cal E$, it holds that the edges $h_1$, $h_2$, and $h_3$ appear in this left-to-right order in $\cal E$. Thus, by planarity, $C_2$ must enclose in its interior all the vertices of either $C_1$ or $C_3$ (except for $c$), contradicting the fact that $\cal E$ is outerplanar.
\end{proof}}

\paragraph{Basic operations.}
Let $\pi$ and $\pi'$ be two orderings over vertex sets $V$ and  $V' \subseteq V$, respectively. Then  $\pi$ \emph{extends}  $\pi'$ if for any two vertices $u,v \in V'$ with $\pi'(u) < \pi'(v)$, it holds $\pi(u) < \pi(v)$. We may denote an ordering $\pi$  as a list $\langle v_1,v_2,\dots,v_{|V|} \rangle$ and use the \emph{concatenation operator} $\circ$ to define an ordering from other lists, e.g., we may obtain the ordering $\pi=\langle v_1,v_2,v_3,v_4 \rangle$ as $\pi = \pi_1 \circ \pi_2$, where $\pi_1=\langle v_1,v_2 \rangle$ and $\pi_2=\langle v_3,v_4 \rangle$. Also, let $\pi$ be an ordering over $V$ and let $u \in V$. We denote by \emph{$\pi_{u^-}$} and \emph{$\pi_{u^+}$} the two orderings such that $\pi=\pi_{u^-} \circ \langle u\rangle \circ \pi_{u^+}$. Consider two orderings $\pi$ over $V$ and $\pi'$ over $V'$ with $V \cap V'=\{u,v\}$, and such that: (i) $u$ and $v$ are consecutive in $\pi$, and (ii) $u$ and $v$ are the first and the last vertex of $\pi'$, respectively. The ordering $\pi^*$ over $V \cup V'$ obtained by \emph{merging} $\pi$ and $\pi'$ is $\pi^*=\pi_{u^-} \circ \pi' \circ \pi_{v^+}$. Note that $\pi^*$ extends both $\pi$ and $\pi'$.

\section{Book thickness of outerplanar graphs}\label{se:outerplanar}

In this section, we study the UBT of three families of upward outerplanar graphs. We begin with internally-triangulated upward outerpaths (\cref{sse:max-outerpaths}), which are biconnected and may have multiple sources and sinks. We then continue with families of outerplanar graphs that are not biconnected but whose biconnected components have a simple structure, namely outerplanar graphs whose biconnected components are $st$-DAGs (\cref{sse:outerplanar}),  and cactus graphs (\cref{sse:cactus}).

\subsection{Internally-triangulated upward outerpaths}\label{sse:max-outerpaths}

In this subsection, we assume our graphs to be internally triangulated.
We will exploit the following definition and lemmas for our constructions. 

\begin{definition}\label{def:consecutive}
Let $G$ be an $st$-outerpath and $uv$ be an outer edge different from~$st$. An  \ube $\langle \pi, \sigma \rangle$ of $G$ is  \emph{$uv$-consecutive} if the following properties hold:
\begin{inparaenum}[\bf (i)]
    \item $u$ and $v$ are consecutive in $\pi$,
    \item the edges incident to $s$ lie on one page, and 
    \item the edges incident to $t$ lie on at most two pages.
\end{inparaenum}
\end{definition}

An $st$-outerplanar graph is \emph{one-sided} if the edge $st$ is an outer edge.

\begin{restatable}{lemma}{lemStFanDrawing}\label{basic:st-fan-drawing}Let $G$ be a one-sided $st$-outerplanar graph. Then, $G$ admits a $1$-page \ube $\langle \pi, \sigma \rangle$ that is $uv$-consecutive for each outer edge $uv \neq st$. 
\end{restatable}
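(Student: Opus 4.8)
The plan is to prove the lemma by induction on the number of vertices of $G$, peeling off the unique internal face incident to the edge $st$ and recursing on the subgraphs hanging off that face. Since every $Q$-edge will either be a genuine outer edge of $G$ or a chord that cuts off a smaller one-sided instance, the inductive structure is that of a ``fan'' of subgraphs glued along a spine, which is then assembled via the merging operation of \cref{ref:prelims} while keeping everything on a single page.

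First I would record the structure that drives the induction. Because $G$ is an $st$-DAG, every vertex lies on a directed $s$-$t$ path; together with the presence of the outer edge $st$ this forces $G$ to be biconnected (a vertex cut off from $\{s,t\}$ by a single vertex $w$ could not lie on a simple $s$-$t$ path), so the outer boundary of $G$ is a simple cycle formed by $st$ and a path $P$ from $s$ to $t$. Next, let $f$ be the internal face incident to $st$. Using that $G$ is upward outerplanar, $f$ is bounded by two directed paths from its unique source-switch to its unique sink-switch; since $st$ is one bounding path, the other is a directed path $Q = s = x_0 \to x_1 \to \dots \to x_p = t$. Each edge $x_jx_{j+1}$ of $Q$ is either an outer edge of $G$ or a chord; in the latter case it separates, on the side opposite to $f$, a subgraph $G_j$ for which $\{x_j,x_{j+1}\}$ is a separation pair. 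Then $G_j$ is a one-sided $x_jx_{j+1}$-outerplanar graph whose unique source is $x_j$ and unique sink is $x_{j+1}$ (any other source or sink of $G_j$ would be one of $G$), and $|V(G_j)| < |V(G)|$.

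By the inductive hypothesis each such $G_j$ admits a $1$-page \ube $\langle \pi_j, \sigma_j\rangle$ that is $uv$-consecutive for every outer edge $uv \neq x_jx_{j+1}$; in particular $x_j$ is first and $x_{j+1}$ is last in $\pi_j$. I would then build $\pi$ starting from the order $\langle x_0, x_1, \dots, x_p\rangle$ and, for each chord edge $x_jx_{j+1}$, merging $\pi_j$ between the consecutive vertices $x_j$ and $x_{j+1}$ (exactly the merging operation of \cref{ref:prelims}); all edges are assigned to a single page. The base case is the single edge $st$, for which $\langle s,t\rangle$ works vacuously.

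Finally I would verify the three requirements. The order $\pi$ is topological because it orders $Q$ monotonically and extends every $\pi_j$, and $G$ has no edges beyond $st$, the edges of $Q$, and the internal edges of the $G_j$'s. Each $G_j$ occupies a contiguous block equal to $\pi_j$ in $\pi$, so arcs of distinct blocks are either nested in disjoint intervals or share only an endpoint and cannot cross, arcs inside a block are crossing-free by induction, and $st$ spans the whole order as the outermost arc; hence one page suffices. For $uv$-consecutiveness, the outer edges of $G$ other than $st$ are precisely the outer edges $x_jx_{j+1}$ of $Q$ (adjacent in $\pi$) together with the non-defining outer edges of the $G_j$'s (consecutive in $\pi_j$, hence in $\pi$ since each block is contiguous), while conditions (ii)--(iii) of \cref{def:consecutive} hold trivially on a single page. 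The main obstacle is the structural step: establishing that $f$ is bounded by a directed path $Q$ and that each chord of $Q$ cuts off a genuine smaller one-sided $st$-outerplanar subinstance, as this is what makes both the recursion and the crossing-free merging go through.
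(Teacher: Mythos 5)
There is a genuine gap exactly at the step you yourself flag as the main obstacle: the claim that the internal face $f$ incident to $st$ is bounded by $st$ and a directed path $Q$ from $s$ to $t$. Your justification is that $G$ is upward outerplanar and that an internal face of an upward planar embedding is bounded by two directed paths between its \emph{unique} source-switch and \emph{unique} sink-switch. This fails for two reasons. First, the paper's definition of an $st$-outerplanar graph only requires the \emph{underlying undirected graph} to be outerplanar; no upward embedding is part of the hypothesis, so you cannot invoke one (showing that a one-sided $st$-outerplanar graph admits an upward outerplanar drawing is essentially the content of \cref{basic:st-fan-drawing} itself, since a $1$-page \ube is such a drawing). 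Second, even for digraphs that do come with an upward outerplanar embedding, the cited property is false in general: an internal face may have several source- and sink-switches. For instance, the $4$-cycle with edges $a \to b$, $c \to b$, $c \to d$, $a \to d$ is upward outerplanar, and its internal face has two source-switches ($a$ and $c$) and two sink-switches ($b$ and $d$). What makes your claim true is not upward planarity but the combination of acyclicity, the single-source/single-sink hypothesis, and the presence of the edge $st$, and this requires an actual argument: one must show that the outer path $P$ from $s$ to $t$ avoiding $st$ is a directed path (e.g., by a minimal chord-free-ear induction: an interior vertex of $P$ whose two path edges are consistently oriented away from or toward it would be a second source or sink unless rescued by a chord, and chords oriented backward along $P$ create directed cycles), after which acyclicity forces every chord to be oriented forward along $P$, so every inner face, including $f$, is bounded by two directed paths.

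Once that structural fact is established, however, your induction becomes superfluous: ordering the vertices along the directed outer path $P$ immediately yields a $1$-page \ube in which every outer edge other than $st$ joins consecutive vertices, which is precisely the paper's short direct proof. Your face-peeling recursion and the merging analysis are otherwise sound (the pieces $G_j$ cut off by chords of $Q$ are indeed smaller one-sided $x_jx_{j+1}$-outerplanar graphs, and the contiguous-block/nesting argument for crossing-freeness is fine), but the entire construction rests on the unproven directedness of $Q$, and the natural way to prove that fact already proves the lemma without any recursion.
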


\newcommand{\lemStFanDrawingProof}{\begin{proof}Consider the path of the outer face of $G$ that encompasses all vertices of $G$ and does not contain the edge $st$. Let $\langle \pi, \sigma \rangle$ be the $1$-page \ube in which $\pi$ is the ordering defined by such path. One easily verifies that no two edges cross and that the endpoints $u, v$ of each outer edge $uv \neq st$ are consecutive in $\pi$.\end{proof}}

\lemStFanDrawingProof

\begin{lemma}\label{lem:s-fan}
Let $G$ be an $st$-fan and let $uv \neq st$ be an outer edge of $G$. Then, $G$ admits a $2$-page $uv$-consecutive \ube.
\end{lemma}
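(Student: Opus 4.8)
The plan is to make the structure of an $st$-fan fully explicit and then, for each prescribed outer edge $uv$, exhibit a concrete topological order together with a two-page assignment. Write the apex as $s$ and the rim path as $v_1, v_2, \dots, v_k$, so that the edges of $G$ are the spokes $sv_1, \dots, sv_k$ (all directed away from the source $s$) and the rim edges $v_iv_{i+1}$. Since $t$ is the unique sink, it must be some rim vertex $v_j$, and acyclicity forces the rim to be oriented toward $v_j$ from both sides, i.e. $v_1 \to \cdots \to v_j \leftarrow \cdots \leftarrow v_k$ (any other orientation would create a second rim-local sink, which would be a sink of $G$ as well). The outer edges are exactly $sv_1$, $sv_k$, and the rim edges, while $st = sv_j$ is inner precisely when $1 < j < k$. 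If $j \in \{1,k\}$ then $st$ is outer, i.e. $G$ is one-sided, and \cref{basic:st-fan-drawing} already yields a one-page \ube that is $uv$-consecutive for every outer edge $uv \neq st$; this is a fortiori a two-page \ube and trivially satisfies properties (ii)--(iii). So I would assume $1 < j < k$.

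For this case I would take the order $\pi = \langle s\rangle \circ \langle v_1,\dots,v_{j-1}\rangle \circ \langle v_k,v_{k-1},\dots,v_{j+1}\rangle \circ \langle v_j\rangle$, first checking it is a topological order: all spokes leave $s$, which is placed first; each rim half is listed consistently with its orientation toward $v_j = t$; and $t$ is placed last. The key structural observation is that in $\pi$ every rim edge joins two \emph{consecutive} positions, with the single exception of the ``junction'' edge $v_{j-1}v_j$, which is long. I would then assign all spokes together with all rim edges other than $v_{j-1}v_j$ to page~$1$, and the edge $v_{j-1}v_j$ alone to page~$2$. Page~$1$ is crossing-free because any two spokes share the endpoint $s$ and an edge between consecutive positions cannot be crossed by anything, and page~$2$ carries a single edge. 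Properties (ii) and (iii) are then immediate: all edges incident to $s$ are spokes and hence lie on page~$1$; and the edges at $t=v_j$ are $sv_j$ and $v_{j+1}v_j$ (page~$1$) together with $v_{j-1}v_j$ (page~$2$), i.e. at most two pages.

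Finally I would cover all choices of $uv$ using symmetry. With the order above the consecutive outer edges are $sv_1$ and every rim edge except $v_{j-1}v_j$. The mirror order, listing the right half before the left half, symmetrically has $v_{j+1}v_j$ as its unique long edge and thus makes $sv_k$ and every rim edge except $v_{j+1}v_j$ consecutive. Hence for any prescribed outer edge $uv \neq st$ I select whichever order makes $uv$ consecutive: the first handles $sv_1$ and every rim edge $\neq v_{j-1}v_j$, the second handles $sv_k$ and $v_{j-1}v_j$, so together they cover all outer edges. The main obstacle is precisely the simultaneous satisfaction of three competing demands: the order must be a valid topological order, which forces the rim to fold toward $t$ and so creates an unavoidable long edge incident to $t$; the prescribed edge $uv$ must be consecutive; and—crucially for property (ii)—all spokes must live on a single page, which rules out the naive idea of drawing the two halves of the fan on two separate pages. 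The observation that a suitable order leaves only a single offending rim edge, combined with the symmetric choice between the two orders, is what resolves this tension and confines the entire second page to one edge.
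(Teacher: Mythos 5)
Your proposal is correct and takes essentially the same approach as the paper: your order $\langle s, v_1,\dots,v_{j-1}, v_k,\dots,v_{j+1}, v_j\rangle$ and its mirror are exactly the paper's orderings of the left and right $st$-paths of the outer face, with the identical page assignment that puts all edges on one page except the single long rim edge incident to $t$. The only differences are cosmetic: you handle the one-sided case separately via \cref{basic:st-fan-drawing} and spell out the forced rim orientation, which the paper leaves implicit.
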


\begin{proof}
\begin{figure}[t]
    \centering
    \subfigure[]{
    \includegraphics[page=4]{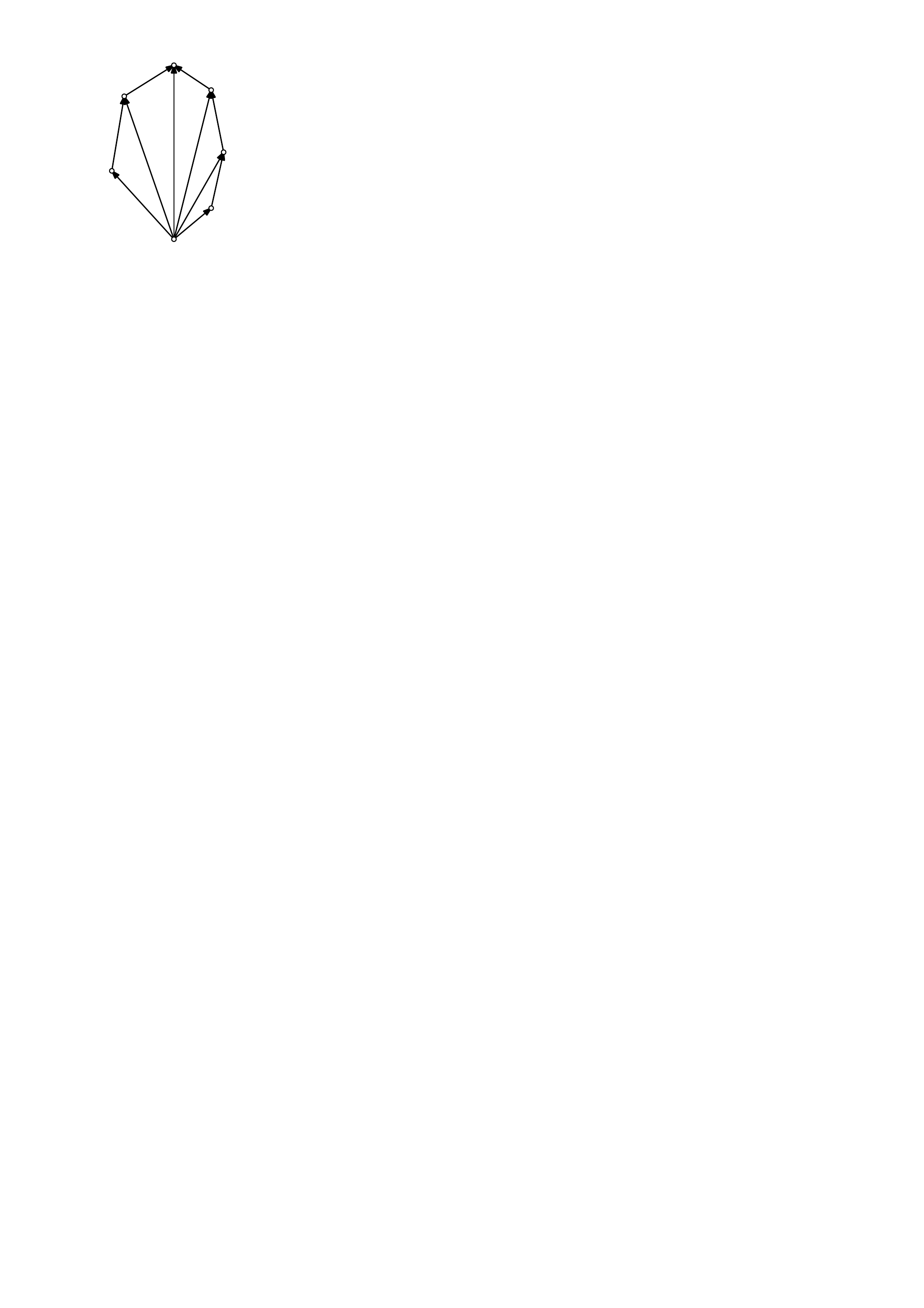}\label{fig:fan-IncidentToS}
    }
    \subfigure[]{
    \includegraphics[page=3]{figs/fan-new.pdf}\label{fig:fan-IncidentToT}
    }
    \caption{Cases for the proof of \cref{lem:s-fan}. The edge $uv$ is dashed. In all figures, the drawings are upward, hence the edge orientations are implied.}
    \label{fig:fan}
\end{figure}
Let $P_\ell=\{s,a_1,\dots,a_\ell,t\}$ and $P_r=\{s,b_1,\dots,b_r,t\}$ be the left and right $st$-paths of the outer face of $G$, respectively. Then the edge $uv$ belongs to either $P_\ell$ or $P_r$. We show how to construct an \ube $\langle \pi, \sigma \rangle$ of $G$ that satisfies the requirements of the lemma, when $uv$ belongs to $P_\ell$ (see \cref{fig:fan}); the construction when $uv$ belongs to $P_r$ is symmetric (it suffices to flip the embedding along $st$).

Since $uv \neq st$, we have either (a) $u = s$ and $v \neq t$, or (b) $v = t$ and $u \neq s$, or (c) $\{u,v\} \cap \{s,t\} = \emptyset$.
In case (a), refer to \cref{fig:fan-IncidentToS}. We set $\pi=\langle s,a_1,\dots,a_\ell,b_1,\dots,b_r,t\rangle$ (that is, we place $P_\ell$ before $P_r$), $\sigma(e)=1$ for each edge $e \neq a_\ell t$, and $\sigma(a_\ell t) = 2$.
In case (b), refer to \cref{fig:fan-IncidentToT}. We set $\pi=\langle s,b_1,\dots,b_r,a_1,\dots,a_\ell,t\rangle$ (that is, we place $P_\ell$ after~$P_r$), $\sigma(e)=1$ for each edge $e \neq b_r t$, and $\sigma(b_r t) = 2$. 
In case (c), we can set $\pi$ and $\sigma$ as in any of case (a) and (b).
In all three cases, all outer edges (including $uv$) are consecutive, except for one edge $e$ incident to $t$; also, all edges (including those incident to $s$) are assigned to the same page, except for $e$, which is assigned to a second page.%
\end{proof}

\newcommand{\good}{primary\xspace}
\newcommand{\app}{appendage\xspace}

\noindent The next definition allows us to split an $st$-outerpath into two simpler~graphs. The \emph{extreme faces} of an $st$-outerpath $G$ are the two faces that correspond to the vertices of $\overline{G}$ having degree one.

\begin{definition}\label{def:good}
An $st$-outerpath $G$ is \emph{\good} if and only if the path forming $\overline{G}$ has one extreme face incident to $s$.
\end{definition}

Let $G'$ be an $st$-outerpath and refer to \cref{fig:good}. Consider the subgraph $F_s$ of $G'$ induced by $s$ and its neighbors, note that this is an $sw$-fan. Assuming $F_s \neq G'$,  and since $G'$ is an outerpath, one or two edges on the outer face of $F_s$ are separation pairs for $G'$. In the former case, it follows that $G'$ is \good, since $\overline{G'}$ is a path having one face of $F_s$ as its extreme face. In the latter case, since $G'$ has a single source $s$ and a single sink $t$, (at least) one of the two separation pairs splits $G'$ into a one-sided $uv$-outerpath $H_1$ (for some vertices $u,v$ of $F_s$) and into a \good $st$-outerpath $G$. We will call $H_1$, the \emph{\app at $uv$} of $G'$. %

\begin{figure}[tb!]
    \centering
    \subfigure[]{
    \includegraphics[page=1]{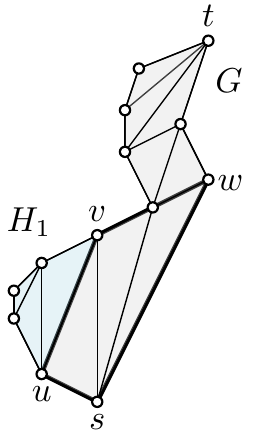}\label{fig:good}
    }\hfil
     \subfigure[]{
    \includegraphics[page=2]{figs/nice-new.pdf}\label{fig:stfandec}
    }\hfil
     \subfigure[]{
    \includegraphics[page=3]{figs/nice-new.pdf}\label{fig:stgoodouter}
    }
    \caption{(a) Decomposing $G'$ into an \app $H_1$ (light blue) and a \good $st$-outerpath $G$ (light gray). (b) An $st$-fan decomposition of $G$ with differently colored fans, and fat edges $e_i$.  (c) A \ube of $G$ for the proof of \cref{lem:good-st-outerpath}.}
\end{figure}

Let $G$ be an $st$-outerpath (not necessarily \good). Consider a subgraph $F$ of $G$ that is an $xy$-fan (for some vertices $x,y$ of $G$). Let $\langle f_1,\dots,f_h \rangle$ be the ordered list of faces forming the path $\overline{G}$. Note that $F$ is the subgraph of $G$ formed by a subset of  faces that are consecutive in the path $\langle f_1,\dots,f_h \rangle$. Let $f_i$ be the face of $F$ with the highest index. We say that $F$ is \emph{incrementally maximal} if $i=h$ or $F \cup f_{i+1}$ is not an $xy$-fan. 
We state another key definition; refer to \cref{fig:stfandec}.
\begin{definition}\label{def:st-fandec}An \emph{$st$-fan decomposition} of an $st$-outerpath $G$ is a sequence of $s_i t_i$-fans $F_i \subseteq G$, with $i=1,\dots,k$, such that: 
(i) $F_i$ is incrementally maximal;  
(ii) For any $1 \le i < j \le k$, $F_i$ and $F_j$ do not share any edge if $j>i+1$, while $F_i$ and $F_{i+1}$ share a single edge, which we denote by $e_i$; 
(iii) $s_1=s$; 
(iv) the tail of $e_i$ is $s_{i+1}$; 
(v) edge $e_i \neq s_it_i$; and
(vi) $\bigcup_{i=1}^k F_i=G$.
\end{definition}

We next show that \good $st$-outerpaths always have $st$-fan decompositions.

\begin{restatable}[$\star$]{lemma}{fandecunique}\label{basic:fandec-unique}
Every \good $st$-outerpath $G$ admits an $st$-fan decomposition.
\end{restatable}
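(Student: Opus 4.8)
The plan is to prove this by induction on the number of faces of the path $\overline{G}$, leveraging the canonical decomposition implicit in the discussion preceding \cref{def:st-fandec}. The key structural insight is that a \good $st$-outerpath has an extreme face incident to $s$, so the fans of the decomposition can be ``peeled off'' starting from the source side. Concretely, I would begin by considering the subgraph $F_s$ of $G$ induced by $s$ and its neighbors, which (as noted in the text) is an $sw$-fan. I would take $F_1$ to be the incrementally maximal $st_1$-fan extending $F_s$, i.e., the fan obtained by absorbing consecutive faces $f_1, f_2, \dots$ of the path $\overline{G}$ as long as the result remains a fan with apex $s$. This sets $s_1 = s$, satisfying property (iii).

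Next I would argue that removing $F_1$ from $G$ leaves a well-behaved remainder. Let $f_i$ be the highest-indexed face of $F_1$. By incremental maximality, either $i = h$ (in which case $F_1 = G$ and the decomposition is the single fan, trivially satisfying all properties), or $F_1 \cup f_{i+1}$ is not a fan with apex $s_1$. In the latter case, the shared edge $e_1$ between $f_i$ and $f_{i+1}$ is a separation pair. The crucial claims to verify are: (a) the tail of $e_1$ becomes the source $s_2$ of the next fan (property (iv)); (b) the remaining graph $G' = \bigcup_{j \ge i+1} f_j$ is again a \good $st$-outerpath with extreme face incident to $s_2$, so the induction hypothesis applies; and (c) the edge $e_1 \neq s_1 t_1$ (property (v)), which holds because if $e_1$ were the ``spine'' edge $s_1 t_1$ of the fan, then $f_{i+1}$ would extend the fan rather than break it, contradicting incremental maximality. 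Applying induction to $G'$ yields fans $F_2, \dots, F_k$, and prepending $F_1$ gives the full sequence; properties (i), (ii), and (vi) follow from the incrementally-maximal choice and the fact that consecutive fans share exactly the separation edge $e_i$ while non-consecutive fans are edge-disjoint.

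The main obstacle I anticipate is establishing claim (b): that the remainder $G'$ inherits the \good property, i.e., that its extreme face is again incident to its new source $s_2$. This requires a careful analysis of the bimodality and upward-planarity constraints at the vertex $s_2$ (the tail of $e_1$), using the fact that $G$ has a single source $s$ and single sink $t$. Intuitively, because $s_2$ must be the unique source of the fan $F_2$, every edge of $F_2$ incident to $s_2$ must be outgoing, and the face of $G'$ at the ``$s_2$ end'' of the path $\overline{G'}$ must be a fan face with apex $s_2$. I would formalize this by examining the two outer $st$-paths of $G$ and tracking which vertex plays the apex role as faces are consumed. A secondary subtlety is confirming that the shared edge $e_i$ is consistently the \emph{tail} side (property (iv)) rather than the head side; this is forced by the requirement that $F_{i+1}$ have $s_{i+1}$ as its single source, together with the single-source/single-sink structure of $G$.
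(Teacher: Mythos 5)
Your construction is the same greedy peeling that the paper itself uses (grow a fan with apex $s$ face by face until it is incrementally maximal, then continue), but your inductive packaging creates a proof obligation that you flag as ``the main obstacle'' and never discharge, and it is exactly the hard part: your claim (b), that the remainder $G' = f_{i+1} \cup \dots \cup f_h$ is again an $st$-outerpath, i.e.\ a DAG with a single source \emph{and a single sink}, and that it is \good. The single-source half is easy: $G' \cap F_1$ consists only of the two endpoints of $e_1$ (by outerplanarity), the head of $e_1$ has an incoming edge in $G'$, and any other source of $G'$ would be a source of $G$ distinct from $s$; being \good is then immediate because the first face of $G'$ contains $e_1$ and hence its tail. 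The real difficulty, which your outline (bimodality at $s_2$, ``tracking which vertex plays the apex role'') does not touch, is the single-sink half: a priori the head $v$ of $e_1$ could be a sink of $G'$ while the sink $t$ of $G$ also lies in $G'$, giving $G'$ two sinks and invalidating the inductive hypothesis. Ruling this out is a genuine argument, not a formality: if $v$ were a sink of $G'$ with $v \neq t$, then all outgoing edges of $v$ lie in $F_1$; writing $F_1$ as the fan with apex $s$ over the path $x_1,\dots,x_m$ and distinguishing $e_1 = sx_m$ from $e_1 = x_{m-1}x_m$, one finds either that $F_1 \cup f_{i+1}$ would still be a fan (contradicting incremental maximality of $F_1$) or that the sink $t_1$ of $F_1$ has all of its edges inside $F_1$ and incoming, making $t_1$ a second sink of $G$ --- a contradiction. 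Without an argument of this kind your induction does not get off the ground.

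It is worth noting how the paper sidesteps this obligation entirely: it runs the same greedy sweep over all of $G$ in one pass, producing $F_1,\dots,F_k$, and then verifies the six properties of \cref{def:st-fandec} directly against $G$ rather than against intermediate remainders. There, property (iv) holds because a violation would force a second source in $G$, and property (v) follows from incremental maximality (essentially your claim (c), whose justification is correct in spirit). So the global verification only ever invokes the hypotheses on $G$ itself, whereas your recursion needs the remainder to inherit them; that inheritance is in fact true, but proving it is precisely what is missing from your proposal.
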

\begin{proof}[Sketch]
By \cref{def:good}, one extreme face of $\overline{G}$ is incident to $s$; we denote such face by $f_1$, and the other extreme face of $\overline{G}$ by $f_h$. We construct the $st$-fan decomposition as follows. We initialize $F_1=f_1$ and we parse the faces of $G$ in the order defined by $\overline{G}$ from $f_1$ to $f_h$. Let $F_i$ be the current $s_it_i$-fan and let $f_j$ be the last visited face (for some $1 < j < h$). If $F_i \cup f_{j+1}$ is an $s_i t_i$-fan, we set $F_i = F_i \cup f_{j+1}$, otherwise we finalize $F_i$ and initialize $F_{i+1}=f_j$. 
\end{proof}
\newcommand{\fandecuniqueProof}{
\begin{proof}
By \cref{def:good}, one extreme face of $\overline{G}$ is incident to $s$; we denote such face by $f_1$, and the other extreme face of $\overline{G}$ by $f_h$. We construct the $st$-fan decomposition as follows. We initialize $F_1=f_1$ and we parse the faces of $G$ in the order defined by $\overline{G}$ from $f_1$ to $f_h$. Let $F_i$ be the current $s_it_i$-fan and let $f_j$ be the last visited face (for some $1 < j < h$). If $F_i \cup f_{j+1}$ is an $s_i t_i$-fan, we set $F_i = F_i \cup f_{j+1}$, otherwise we finalize $F_i$ and set $F_{i+1}=f_j$. 

We now prove that the computed set of fans, denoted by $F_1,F_2,\dots,F_k$, is indeed an $st$-fan decomposition of $G$. If $k=1$, one easily verifies that all conditions of \cref{def:st-fandec} are satisfied. Suppose now that $k>1$. To prove condition (i), observe that for a fan $F_i$ to be not incrementally maximal, $F_{i+1}$ (if $i<k$) must contain a face $f$ such that $F_i \cup f$ is an $s_i t_i$-fan. However, $F_{i+1}$ cannot contain $f$ by construction, since we iteratively expanded $F_i$ towards $f_h$ until it was no further possible. Condition (ii) is satisfied because we parse the faces of $G$ one by one and never assign the same face to two fans. Condition (iii) is satisfied because the first face $f_1$ considered is incident to $s$. Condition (iv) is satisfied because $F_i$ and $F_{i+1}$ share edge $e_i$ and if the tail of $e_i$ were not $s_{i+1}$, then $G$ would contain another source distinct from $s$. Concerning condition (v), let $f$ be the face of $F_{i+1}$ incident to $e_i$. If $e_i=s_it_i$, by (iv) it holds $s_i=s_{i+1}$, and therefore $F_i \cup f$ would still be an $s_it_i$-fan, which is not possible by construction. Condition (vi) is satisfied unless there exists some face $f$ that, in $\overline{G}$, comes before $f_1$ and hence is not parsed by our construction. However this is not possible because $f_1$ and $f_h$ are the two extreme faces of $\overline{G}$.
\end{proof}}

\begin{restatable}[$\star$]{lemma}{lemSharededge}\label{le:sharededge}
Let $G$ be a \good $st$-outerpath and let $F_1,F_2,\dots,F_k$ be an $st$-fan decomposition of $G$. Any two fans $F_i$ and $F_{i+1}$ are such that if $F_{i+1}$ is not one-sided, then $e_i=s_{i+1}t_{i}$.
\end{restatable}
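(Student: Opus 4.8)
The plan is to prove the contrapositive: assuming $e_i\neq s_{i+1}t_i$, I will show that $F_{i+1}$ is one-sided. Throughout I write $q_0,\dots,q_m$ for the rim of the $s_it_i$-fan $F_i$ (in the order induced by $\overline{G}$) and $t_i=q_p$ for its unique sink.

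First I would fix the combinatorial role of the shared edge $e_i$ in both fans. By the incremental maximality of $F_i$ (condition (i) of \cref{def:st-fandec}), $e_i$ cannot be a spoke of $F_i$: otherwise the face of $F_{i+1}$ incident to $e_i$ would extend $F_i$ into a larger $s_it_i$-fan. Hence $e_i$ is the extreme rim edge $q_{m-1}q_m$ of $F_i$. Since $s_{i+1}$ is the tail of $e_i$ (condition (iv)) and the common endpoint of the spokes of $F_{i+1}$, the edge $e_i$ is the extreme spoke $s_{i+1}w$ of $F_{i+1}$, where $w$ denotes the head of $e_i$, which is the first rim vertex of $F_{i+1}$. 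In any $st$-fan the rim is unimodal (it ascends to the sink and then descends), so a short case analysis on $p$ shows $e_i=s_{i+1}t_i$ precisely when $p\in\{m-1,m\}$. Thus the assumption $e_i\neq s_{i+1}t_i$ forces $p\le m-2$, $s_{i+1}=q_m$, $w=q_{m-1}$, and yields a directed path $w=q_{m-1}\to q_{m-2}\to\dots\to q_p=t_i$ inside $F_i$; in particular $w\rightsquigarrow t_i$, and also $s_{i+1}\to w\rightsquigarrow t_i$.

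Next I would exploit that $\{s_{i+1},w\}$ is a separation pair. Because the faces incident to any vertex form a contiguous interval of the dual path and $F_i,F_{i+1}$ share only $e_i$, the only vertices common to $B:=F_1\cup\dots\cup F_i$ and $H:=F_{i+1}\cup\dots\cup F_k$ are $s_{i+1}$ and $w$. Two reachability facts then drive the proof, both using acyclicity and that $G$ is an $st$-DAG (so every vertex reaches the unique sink $t$). (1) From $s_{i+1}\to w\rightsquigarrow t_i$, vertex $t_i$ reaches neither $s_{i+1}$ nor $w$; hence a path $t_i\rightsquigarrow t$ can never meet the separator, so it stays inside $B$ and $t$ lies in the interior of $B$. (2) All edges of $s_{i+1}$ lying in $H$ (the spokes of $F_{i+1}$ and $e_i$) point away from it, and $s_{i+1}$ belongs to no later fan, so $s_{i+1}$ is a source of $H$ and is unreachable within $H$.

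Finally, suppose for contradiction that $F_{i+1}$ is not one-sided. Then its sink $t_{i+1}$ is an interior rim vertex, so from the endpoint $w$ the rim of $F_{i+1}$ ascends to $t_{i+1}$, giving $w\rightsquigarrow t_{i+1}$. Now consider a path $t_{i+1}\rightsquigarrow t$; since $t$ lies in the interior of $B$, this path must leave $H$ through the separator, and by fact (2) it cannot do so through $s_{i+1}$, so it reaches $w$, i.e.\ $t_{i+1}\rightsquigarrow w$. Together with $w\rightsquigarrow t_{i+1}$ this produces a directed cycle, contradicting acyclicity. Therefore $F_{i+1}$ is one-sided, establishing the contrapositive and hence the lemma.

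The main obstacle is the middle step: turning the geometric intuition (that a long descending run at the end of $F_i$ makes the whole continuation $H$ hang below $t_i$) into the rigorous statements that the global sink $t$ lies in $B$ and that $w$ is the unique directed gateway from $H$ back to $B$. This hinges on the separation-pair property, obtained from the contiguity of each vertex's faces along $\overline{G}$, together with careful bookkeeping of which edges at $s_{i+1}$ and $w$ enter $B$ versus $H$; the unimodality of the fan rims and the single-sink property of $st$-DAGs are exactly what make the two paths out of $w$ available and force the cycle.
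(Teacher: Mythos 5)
Your overall strategy---proving the contrapositive via the separator $\{s_{i+1},w\}$, reachability to the unique global sink, and a closing acyclicity contradiction---is sound and genuinely different from the paper's proof, which argues directly: the prefix $G_{i+1}=F_1\cup\dots\cup F_{i+1}$ has $t_{i+1}$ as its unique sink, hence there is a directed path from $t_i$ to $t_{i+1}$ in $G_{i+1}$, which forces $t_i$ to lie in $F_{i+1}$ and therefore to be the head of $e_i$. However, your first step contains a genuine gap. You claim that incremental maximality prevents $e_i$ from being a spoke of $F_i$, ``otherwise the face of $F_{i+1}$ incident to $e_i$ would extend $F_i$ into a larger $s_it_i$-fan.'' This does not follow. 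Suppose $e_i=s_iq_m$ is the extreme spoke and $f'=s_iq_mx$ is the face of $F_{i+1}$ incident to it. Then $F_i\cup f'$ is an $s_it_i$-fan only when the third edge is oriented $x\to q_m$; if it is oriented $q_m\to x$, then $x$ is a second sink of $F_i\cup f'$, so $F_i\cup f'$ is not an $s_it_i$-fan at all (fans are single-source, single-sink DAGs), and $F_i$ is incrementally maximal even though $e_i$ is a spoke. Condition (v) of \cref{def:st-fandec} excludes only the spoke $s_it_i$, not the others, so the definition does not rule this case out either; eliminating it genuinely requires the global single-sink reasoning you develop only afterwards. As written, your derivation ($s_{i+1}=q_m$, $w=q_{m-1}$, $p\le m-2$) covers only the rim-edge case, and the spoke case is left unproved.

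The good news is that the gap is local and repairable with machinery you already have: if $e_i=s_iq_m$ is a spoke, then $s_{i+1}=s_i$ and $w=q_m$; condition (v) still gives $w\neq t_i$, so $s_{i+1}\to w\rightsquigarrow t_i$ along the rim of $F_i$, and $w$ is again the extreme rim vertex of $F_{i+1}$. Hence your facts (1) and (2) and the final cycle argument apply verbatim and yield the same contradiction. The correct structure is therefore to drop the (unjustified) exclusion of the spoke case and run the separator argument uniformly over both possible positions of $e_i$. A smaller inaccuracy in the same spirit: in fact (2) you assert that $s_{i+1}$ ``belongs to no later fan,'' which need not hold (consecutive fans may share their apex, which is exactly the spoke situation above); the conclusion that $s_{i+1}$ is a source of $H$ survives because whenever $s_{i+1}$ lies in a later fan it does so as that fan's source.
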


\newcommand{\lemSharededgeProof}{\begin{proof}
Let $G_{j}$ be the subgraph of $G$ induced by $F_1,F_2,\dots,F_{j}$, with $j=1,\dots,k$. 
Since $G$ is an $st$-outerpath and $F_1$ contains $s$ by \cref{def:st-fandec}, vertex $t_{i+1}$ is the sink of $G_{i+1}$. Also, since $F_{i+1}$ is not one-sided, it holds $t_i\neq t_{i+1}$. Hence, $G_{i+1}$ contains a directed path from $t_{i}$ to $t_{i+1}$. It follows that $e_{i}={s_{i+1} t_{i}}$, otherwise $s_{i+1}$ would not belong to $G_i$ and $G$ would contain two sources.
\end{proof}}

\begin{lemma}\label{lem:good-st-outerpath}
Let $G$ be a \good $st$-outerpath and let $F_1,F_2,\dots,F_k$ be an $st$-fan decomposition of $G$. Also, let $e \neq s_kt_k$ be an outer edge of $F_k$. Then, $G$ admits a $4$-page $e$-consecutive \ube $\langle \pi, \sigma \rangle$. 
\end{lemma}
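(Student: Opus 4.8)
The plan is to prove the lemma by induction on $k$, the number of fans in the $st$-fan decomposition, building the \ube of $G$ by successively merging in the fans $F_1, F_2, \dots, F_k$ while carefully maintaining the consecutivity and page-budget invariants. The base case $k=1$ is immediate: here $G=F_1$ is a single $st$-fan, and \cref{lem:s-fan} already gives a $2$-page $e$-consecutive \ube for the prescribed outer edge $e \neq s_1 t_1 = st$, which is well within the $4$-page budget.

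For the inductive step I would process the fans in reverse, peeling off $F_k$ and applying the inductive hypothesis to the \good $st$-outerpath $G' = \bigcup_{i=1}^{k-1} F_i$ formed by the first $k-1$ fans. The key observation is that $F_{k-1}$ and $F_k$ share exactly the single edge $e_{k-1}$, whose endpoints are $s_k$ (its tail, by \cref{def:st-fandec}(iv)) and $t_{k-1}$. Thus I would first invoke the inductive hypothesis on $G'$ to obtain a $4$-page $e_{k-1}$-consecutive \ube $\langle \pi', \sigma' \rangle$ of $G'$: here the outer edge we request to be consecutive is precisely the shared edge $e_{k-1}$, which is legitimate since $e_{k-1} \neq s_{k-1} t_{k-1}$ by \cref{def:st-fandec}(v). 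Simultaneously, I would apply \cref{lem:s-fan} to the fan $F_k$ (an $s_k t_k$-fan) to get a $2$-page \ube $\langle \pi'', \sigma'' \rangle$ that is $e$-consecutive for the target outer edge $e \neq s_k t_k$, and that additionally keeps its edges incident to $s_k$ on a single page and those incident to $t_k$ on at most two pages. Because $e_{k-1}$ is consecutive in $\pi'$ and its endpoints $s_k, t_{k-1}$ are the extreme vertices of $\pi''$ (here I use \cref{le:sharededge}: if $F_k$ is not one-sided then $e_{k-1} = s_k t_{k-1}$ so $t_{k-1}$ is genuinely the top of $F_k$, while the one-sided case is handled by \cref{basic:st-fan-drawing} giving a $1$-page embedding), the merge operation from the preliminaries produces an ordering $\pi$ over $V(G') \cup V(F_k)$ that extends both $\pi'$ and $\pi''$.

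The crux is the page assignment: I must combine the $4$ pages used by $\sigma'$ and the $2$ pages used by $\sigma''$ without exceeding $4$ pages and without introducing crossings. The guiding idea is that the $uv$-consecutivity invariant (\cref{def:consecutive}) severely constrains how pages are consumed near the interface vertices $s$ and $t$: edges incident to $s$ occupy only one page and edges incident to $t$ occupy at most two. Since the vertices of $F_k$ are inserted in $\pi$ into the gap between the consecutive pair $s_k, t_{k-1}$, any edge of $F_k$ nests strictly inside the span of $e_{k-1}$ and therefore can only cross edges of $G'$ that also span this gap; by $e_{k-1}$-consecutivity of $\langle \pi', \sigma' \rangle$ no edge of $G'$ other than $e_{k-1}$ itself spans that gap, so crossings between $F_k$-edges and $G'$-edges are essentially avoided and I am free to recolor the two pages of $F_k$ to reuse pages already present in $\sigma'$. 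The main obstacle I anticipate is a careful bookkeeping of which of the four pages remains "free" at the interface so that the two pages of $F_k$ can be injected there, and ensuring that after the merge the edges incident to the global sink $t = t_k$ still lie on at most two pages while those incident to $s = s_1$ stay on one page, so that the resulting embedding is itself $e$-consecutive and the induction carries through. I would resolve this by a short case analysis on whether $F_k$ is one-sided (saving a page) and on the position of $e$ within $P_\ell$ or $P_r$ of $F_k$, mirroring the three cases in the proof of \cref{lem:s-fan}.
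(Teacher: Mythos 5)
Your overall skeleton matches the paper's proof: induction on $k$, base case via \cref{lem:s-fan}, peeling off $F_k$, and invoking the inductive hypothesis on $G'=\bigcup_{i=1}^{k-1}F_i$ with the consecutivity request placed on $e_{k-1}$ (legitimate by condition (v) of \cref{def:st-fandec}). Your handling of the one-sided case is also correct: there the shared vertices are $s_k$ and $t_k$, which really are the first and last vertices of the $1$-page embedding from \cref{basic:st-fan-drawing}, so the merge operation applies. The genuine gap is in the non-one-sided case. You assert that, since $e_{k-1}=s_kt_{k-1}$ by \cref{le:sharededge}, the endpoints of $e_{k-1}$ are the extreme vertices of $\pi''$, i.e., that ``$t_{k-1}$ is genuinely the top of $F_k$.'' This is false: the last vertex of any upward ordering of $F_k$ must be its sink $t_k$, and precisely because $F_k$ is not one-sided we have $t_k\neq t_{k-1}$ (this is noted in the proof of \cref{le:sharededge}); so $t_{k-1}$ is an \emph{interior} vertex of $\pi''$. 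Worse, the plan this claim supports --- inserting all of $F_k$ into the gap between the consecutive pair $s_k,t_{k-1}$ so that $F_k$ nests inside the span of $e_{k-1}$ --- is not merely unjustified but impossible: $F_k$ is an $st$-DAG containing $t_{k-1}$, so there is a directed path from $t_{k-1}$ to its sink $t_k$, which forces $t_k$ to appear \emph{after} $t_{k-1}$ in every topological (hence every upward) ordering. No \ube of $G$ can place $V(F_k)$ strictly between $s_k$ and $t_{k-1}$.

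This collapses your page accounting as well, since it rested on the claim that no edge of $G'$ other than $e_{k-1}$ interacts with $F_k$. The correct construction (the paper's) goes the other way around: since $t_{k-1}$ is the sink of $G'$ it is the last vertex of $\pi'$, and $s_k$ is second-to-last by $e_{k-1}$-consecutivity; one therefore sets $\pi=\pi'_{s_k^-}\circ\pi''$, appending $F_k$'s ordering at the end and letting $t_{k-1}$ sit at its natural position \emph{inside} $\pi''$. In this ordering the edges of $G'$ incident to $t_{k-1}$ genuinely do cross edges of $F_k$, and this is exactly where property (iii) of \cref{def:consecutive} earns its keep: those edges occupy at most two of the four pages, so the two pages of $F_k$'s embedding from \cref{lem:s-fan} can be taken to be the remaining two; the result is again $e$-consecutive because the edges at the new sink $t_k$ are $F_k$-edges lying on those two pages, while the edges at $s=s_1$ are untouched. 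In short, the invariant to carry through the induction is not ``no interaction across $e_{k-1}$'' but ``all interaction is confined to edges at the old sink, which live on at most two pages''; without that correction the inductive step fails whenever $F_k$ is not one-sided.
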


\begin{proof}
We construct a $4$-page $e$-consecutive \ube of $G$ by induction on $k$;  see also \cref{fig:stgoodouter}, which shows a \ube of the \good $st$-outerpath in \cref{fig:stfandec}. 

Suppose $k=1$. Then $G$ consists of the single $s_1 t_1$-fan and a $2$-page $e$-consecutive \ube of $G$ exists by \cref{lem:s-fan}.

Suppose now $k>1$. Let $G_{i}$ be the subgraph of $G$ induced by $F_1 \cup F_2 \cup \dots \cup F_{i}$, for each $1 \le i \le k$. Recall that $e_{i}$ is the  edge shared by $F_{i}$ and $F_{i+1}$ and that the tail of $e_{i}$ coincides with $s_{i+1}$. Let $\langle \pi, \sigma \rangle$ be a $4$-page $e_{k-1}$-consecutive \ube of $G_{k-1}$, which exists by induction since $e_{k-1} \neq s_{k-1}t_{k-1}$  by condition $(v)$ of \cref{def:st-fandec}, and distinguish whether $F_k$ is one-sided or not. 

If $F_k$ is one-sided, it admits a $1$-page $e$-consecutive \ube $\langle \pi',\sigma'\rangle$ by \cref{basic:st-fan-drawing}. In particular, $e \neq e_{k-1}$, since $e_{k-1}$ is not an outer edge of $G_k$. Also, $e_{k-1}=s_k t_k$, and hence the two vertices shared by $\pi$ and $\pi'$ are $s_{k},t_{k}$, which are consecutive in $\pi$ and are the first and the last vertex of $\pi'$. Then we define a $4$-page $e$-consecutive \ube $\langle \pi^*,\sigma^*\rangle$ of $G_{k}$ as follows. The ordering  $\pi^*$ is obtained by merging $\pi'$ and $\pi$. Since $e_{k-1}=s_k t_k$ is uncrossed (over all pages of $\sigma$), for every edge $e$ of $F_k$, we set $\sigma^*(e) = \sigma(e_{k-1})$, while for every other edge $e$ we set $\sigma^*(e) = \sigma(e)$.

If $F_k$ is not one-sided, by \cref{lem:s-fan}, $F_k$ admits a $2$-page $e$-consecutive \ube $\langle \pi',\sigma'\rangle$. Then we obtain a $4$-page $e$-consecutive \ube $\langle \pi^*,\sigma^*\rangle$  of $G_{k}$ as follows.  
By \cref{le:sharededge}, it holds $e_{k-1}={s_k t_{k-1}}$, and thus $s_{k}$ and $t_{k-1}$ are the second-to-last and the last vertex in $\pi$, respectively; also, $s_k$ is the first vertex in $\pi'$.
We set $\pi^* = \pi_{{t_k}^-} \circ \pi'$. 
Since $\langle \pi, \sigma \rangle$ is a $e_{k-1}$-consecutive \ube of $G_{k-1}$, by \cref{def:consecutive} we know that the edges incident to $t_{k-1}$ can use up to two different pages. On the other hand, these are the only edges that can be crossed by an edge of $F_k$ assigned to one of these two pages. Therefore, in \cref{lem:s-fan}, we can assume $\sigma'$ uses the two pages not used by the edges incident to $t_{k-1}$, and set $\sigma^*(e)=\sigma'(e)$ for every edge $e$ of $F_k$ and $\sigma^*(e) = \sigma(e)$ for every other edge. %
\end{proof}

Next we show how to reinsert the \app $H_1$ (\cref{lem:st-outerpath}). To this aim, we first provide a more general tool (\cref{lem:one-insertion}) that will be useful also in \cref{sse:outerplanar}.

\begin{restatable}[$\star$]{lemma}{lemOneIsertion}\label{lem:one-insertion}Let $G=(V,E)$ be a \good $st$-outerpath  with a $4$-page $e$-consecutive \ube of $G$ obtained by using \cref{lem:good-st-outerpath}, for some outer edge $e$ of $G$. For $i=1,\dots,h$, let $H_i=(V_i,E_i)$ be a one-sided $u_iv_i$-outerplanar graph such that $E \cap E_i = \{u_i,v_i\}$. Then $G'=G \cup H_1 \cup \dots \cup H_h$ admits  a $4$-page $e$-consecutive \ube, as long as $e \neq e_i$ ($i=1,\dots,h$). 
\end{restatable}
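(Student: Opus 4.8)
The plan is to insert the appendages $H_1,\dots,H_h$ into the given $4$-page $e$-consecutive \ube $\langle\pi,\sigma\rangle$ of $G$ simultaneously, each one occupying the interval of its attachment edge and being routed on the page of that edge, so that four pages always suffice. For each $i$ I would first record the only two properties of $H_i$ that are used: being one-sided, by \cref{basic:st-fan-drawing} it admits a $1$-page \ube $\langle\pi_i,\sigma_i\rangle$ in which $u_i$ is the first vertex, $v_i$ is the last, and every edge lies on a single page; and, as $H_i$ meets $G$ only in $u_i,v_i$, all its remaining vertices are fresh.

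The heart of the construction is the insertion step, which rests on the claim that the endpoints $u_i,v_i$ of each attachment edge are consecutive in $\pi$. Granting this, the gap between them is empty, so I merge $\pi_i$ into $\pi$ (placing the internal vertices of $H_i$ strictly between $u_i$ and $v_i$) and assign every edge of $H_i$ to the page $\sigma(u_iv_i)$. This creates no crossing: the edges of $H_i$ stay inside $[u_i,v_i]$ and are mutually crossing-free, while each edge of $G$ has no endpoint strictly between $u_i$ and $v_i$ (these were consecutive), hence it either lies outside $[u_i,v_i]$ or spans over the entire inserted block, nesting rather than interleaving with the edges of $H_i$; in particular $u_iv_i$ itself nests over $H_i$. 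Since distinct attachment edges are distinct consecutive pairs, their interiors are pairwise-disjoint empty gaps, so the $h$ insertions are independent and jointly yield a $4$-page \ube of $G'=G\cup H_1\cup\dots\cup H_h$.

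I expect the consecutiveness claim to be the main obstacle, and I would establish it by inspecting the construction behind \cref{lem:good-st-outerpath}. The aim is to show that the only outer edges of $G$ that fail to be consecutive in $\langle\pi,\sigma\rangle$ are inner edges of $G$ in disguise or edges incident to the global sink. A one-sided fan of the $st$-fan decomposition contributes, by \cref{basic:st-fan-drawing}, only consecutive outer edges; a non-one-sided fan, laid out as in \cref{lem:s-fan}, leaves non-consecutive at most the two outer edges incident to its own source and sink. The decisive point is that an internal non-one-sided fan $F_j$ shares with its two neighbours the edges $e_{j-1}=s_jt_{j-1}$ (incident to its source $s_j$, by \cref{le:sharededge}) and $e_j$ (incident to its sink $t_j$); by choosing the left/right orientation of $F_j$ in \cref{lem:s-fan} one tries to make these two shared edges coincide with the (at most) two non-consecutive edges, so that they are inner edges of $G$ and can never serve as an attachment edge $u_iv_i$. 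The residual exceptions then concentrate at the global source $s$ and sink $t$, whose incident edges lie on one, respectively at most two, pages. Carrying out this case analysis rigorously (including the interaction of orientation choices with one-sided neighbours) is the technical crux; its conclusion is that, since the lemma forbids $e$ itself as an attachment edge, every admissible $u_iv_i$ is consecutive.

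Finally I would check that $\langle\pi^*,\sigma^*\rangle$ is still $e$-consecutive in the sense of \cref{def:consecutive}. Because $e\neq u_iv_i$ for all $i$ and every insertion only fills an empty gap interior to some attachment edge, the endpoints of $e$ remain consecutive. Moreover, if some $H_i$ is rooted at $s$ or sinks into $t$, its new edges are placed on $\sigma(u_iv_i)$, which already equals the single page carrying the edges at $s$ (respectively one of the two pages carrying the edges at $t$); hence conditions (ii) and (iii) of \cref{def:consecutive} are preserved, and $\langle\pi^*,\sigma^*\rangle$ is the desired $4$-page $e$-consecutive \ube of $G'$.
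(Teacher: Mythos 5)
Your handling of the case in which $u_i$ and $v_i$ are consecutive in $\pi$ matches the paper's, but the claim on which your entire plan rests --- that every admissible attachment edge is consecutive in the ordering produced by \cref{lem:good-st-outerpath} --- is false, and the non-consecutive case is exactly where the real work of this lemma lies. Already for a single $st$-fan with paths $P_\ell=\{s,a_1,\dots,a_\ell,t\}$ and $P_r=\{s,b_1,\dots,b_r,t\}$, the ordering from \cref{lem:s-fan} is, say, $\langle s,a_1,\dots,a_\ell,b_1,\dots,b_r,t\rangle$, in which the outer edge $a_\ell t$ is not consecutive, yet it is a perfectly admissible attachment edge (it is outer and distinct from $e$); in the intended application (\cref{lem:stouterplanar}) \emph{every} outer edge of the outerpath carries an appendage. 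Your fallback --- choosing the left/right orientation of each fan so that the non-consecutive edges coincide with inner (shared) edges, with residual exceptions only at the global source and sink --- cannot be carried out: for an intermediate fan $F_j$ whose successor is not one-sided, \cref{le:sharededge} gives $e_j=s_{j+1}t_j$, and the induction in \cref{lem:good-st-outerpath} forces the embedding of $F_j$ to be $e_j$-consecutive; by \cref{lem:s-fan} this pins down the orientation (its case (b)) and makes the last edge of the \emph{other} $s_jt_j$-path non-consecutive. That edge is an outer edge of $G$ incident to the intermediate sink $t_j$, not to the global source or sink, so the exceptions neither hide among inner edges nor concentrate at $s$ and $t$; the orientation freedom you want to exploit is already consumed by the $e_j$-consecutiveness requirement. (Note also that the lemma hands you the \ube of $G$ as produced by \cref{lem:good-st-outerpath}; you are not free to re-run that construction with different choices.)

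What is missing is a mechanism for inserting $H_i$ when $u_i$ and $v_i$ are not consecutive. The paper's proof supplies it: in that case it argues that $u_iv_i$ must be the non-consecutive edge incident to the fan's sink (it cannot be the one incident to $s_j$), i.e.\ $u_iv_i=a_\ell t_j$, which in the construction of \cref{lem:s-fan} is the unique edge of $F_j$ on its second page; it then places the internal vertices of $H_i$ immediately after $u_i$ (leaving $v_i$ where it is, so the inserted block does \emph{not} sit in an empty gap between the two endpoints) and assigns every edge of $H_i$ to page $\sigma(u_iv_i)$. Correctness rests on the observation that any edge of $G$ crossing an edge of $H_i$ in this ordering must also cross $u_iv_i$, and hence lies on a different page; a similar argument disposes of crossings between distinct appendages. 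Without this (or an equivalent) treatment of the non-consecutive case, your proof does not go through.
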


\newcommand{\lemOneIsertionProof}{\begin{proof}
Let $\langle \pi, \sigma \rangle$ be the $4$-page $e$-consecutive \ube of $G$ obtained by \cref{lem:good-st-outerpath}. In particular, let $F_1,F_2,\dots,F_k$ be the $st$-fan decomposition of $G$ in input to \cref{lem:good-st-outerpath}. We first show how to compute the desired \ube for $G \cup H_1$.

Graph $H_1$ admits a $1$-page \ube $\langle \pi', \sigma' \rangle$ by \cref{basic:st-fan-drawing}. Also, the edge $u_1v_1$ belongs to some fan $F_j$ and, since $u_1v_1$ is shared with $H_1$, it must be $u_1v_1 \neq e_j$ (where $e_j$ is the edge shared by $F_j$ and $F_{j+1}$). If $u_1$ and $v_1$ are consecutive in $\pi$, then a $4$-page $e$-consecutive \ube $\langle \pi^*, \sigma^* \rangle$ of $G \cup H_1$ can be obtained by merging  $\pi$ and $\pi'$ and by setting $\sigma^*(e) = \sigma(u_1v_1)$ for every edge $e$ of $H_1$, and $\sigma^*(e) = \sigma(e)$ for every other edge. If $u_1$ and $v_1$ are not consecutive in $\pi$, let $P_\ell=\{s_j,a_1,\dots,a_\ell,t_j\}$ and $P_r=\{s_j,b_1,\dots,b_r,t_j\}$ be the $s_jt_j$-paths formed by the outer edges that lie to the left and to the right of $s_jt_j$ in $F_j$, respectively. Recall that $F_j$ has been embedded by using \cref{lem:s-fan} and hence the only outer edges of $F_j$ that are not consecutive are either $s_j b_1$ and $a_\ell t_j$ or $s_j a_1$ and $b_r t_j$. For ease of description, suppose the non-consecutive edges are $s_j b_1$ and $a_\ell t_j$, as otherwise the argument is symmetric. Note that it cannot be $u_1 = s_j$, hence we have $u_1v_1 = a_\ell t_j$. Then in the proof of \cref{lem:s-fan}, edge $u_1v_1$ is the only edge assigned to the second page of the \ube. Also, by \cref{le:sharededge}, $a_\ell$ and $b_1$ are consecutive in $\pi$, because $F_{j+1}$ (if it exists) is either one-sided or $e_j=b_r t_j$. By exploiting these properties, we construct a $4$-page \ube $\langle \pi^*,\sigma^* \rangle $ of $G \cup H_1$ as follows. We set $\pi^* = \pi_{u_1^-} \circ \pi'_{v_1^-} \circ \pi_{u_1^+}$. Note that, in this way, if an edge $e$ of $H_1$ is crossed by an edge $e'$ of $G$, then $e$ is incident to $t_j$, and consequently $e'$ also crosses $u_1v_1$. Then we set $\sigma'(e)=\sigma(e)$ for each edge $e$ of $G$, and $\sigma'(e)=\sigma(u_1v_1)$ for the remaining edges of $H_1$. This immediately implies that if an edge $e$ of $H_1$ is crossed by an edge $e'$ of $G$, then the two edges are on different pages, otherwise $e'$ and $u_1v_1$ would also cross and belong to the same page, which is not possible. 

Since it holds $u_iv_i \neq u_jv_j$ for any $1 \le i < j \le h$, by repeating the above procedure for each $H_i$, we obtain a $4$-page $e$-consecutive \ube $\langle \pi^*, \sigma^* \rangle$ of $G'$. In particular, if two edges $e$ and $e'$ cross such that $e$ and $e'$ both belong to $G$, then they also cross in $\pi$ because $\pi^*$ extends $\pi$, and hence $\sigma^*(e) \neq \sigma^*(e')$ because $\sigma^*(e)=\sigma(e)$ and $\sigma^*(e')=\sigma(e')$. If $e$ and $e'$ both belong to the same $H_i$, they do not cross because $\pi^*$ extends $\pi'$. If $e$ belongs to $G$ and $e'$ to some $H_i$, then $e$ also crosses $u_i v_i$ by construction of $\pi^*$, and again $\sigma^*(e) \neq \sigma^*(e')$ because $\sigma^*(e) \neq \sigma^*(u_iv_i)$ while $\sigma^*(e') = \sigma^*(u_iv_i)$. If $e$ and $e'$ belong to $H_i$ and $H_j$ respectively, for some $i \neq j$, then again by construction of $\pi^*$ we have that $u_iv_i$ also crosses $u_jv_j$ and $\sigma^*(e) \neq \sigma^*(e')$ because $\sigma^*(e) = \sigma^*(u_iv_i)$, $\sigma^*(e') = \sigma^*(u_jv_j)$, and $\sigma^*(u_iv_i) \neq \sigma^*(u_jv_j)$. 
\end{proof}}

\begin{restatable}[$\star$]{lemma}{lemStouterpath}\label{lem:st-outerpath}
Let $G'=G \cup H_1$ be an $st$-outerpath, such that $G$ is a \good $st$-outerpath and $H_1$ is the \app at $uv$ of $G'$. Let $F_1,F_2,\dots,F_k$ be an $st$-fan decomposition of $G$. Let $e$ be an outer edge of $G'$ that belongs to either $F_k$ or $H_1$, or to $F_1$ if $H_1=\emptyset$. Also, if $e \in F_k$, then $e \neq s_kt_k$, otherwise  \ $e \neq s_1t_1$. Then, $G'$ admits an $e$-consecutive $4$-page  \ube $\langle \pi, \sigma \rangle$.
\end{restatable}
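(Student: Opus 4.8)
The plan is to construct the desired \ube by embedding the \good part $G$ with \cref{lem:good-st-outerpath} and then reinserting the \app $H_1$ with (a variant of) \cref{lem:one-insertion}, choosing the details according to where the prescribed edge $e$ lies. Two structural observations organize the argument: the attaching edge $uv$ of $H_1$ lies on the boundary of the fan of $s$, hence on $F_1$, and it becomes an inner edge of $G'$, so every outer edge of $G'$ (in particular $e$) differs from $uv$; moreover $s$ is the apex of $F_1$, so all edges incident to $s$ lie in $F_1$, and $\{u,v\}$ cannot contain $s$, since a separating edge at $s$ would contradict the maximality of the fan of $s$.

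If $e\in F_k$ with $e\neq s_kt_k$, I would apply \cref{lem:good-st-outerpath} to $G$ for this very edge $e$, obtaining a $4$-page $e$-consecutive \ube of $G$, and then reinsert $H_1$ at $uv$ with \cref{lem:one-insertion}; as $e$ is outer in $G'$ while $uv$ is inner, the hypothesis $e\neq uv$ holds and the lemma applies directly. If instead $e\in H_1$, I would embed $G$ with \cref{lem:good-st-outerpath} for an arbitrary outer edge of $F_k$ other than $s_kt_k$, and embed the one-sided graph $H_1$ with \cref{basic:st-fan-drawing}, which makes every outer edge of $H_1$ other than $uv$ consecutive, and in particular $e$. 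I would then merge the two orderings as in \cref{lem:one-insertion}, assigning all edges of $H_1$ to the page of $uv$; crossing-freeness follows exactly as there, since any edge of $G$ crossing an edge of $H_1$ must also cross $uv$, which lies on a different page. Two points need care. To keep $e$ consecutive when $u$ and $v$ are not consecutive in the ordering of $G$, I would place the inserted block of $H_1$ next to whichever pole of $uv$ is an endpoint of $e$ (mirroring \cref{lem:one-insertion} if that pole is the one its construction separates); this is possible because $e\neq uv$ is incident to at most one pole. To preserve conditions (ii) and (iii) of \cref{def:consecutive}, I use that $s\notin\{u,v\}$ leaves the edges at $s$ untouched, and that if a pole of $uv$ equals $t$ then all edges of $H_1$ incident to $t$ reuse the page of $uv$, which is already one of the at most two pages at $t$.

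The remaining case is $H_1=\emptyset$, where $G'=G$ is \good and $e\in F_1$ with $e\neq s_1t_1$; for $k=1$ this is immediate from \cref{lem:s-fan}. For $k>1$ I would run the inductive construction of \cref{lem:good-st-outerpath} anchored at the first fan $F_1$ rather than at the last: embed $F_1$ with \cref{lem:s-fan} so that $e$ is consecutive, and then glue $F_2,\dots,F_k$ toward $t$ with the same operations used in \cref{lem:good-st-outerpath}. Each gluing either appends a fan at the $t$-end or inserts it into the gap of a shared edge $e_i$; since $e\neq e_1$ and $e_i\notin F_1$ for $i\ge 2$, the gap of $e$ is never used, so $e$ stays consecutive. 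Because $s$ is the apex of $F_1$ and \cref{lem:s-fan} places the spokes at $s$ on a single page that the later gluings do not disturb, condition (ii) is maintained, while condition (iii) holds as in \cref{lem:good-st-outerpath}.

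I expect the main obstacle to be the appendage reinsertion in the case $e\in H_1$, together with the source-anchored variant in the last case. In contrast with \cref{lem:one-insertion}, the tracked consecutive edge now belongs to $H_1$ (resp.\ to $F_1$), so the merge must be steered to keep $e$ consecutive while still respecting the asymmetric page budget of \cref{def:consecutive}, namely a single page at $s$ and at most two pages at $t$. The key fact that makes both work is that the inserted piece lives on a single page, reusing the page of the attaching edge (resp.\ keeping the spokes at $s$ on one page), so that no new page is forced at a terminal and, after choosing the side on which to insert the block, the endpoints of $e$ stay adjacent.
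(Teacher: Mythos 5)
Your first case ($e \in F_k$) coincides with the paper's proof: embed $G$ via \cref{lem:good-st-outerpath} anchored at $e$ and reinsert $H_1$ via \cref{lem:one-insertion}. For the remaining cases, however, you miss the paper's key idea, which disposes of them in three lines: reverse the orientation of every edge of $G'$, obtaining a $ts$-outerpath $G''$; in a fan decomposition of the \good part of $G''$ the prescribed edge $e$ (lying in $H_1$, or in $F_1$ when $H_1=\emptyset$) now belongs to the \emph{last} fan, so the already-settled case applies verbatim, and reversing the resulting ordering yields the desired \ube of $G'$. In place of this symmetry argument you propose two new direct constructions (a ``mirrored'' variant of \cref{lem:one-insertion}, and a run of \cref{lem:good-st-outerpath} anchored at $F_1$), and both have genuine gaps.

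For the case $H_1=\emptyset$, $e\in F_1$, $k>1$: suppose $e$ and $e_1$ are both incident to $t_1$ on opposite sides of the two-sided fan $F_1$, say $e_1=a_\ell t_1$ and $e=b_r t_1$ (this configuration is perfectly possible, with $e$ outer in $G$ and $e_1$ the edge shared with $F_2$). In \emph{any} topological ordering of a two-sided fan at least one outer edge incident to $t_1$ is non-consecutive, so you cannot keep $e$ consecutive and at the same time have what the gluing operations of \cref{lem:good-st-outerpath} require, namely $e_1$ consecutive (one-sided gluing) or $s_2=a_\ell$ immediately before $t_1$ (two-sided gluing, which clashes with $b_r$ immediately before $t_1$). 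Hence ``the same operations used in \cref{lem:good-st-outerpath}'' are simply unavailable, and you would have to invent a new insertion (placing $F_2$ inside the broken gap of $e_1$, over the vertices $b_1,\dots,b_r$) with its own crossing and page analysis, which then perturbs every subsequent gluing. For the case $e\in H_1$: when $u_1v_1=a_\ell t_j$ is non-consecutive and $e$ is the $H_1$-edge at $v_1$, your mirrored insertion (block immediately before $v_1=t_j$) destroys the invariant that makes the page assignment of \cref{lem:one-insertion} sound, namely that every edge of $G$ crossing an edge of $H_1$ also crosses $u_1v_1$. Indeed the $H_1$-edges at $u_1$ now span $b_1,\dots,b_r$ and all vertices of $F_{j+1},\dots,F_k$; edges such as $b_rt_j$, and later-fan edges with head $t_j$, cross them while \emph{sharing} the endpoint $t_j$ with $u_1v_1$, hence not crossing $u_1v_1$. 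So ``assign all of $H_1$ to the page of $u_1v_1$'' is no longer justified by that lemma's argument; one would need a global accounting of the pages used by all edges entering $t_j$ along the whole suffix of the decomposition, which your proposal does not supply.
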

\newcommand{\lemStouterpathProof}{\begin{proof}
If $e$ is an outer edge of $F_k$, since the \app $H_1$ of $G'$ is a one-sided $uv$-outerpath that only shares the edge $uv$ with $G$, the  lemma is an immediate consequence of \cref{lem:one-insertion}.
If $e$ is an outer edge of $F_1$ or $H_1$, we first reverse the orientation of the edges of $G'$ and obtain a $ts$-outerpath $G''$. In a $ts$-fan decomposition of the \good $ts$-outerpath of $G''$, which we denote by $F''_1,F''_2,\dots,F''_k$, the edge $e$ now belongs to $F''_k$ (because it belongs to $H_1$ or to $F_1$ if $H_1=\emptyset$). Thus again we can compute a $4$-page $e$-consecutive \ube $\langle \pi'', \sigma'' \rangle$. By reversing the ordering $\pi''$, we obtain a $4$-page $e$-consecutive \ube of $G'$.
\end{proof}}

\begin{figure}[t]
    \centering
    \includegraphics[page=2]{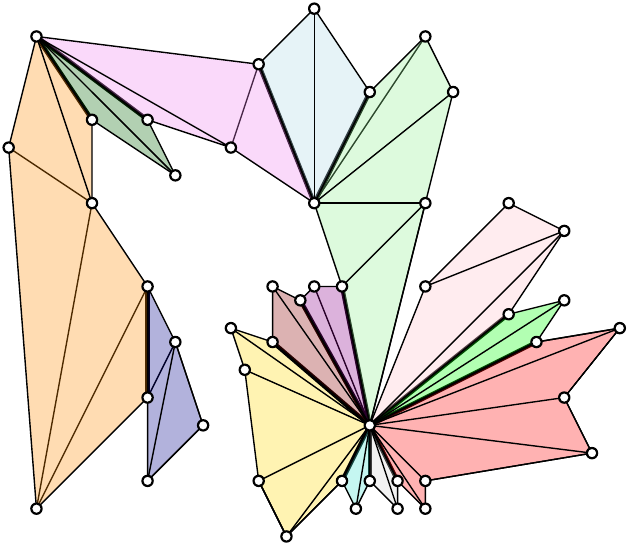}
    \caption{An $st$-outerpath decomposition of an upward outerpath; edges $e_i$ are fat ($e_1$\label{fig:st-outer-dec}}
\end{figure}

We now define a decomposition of an upward outerpath $G$; refer to \cref{fig:st-outer-dec}. Let $P \subset G$ be an $st$-outerpath, then $P$ is the subgraph of $G$ formed by a subset of consecutive faces of $\overline{G}=\langle f_1,\dots,f_h \rangle$. Let $f_j$ and $f_{j'}$ be the faces of $P$ with the smallest and highest index, respectively. Let $F^i$ be the incrementally maximal $s^i t^i$-fan of $P_i$ (assuming $\langle f_j,\dots,f_{j'} \rangle$ to be the ordered list of faces of $\overline{P_i}$). We say that $P$ is \emph{incrementally maximal} if $i=h$ or if $P \cup f_{i+1}$ is not an $st$-outerpath or if $P \cup f_{i+1}$ is still an $st$-outerpath but the edge $s^it^i$ of $F^i$ is an outer edge~of~$P$.

\begin{definition}\label{def:st-outerdec}
An \emph{$st$-outerpath decomposition} of an upward outerpath $G$ is a sequence of $s_i t_i$-outerpaths $P_i \subseteq G$, with $i=1,2,\dots,m$, such that: 
\begin{inparaenum}[(i)]
\item \label{prop:outer-maximality} $P_i$ is incrementally maximal;  
\item \label{prop:outer-share-one-edge} For any $1 \le i < j \le m$, $P_i$ and $P_{j}$ share a single edge if $j=i+1$, which we denote by $e_i$, while they do not share any edge otherwise; and 
\item \label{prop:outer-partition} $\bigcup_{i=1}^m P_i=G$.
\end{inparaenum}
\end{definition}

\begin{restatable}[$\star$]{lemma}{lemStOuterExsists}\label{lem:stouterexsists}Every upward outerpath admits an $st$-outerpath decomposition. 
\end{restatable}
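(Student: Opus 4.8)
The plan is to build an $st$-outerpath decomposition greedily, scanning the faces of $\overline{G}$ from one end to the other, in direct analogy with the construction used to prove \cref{basic:fandec-unique}. Since $G$ is an outerpath, $\overline{G}$ is a path $\langle f_1,\dots,f_h\rangle$, and every two consecutive faces $f_\ell,f_{\ell+1}$ share exactly one inner edge $g_\ell$, namely the chord dual to the edge $f_\ell f_{\ell+1}$ of $\overline{G}$. The decomposition will cut $G$ along a suitable subset of these chords, so that each resulting piece $P_i$ is a maximal run of consecutive faces that still forms an $st$-outerpath whose top fan has an inner base edge.

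First I would initialize $P_1=f_1$ and scan the faces in the order given by $\overline{G}$. Maintaining the current piece $P_i$, whose last visited face is $f_{j'}$, I test the next face $f_{j'+1}$: if $P_i\cup f_{j'+1}$ is still an $st$-outerpath and the base edge $s^it^i$ of its top fan $F^i$ (the incrementally maximal fan at the $f_{j'+1}$-end) remains an inner edge, then I extend $P_i:=P_i\cup f_{j'+1}$; otherwise I finalize $P_i$, record the shared chord $e_i:=g_{j'}$, and start the next piece $P_{i+1}$ at $f_{j'+1}$, so that $P_i$ and $P_{i+1}$ share precisely the edge $e_i$. I repeat until $f_h$ has been consumed, obtaining a sequence $P_1,\dots,P_m$. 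Each $P_i$ is incrementally maximal in the sense of \cref{def:st-outerdec}, because the scan stopped exactly when one of the two proper stopping clauses first held (the remaining clause, $i=h$, covering the final piece). The partition condition holds because every face is assigned to some piece, and the sharing condition holds because consecutive pieces occupy adjacent face-ranges of $\overline{G}$ and hence share exactly the chord between them, whereas non-consecutive pieces occupy non-adjacent face-ranges and thus share no edge at all.

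The main obstacle is to certify that each finalized piece $P_i$ is genuinely an $st$-outerpath, i.e.\ that it has a single source $s_i$ and a single sink $t_i$, and that the restart at $f_{j'+1}$ again begins a valid $st$-outerpath. When extension fails because $P_i\cup f_{j'+1}$ ceases to be an $st$-outerpath, the new face $f_{j'+1}$ must contribute a second source or sink to that region; using the upward outerplanar embedding and the bimodality of its vertices, I would argue that this new extremal vertex becomes the source or the sink of $P_{i+1}$, so that the chord $e_i$ separates a well-defined $s_it_i$-outerpath from the remainder. When extension fails instead because the base edge $s^it^i$ of the top fan turns outer, the same chord $e_i$ still cuts off an $st$-outerpath, and $P_{i+1}$ inherits $s^it^i$ as the base edge of its first fan. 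Handling this case distinction, and checking that exactly one edge is shared at every cut, is where the argument requires care; the remaining bookkeeping for the partition and maximality conditions is immediate from the greedy scan, exactly as in the proof of \cref{basic:fandec-unique}.
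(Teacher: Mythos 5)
Your construction reproduces the easy first phase of the paper's proof (the greedy scan, which indeed yields Properties~(ii) and~(iii) of \cref{def:st-outerdec}), but it is missing the step that constitutes the actual content of the paper's argument. The difficulty is Property~(i) in the form in which it is consumed later: in \cref{thm:main}, incremental maximality of $P_i$ is invoked exactly to conclude that the shared edge $e_i$ is \emph{not} the base edge $s^it^i$ of the top fan $F^i$ of $P_i$. No forward stopping rule, yours included, can guarantee this, because the bad cut can be \emph{forced} by the graph rather than chosen by the scan. Concretely, let $f_1=s_1at_1$ with $s_1\to a$, $a\to t_1$, $s_1\to t_1$; let $f_2=at_1t'$ be glued along $at_1$ with $a\to t'$, $t_1\to t'$; and let $f_3=at'b$ be glued along $at'$ with $b\to a$, $b\to t'$. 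This is an upward outerpath, and $f_1\cup f_2\cup f_3$ has two sources ($s_1$ and $b$), so every decomposition must cut after $f_1$ or after $f_2$. But any piece ending at $f_2$ has top fan $\{f_2\}$ with base edge $at'$, which is precisely the edge shared with $f_3$; hence a cut after $f_2$ is a violation, and the only admissible decomposition is $\{f_1\},\{f_2,f_3\}$. Your scan does not produce it: under the reading in which the tested base edge is that of the top fan of $P_i\cup f_{j'+1}$, it cuts after $f_1$ \emph{and} after $f_2$ (this reading would even chop a plain one-sided fan into single faces, since the base edge of a growing fan whose sink is its newest vertex is always outer); under the reading in which it is the top fan of $P_i$, it keeps $\{f_1,f_2\}$ together and cuts after $f_2$. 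Either way you end a piece at $f_2$ and record $e_i=at'=s^it^i$, with no mechanism to undo it. This is exactly the situation the paper's second phase repairs: it takes the minimal violating index $j$, proves that $P_j$ is not a single fan (if $P_j=F^j$, the face across $s^jt^j$ could not have stopped the scan, because gluing a triangle onto the source-to-sink edge of an $st$-outerpath preserves single-source/single-sinkness), and then reassigns the entire fan $F^j$ to $P_{j+1}$, with property~(v) of \cref{def:st-fandec} certifying that the new shared edge, the one between $F^j$ and the preceding fan $Q^j$, is not the base edge of $Q^j$. Your proposal contains no counterpart of this reassignment, and without it (or genuine lookahead) a left-to-right greedy cannot succeed: the correct decision at the cut $f_1\mid f_2$ depends on the orientation of $f_3$.

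By contrast, the issue you single out as the ``main obstacle'' is not one. Every face of an internally-triangulated outerpath is a triangle, every acyclic orientation of a triangle has exactly one source and one sink, so each restarted piece is automatically an $st$-outerpath, and your own extension test preserves this invariant thereafter; no bimodality argument is needed there. Likewise, your treatment of the sharing condition is fine (adjacent face intervals of the dual path share exactly one chord). The case distinction you defer with ``the argument requires care'' is precisely the forced cut along $s^it^i$ described above, and that is the one part of the proof that cannot be waved through.
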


\newcommand{\lemStOuterExsistsProof}{\begin{proof}
Let $G$ be an internally-triangulated upward outerpath.
First, consider the weak dual $\overline{G}=\langle f_1,\dots,f_h \rangle$ of $G$ and the outerpaths $P_1,P_2,\dots,P_m$ constructed as follows. We initialize $P_1=f_1$ and we parse the faces of $G$ in the order defined by $\overline{G}$ from $f_1$ to $f_h$. Let $P_i$ be the current $s_it_i$-outerpath and let $f_j$ be the last visited face (for some $1 < j < h$). We set $P_i = P_i \cup f_j$ if $P_i \cup f_j$ is still a single-source single-sink outerpath, otherwise we set $P_{i+1}=f_j$ and proceed. Clearly, at the end of this process, we have obtained a sequence of $s_i t_i$-outerpaths that satisfies Properties  (\ref{prop:outer-share-one-edge}) and (\ref{prop:outer-partition}) of \cref{def:st-outerdec}. Concerning Property (\ref{prop:outer-maximality}), while any of the obtained outerpaths cannot be expanded with the first fan of the next outerpath, it may not hold that $e_i \neq s_it_i$.  Thus, we next show how to (re)assign some fans to a different outerpath in $P_1,\dots,P_m$ so to also satisfy Property (\ref{prop:outer-maximality}) of \cref{def:st-outerdec}.

\begin{figure}[t]
    \centering
    \includegraphics[page=1]{figs/outerpath.pdf}
    \caption{The decomposition in incrementally maximal outerpaths of \cref{fig:st-outer-dec} before the update operation described in the proof of \cref{lem:stouterexsists}.\label{fig:st-outer-dec-before}}
\end{figure}

For $i=1,\dots,m-1$, consider a partition of each $P_i$ into incrementally maximal $st$-fans, obtained by visiting the faces of $P_i$ from the face $f_j$ that contains $e_{i-1}$ to the face $f_{j'}$ that contains $e_i$. Recall that we denoted by $F^i$ the last incrementally maximal $s^i t^i$-fan of $P_i$ and hence $e_i$ belongs to $F^i$; furthermore, let us denote by $Q^i$ the incrementally maximal fan, if any, preceding $F^i$ in the above partition.
Suppose that Property (\ref{prop:outer-maximality}) is not satisfied by $P_1,P_2,\dots,P_m$, and
let $j$ be the minimum index such that $e_j = s^j t^j$.
Consider the two faces of $G$ incident to $e_j$. 
Observe that one of these faces, which we denote by $f'$, belongs to $F^j$ (of $P_j$), while the other face, which we denote by $f''$, belongs to the first incrementally maximal fan of $P_{j+1}$.
First, observe that $P_j \neq F_j$, since $F_j \cup f''$ is a single-source single-sink outerpath (and therefore the edge $s^i t^i$ would not be the shared edge of two distinct outerpaths in $P_1,P_2,\dots,P_m$).
Therefore, $P_j$ also contains $Q^j$. The edge $e$ shared by $Q^j$ and $F^j$ is not $s^j t^j$, as if this were the case, by extending $Q^j$ with the face of $F^j$ incident to $e$ we would obtain a larger single-source single-sink fan, which contradicts the fact that $Q^j$ is incrementally maximal.

We update the decomposition $P_1,P_2,\dots,P_m$ as follows.
We remove from $P_j$ all the vertices and edges of $F^j$, except for $e$ and its end-vertices.
We then set $P_{j+1} = F^j \cup P_{j+1}$, that is, we move the fan $F^j$ from $P_j$ to $P_{j+1}$ (note that $F^j$ may merge with the first incrementally maximal fan of $P_{j+1}$). See, for instance, the blue-orange pair and the pink-cyan pair of outerpaths in \cref{fig:st-outer-dec-before}, and how they are updated in \cref{fig:st-outer-dec}.
Observe that $e' \neq s^j t^j$ is now the unique edge shared by $P_j$ and $P_{j+1}$. Therefore, $s_jt_j \neq e_j$ is now satisfied by the pair $P_j, P_{j+1}$.
Repeating the above update, as long as there exists a pair of single-source single-sink outerpaths in $P_1,P_2,\dots,P_m$ sharing an edge and violating Property (\ref{prop:outer-maximality}), eventually yields the desired $st$-outerpath decomposition.
\end{proof}}

An $st$-outerpath that is not a single $st$-fan is called \emph{proper} in the following. Let $P_1,P_2,\dots,P_m$ be an $st$-outerpath decomposition of an upward outerpath $G$, two proper outerpaths $P_i$ and $P_j$ are \emph{consecutive}, if there is no proper outerpath $P_a$, such that $i < a < j$. We will use the following technical lemmas.

\begin{restatable}[$\star$]{lemma}{lemConsecutiveshare}\label{lem:consecutive-share}
Two consecutive proper outerpaths $P_i$ and $P_j$ share either a single vertex $v$ or the edge $e_i$. In the former case, it holds $j>i+1$, in the latter case it holds $j=i+1$. 
\end{restatable}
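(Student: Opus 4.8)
The plan is to analyze the structure of an $st$-outerpath decomposition $P_1,\dots,P_m$ and understand exactly what two \emph{consecutive} proper outerpaths $P_i$ and $P_j$ (with $i<j$ and no proper outerpath strictly between them) can share. By \cref{def:st-outerdec}, property (\ref{prop:outer-share-one-edge}), two outerpaths in the decomposition share a single edge $e_i$ if $j=i+1$ and share nothing otherwise. But here $P_i$ and $P_j$ are consecutive \emph{proper} outerpaths, which does not force $j=i+1$; there may be non-proper outerpaths (each a single $st$-fan) sitting between them in the index order. So the content of the lemma is really a statement about the interaction of proper outerpaths across a block of intervening single-fan outerpaths.

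First I would handle the case $j=i+1$. Then $P_i$ and $P_j$ are adjacent in the decomposition and, by property (\ref{prop:outer-share-one-edge}), they share exactly the edge $e_i$. This immediately gives the ``latter case'' of the statement. Next I would treat $j>i+1$, so there is at least one index $a$ with $i<a<j$, and by the definition of consecutiveness every such $P_a$ is a single $st$-fan (non-proper). The goal in this case is to argue that $P_i$ and $P_j$ share exactly one vertex. Since the outerpaths partition $G$ and consecutive ones (in index) share one edge while non-consecutive ones share nothing, $P_i$ and $P_j$ are not index-adjacent and hence, strictly reading property (\ref{prop:outer-share-one-edge}), share no edge. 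The real work is to show they nonetheless share a vertex and that this shared vertex is unique; this should follow from the fact that all the intervening $P_a$ are single fans glued through the shared edges $e_i,e_{i+1},\dots,e_{j-1}$, and that the tails/heads of these shared edges collapse to a common vertex. I would look at the chain of shared edges $e_i,\dots,e_{j-1}$ and use the upward single-source/single-sink structure to pin down that all the intervening fans pivot about one vertex $v$ (the common apex of the fans), so that $P_i$ and $P_j$ meet exactly at $v$.

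Concretely, I would argue as follows. Each intervening $P_a$ is a single $s_a t_a$-fan, whose inner edges all share an end-vertex; incremental maximality (property (\ref{prop:outer-maximality})) of $P_i$ forces the shared edge $e_i=s^i t^i$ to be the unique apex-bearing edge, which is why $P_i$ could not absorb the next face. Tracing the apex vertex through the consecutive fans $P_{i+1},\dots,P_{j-1}$, the single-source single-sink condition on $G$ (each $P_a$ being a genuine $st$-fan) should force the apex to be a single fixed vertex $v$ shared by all these fans, and $P_i$ and $P_j$ touch this chain only at $v$. The uniqueness of the shared vertex would then come from planarity and the fact that any second shared vertex would create a separating pair splitting off a proper outerpath between $P_i$ and $P_j$, contradicting consecutiveness.

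The main obstacle I expect is the $j>i+1$ case: carefully proving that the chain of intervening single-fan outerpaths forces $P_i$ and $P_j$ to meet in \emph{exactly one} vertex rather than zero or two, and ruling out the possibility that a proper outerpath could have been formed among the intervening fans (which would contradict consecutiveness). This requires exploiting the incremental-maximality rule of \cref{def:st-outerdec} together with the upward/outerplanar structure to show that each $P_a$ in between is forced to be a single fan sharing the apex vertex, and that the apex is globally constant across the block; the bookkeeping of sources, sinks, and shared edges $e_i,\dots,e_{j-1}$ is where the argument is most delicate.
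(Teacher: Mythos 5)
Your overall skeleton is the paper's: for $j=i+1$ you invoke Property~(\ref{prop:outer-share-one-edge}) of \cref{def:st-outerdec} to obtain the shared edge $e_i$, and for $j>i+1$ you argue that the intervening non-proper outerpaths---each a single-source, single-sink fan glued to its neighbors along $e_i,\dots,e_{j-1}$---force a common vertex that $P_i$ and $P_j$ both contain. That is exactly how the paper proceeds (its proof, too, only establishes that $P_i$ and $P_j$ are not vertex-disjoint and share at most one edge; it never argues uniqueness of the shared vertex, which you at least flag).

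The genuine problem is the mechanism you commit to in the $j>i+1$ case: you propose to ``trace the apex'' and conclude that ``the apex [is] a single fixed vertex $v$ shared by all these fans.'' This step fails in half the cases. A non-proper outerpath is a single $s_at_a$-fan, whose apex is by definition its source $s_a$; but the intervening fans need not share their sources. The dichotomy the paper actually uses (and states explicitly in \cref{lem:consecutive-fans}) is that the chain of single-source single-sink fans either has pairwise distinct sources, in which case they all share one sink, or pairwise distinct sinks, in which case they all share one source; the shared vertex $v$ is accordingly the tail or the head of $e_i$. In the common-sink case the apexes of the intervening fans are pairwise distinct, so an argument that literally follows a constant apex never gets off the ground---you need the symmetric source/sink case analysis instead. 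Relatedly, your aside that incremental maximality ``forces the shared edge $e_i=s^it^i$'' is backwards: the construction in \cref{lem:stouterexsists} is arranged precisely so that the shared edge is \emph{not} the edge $s^it^i$ of the last fan. With the apex claim replaced by the source/sink dichotomy, your outline matches the paper's proof.
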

\newcommand{\lemConsecutiveshareProof}{\begin{proof}
By \cref{def:st-outerdec}, $P_i$ and $P_j$ share at most one edge. We first prove that $P_i$ and $P_j$ cannot be vertex-disjoint. Namely, if $j=i+1$, then $P_i$ and $P_j$ share an edge by Property (\ref{prop:outer-share-one-edge}) of \cref{def:st-outerdec}. If $j>i+1$, let $P_{i+1},\dots,P_{j-1}$ be the non-proper outerpaths between $P_i$ and $P_j$. Since each of these non-proper outerpaths are single-source single-sink fans, it must be that they either have different sources or different sinks, and hence they all share the same sink or the same source, respectively, and this common vertex is also common to~$P_i$~and~$P_j$.\end{proof}}

\begin{restatable}[$\star$]{lemma}{lemConsecutiveFans}\label{lem:consecutive-fans}
Let $P_i$ and $P_j$ be two consecutive proper outerpaths that share a single vertex $v$. Then each $P_a$, with $i < a < j$, is an $s_at_a$-fan such that $v=s_a$ if $v$ is the tail of $e_i$, and $v=t_a$ otherwise.
\end{restatable}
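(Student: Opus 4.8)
The plan is to combine the structural information already extracted in \cref{lem:consecutive-share} with a purely local orientation argument at the shared vertex~$v$. Recall that, in the situation of the present lemma, $P_i$ and $P_j$ share a single vertex~$v$, which by \cref{lem:consecutive-share} forces $j>i+1$, so that $P_{i+1},\dots,P_{j-1}$ is a nonempty chain of intermediate outerpaths. First I would dispose of the qualitative claim: since $P_i$ and $P_j$ are \emph{consecutive} proper outerpaths, no $P_a$ with $i<a<j$ is proper, and a non-proper $st$-outerpath is by definition exactly a single $st$-fan. Hence each $P_a$ ($i<a<j$) is an $s_at_a$-fan, which settles the first part of the statement.

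Next I would pin down the role of $v$ in the chain. The proof of \cref{lem:consecutive-share} shows not only that $P_{i+1},\dots,P_{j-1}$ all contain~$v$, but that $v$ is either their common source or their common sink: two consecutive fans sharing the edge $e_a$ cannot agree on both their source and their sink, so agreement on one of the two propagates $v$ in the \emph{same} role—source or sink—along the whole chain. I would invoke this to reduce the entire problem to deciding which of the two cases occurs.

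The decisive step is then a one-edge orientation argument. Since $v\in P_i\cap P_{i+1}$ and, by \cref{def:st-outerdec}, $P_i$ and $P_{i+1}$ share exactly the edge $e_i$, the vertex $v$ must be an endpoint of $e_i$; moreover $e_i$ is an edge of the fan $P_{i+1}$. Suppose $v$ is the tail of $e_i$. Then $v$ has an outgoing edge inside $P_{i+1}$, so $v$ cannot be the sink $t_{i+1}$, because every edge incident to the sink of an $st$-fan is incoming. Therefore $v=s_{i+1}$, and since $v$ is uniformly the common source or the common sink of the chain, it must be the common source, i.e.\ $v=s_a$ for every $i<a<j$. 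The case in which $v$ is the head of $e_i$ is fully symmetric (replace ``tail/source'' by ``head/sink'') and yields $v=t_a$.

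The main obstacle is exactly the middle step: ruling out that $v$ is an \emph{internal} vertex of some fan $P_a$. A priori, being incident to the two boundary edges $e_{a-1}$ and $e_a$ does not force $v$ to be the apex—hence the source—of $P_a$, since for a two-triangle fan $v$ could be the middle outer vertex, which a bimodal orientation might make internal. This non-internality is precisely what \cref{lem:consecutive-share} secures, by exploiting that $v$ is shared by the two proper outerpaths $P_i$ and $P_j$; I would therefore lean on that lemma rather than re-deriving the argument, and the remaining reasoning is the short local deduction above.
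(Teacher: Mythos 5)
Your proof is correct and follows essentially the same route as the paper's: each intermediate $P_a$ is a fan by consecutiveness, the proof of \cref{lem:consecutive-share} already gives that $v$ is a common source or common sink of the whole chain, and the orientation of $e_i$ at $v$ decides which of the two cases holds. The only difference is that you spell out the final step (why tail of $e_i$ forces $v=s_{i+1}$ rather than $t_{i+1}$ or an internal vertex), which the paper compresses into a single ``Consequently''.
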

\newcommand{\lemConsecutiveFansProof}{\begin{proof}
We know that each $P_a$, with $i < a < j$, is an $s_at_a$-fan because $P_i$ and $P_j$ are consecutive. As already observed in the proof of \cref{lem:consecutive-share}, since each $P_a$ is an $s_at_a$-fan, it must be that all of them either have different sources or different sinks, and hence vertex $v$ is either a common sink or a common source for all of them. Consequently, if $v$ is the tail of $e_i$, then $v$ is the source $s_a$ of each $P_a$, otherwise $v$ is the sink $t_a$ of each $P_a$.
\end{proof}}

\begin{restatable}[$\star$]{lemma}{lemConsecutiveV}\label{lem:consecutive-v}
Let $v$ be a vertex shared by a set $\cal P$ of $n_{\cal P}$ outerpaths. Then at most 
two outerpaths of $\cal P$ are such that $v$ is internal for them, and $\cal P$ contains at most four proper outerpaths.
\end{restatable}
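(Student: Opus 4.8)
The plan is to exploit two structural features of $v$ in the upward outerplanar embedding $\mathcal E$ of $G$: first, that $v$ lies on the outer face, so the inner faces incident to $v$ form a \emph{contiguous} subpath of the weak dual $\overline G$; and second, that $\mathcal E$ is bimodal at $v$, so the incoming and the outgoing edges at $v$ each occur consecutively in the rotation around~$v$. I would begin by recording the contiguity fact: since the star of $v$ is connected in the tree $\overline G$, which is a path, the $v$-incident faces form a subpath $V$, and because each $P_i$ occupies a contiguous block of $\overline G$, the set $\cal P$ of outerpaths through $v$ is itself a contiguous block $P_l,\dots,P_r$ of the decomposition, with each member having its $v$-incident faces contiguous. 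A consequence I would isolate for later use is that only the two extreme members $P_l$ and $P_r$ can contain a face not incident to $v$: since $P_l$ (resp.\ $P_r$) is the first (resp.\ last) member meeting $V$, it must already contain the first (resp.\ last) face of $V$, so every intermediate member is entirely contained in $V$ and is thus a fan with apex $v$.

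For the first claim (at most two outerpaths have $v$ internal) I would argue exactly as in \cref{lem:bimodality}. By bimodality, the cyclic rotation of edges at $v$ has at most two switches between incoming and outgoing edges; since $v$ is on the outer face, these switches yield at most two \emph{transition triangles} at $v$, a transition triangle being one whose two spokes at $v$ have opposite orientation. An outerpath $P_i$ has $v$ internal precisely when its contiguous block of $v$-spokes contains both an incoming and an outgoing spoke, i.e.\ when it owns a transition triangle; as faces are not shared, distinct internal members own distinct transition triangles, so at most two members have $v$ internal. (Equivalently, the cycle-enclosure argument of \cref{lem:bimodality}, applied to a cycle through $v$ inside each such biconnected $P_i$, gives the same bound.)

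For the second claim I would classify every proper outerpath of $\cal P$ by the role of $v$, that is source, sink, or internal. The internal ones number at most two by the first claim, so it remains to bound the non-internal proper ones by two. Here I would use that a fan with apex $v$ in which $v$ is the \emph{source} has all its inner edges incident to the source, hence is an $st$-fan and is \emph{not} proper; therefore any proper outerpath with $v$ as source must contain a face outside $V$, and by the contiguity observation it can only be $P_l$ or $P_r$. The symmetric statement for $v$ a sink is the delicate point, and I expect it to be the main obstacle: a fan with apex $v$ in which $v$ is the sink is \emph{not} an $st$-fan, so a sink-apex fan contained in $V$ may itself be proper. I would resolve this by using bimodality at the leaves of such a fan, which forbids internal ``valleys'' and forces each maximal monochromatic block of spokes at $v$ to span a single single-source single-sink region; invoking \cref{lem:consecutive-share,lem:consecutive-fans} I would then argue that, whenever such a block abuts a transition, the greedy maximality of the decomposition (\cref{lem:stouterexsists}) extends it across the transition and turns the corresponding member into an internal one, already counted, so that a surviving in-$V$ proper sink-outerpath forces $v$ to be essentially a global sink and leaves no room for further proper members.

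Putting the pieces together, the proper members of $\cal P$ consist of at most two internal ones plus at most the two extreme outerpaths $P_l,P_r$ carrying a non-$v$ face, for a total of at most four, while the internal count is at most two, as required. The step I expect to be hardest is exactly the sink-apex analysis in the two-transition, gap-wrapping configuration, where I must verify that the decomposition's maximality merges the sink blocks with adjacent transitions rather than leaving them as separate proper outerpaths; the consecutive-fan lemmas \cref{lem:consecutive-share,lem:consecutive-fans} are the tools that make this accounting rigorous.
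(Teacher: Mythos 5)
Your first two paragraphs coincide with the paper's own proof: the $v$-incident faces form a contiguous subpath of $\overline{G}$, so the members of $\mathcal{P}$ are consecutive in the decomposition and every non-extreme member is a fan with apex $v$; bimodality at $v$ then yields at most two transition faces, hence at most two members for which $v$ is internal. That part is correct and essentially identical to the paper's argument (the paper cites \cref{lem:consecutive-share} for contiguity and bimodality for the internal count).

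The gap is in your third paragraph, the bound of four proper members. The paper finishes in one line: every intermediate member is a $v$-apex fan, and it counts such a fan as proper only when $v$ is internal for it, so the proper members are among the two extremes and the at most two internal ones. You instead read ``proper'' strictly --- an $st$-fan must have its inner edges at the source --- and correctly notice that a fan whose apex is its sink is then formally proper; this is a real subtlety that the paper silently defines away rather than argues about. But your attempt to close that case is not a proof. The claim that the decomposition's maximality ``extends [a sink block] across the transition'' is unverified and cannot be drawn from \cref{lem:consecutive-share,lem:consecutive-fans}: \cref{lem:consecutive-fans} applies only between \emph{consecutive proper} outerpaths, i.e., it presupposes knowing which intermediate members are non-proper, which is exactly what you are trying to establish, so the appeal is circular. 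Moreover, incremental maximality in \cref{def:st-outerdec} does not force merging --- an outerpath may stop even when the next face could be absorbed, via the clause concerning the edge $s^it^i$ of its last fan being outer --- and the decomposition of \cref{lem:stouterexsists} is additionally altered by a post-processing step that moves whole fans between outerpaths; your argument accounts for neither. (Also, for a sink-apex block bimodality forbids internal peaks, not ``valleys.'') Finally, ``forces $v$ to be essentially a global sink and leaves no room for further proper members'' is not a checkable statement. The simple repair is the paper's implicit convention: treat a single-source single-sink outerpath whose underlying graph is a fan as non-proper regardless of whether its apex is the source or the sink. This is consistent with how \cref{thm:main} uses the lemma, since sink-apex fans also admit $2$-page consecutive embeddings by reversing all edges and applying \cref{lem:s-fan}; under that reading your third paragraph collapses to the paper's one-line conclusion and the hard case you flagged disappears.
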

\newcommand{\lemConsecutiveVProof}{\begin{proof}
Since all outerpaths of $\cal P$ share vertex $v$, the fact that $\cal P$ contains at most two outerpaths for which $v$ is internal immediately follows by the fact that $G$ has an upward outerplanar embedding. Namely, by bimodality, at most two faces incident to $v$ are such that $v$ is neither the source nor the sink of that face~\cite{DBLP:books/ph/BattistaETT99}. These two faces belong to two different outerpaths $P$ and $P'$ in $\cal P$, and if there existed a third one $P''$ with the property of $v$ being internal, then at least one vertex of $P$ or $P'$ would not belong to the outer face of $G$.

To prove the latter part of the statement, observe that, by \cref{lem:consecutive-share}, it must be that the outerpaths in $\cal P$ form a contiguous subset $P_i,\dots,P_j$, with $j-i=n_{\cal P}$, in the $st$-outerpath decomposition $P_1,\dots,P_m$. Since each outerpath in $\cal P$ shares $v$, in order for $\cal G$ to be a path, it must be that each $P_a$ is a fan, except for possibly $P_i$ and $P_j$, although not necessarily a fan in which the vertex shared by the inner edges is its source vertex. However, we proved above that at most two of these fans are such that $v$ is internal. Hence, the only elements of $\cal P$ that can be proper are $P_i$, $P_j$ and the two for which $v$ is internal.
\end{proof}}

We are now ready to prove the main result of this section.
\begin{restatable}[$\star$]{theorem}{thmMain}\label{thm:main}
Every internally-triangulated upward outerpath $G$ admits a $16$-page \ube.
\end{restatable}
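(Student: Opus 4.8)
The plan is to combine the $st$-outerpath decomposition $P_1,\dots,P_m$ (which exists by \cref{lem:stouterexsists}) with the $e$-consecutive \ube constructions for proper $st$-outerpaths (\cref{lem:st-outerpath}) and to handle the non-proper outerpaths, which are single fans, essentially ``for free'' because fans need only $2$ pages (\cref{lem:s-fan}). The central idea is to merge the \ube of the proper pieces sequentially along the shared edges $e_i$, exactly in the style of the merging arguments used in \cref{lem:good-st-outerpath} and \cref{lem:one-insertion}, and to show that a bounded number of page-palettes suffices so that crossings between edges assigned to overlapping pages cannot occur.

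First I would process the proper outerpaths $P_{i_1},P_{i_2},\dots$ in the order in which they occur in the decomposition. For each proper $P_i$ I apply \cref{lem:st-outerpath} to obtain a $4$-page $e_i$-consecutive \ube, choosing the distinguished outer edge $e$ to be the shared edge $e_i$ on the side facing the next proper piece (so that the consecutivity property I need for merging is guaranteed). The runs of non-proper outerpaths lying between two consecutive proper ones are, by \cref{lem:consecutive-fans}, fans all sharing a common vertex $v$ (either as common source or common sink), and by \cref{lem:consecutive-v} at most two of the outerpaths sharing a vertex are ones for which $v$ is internal; these intervening fans can be inserted with \cref{lem:s-fan} using only $2$ pages, placed between the already-embedded pieces via the merging operation defined in the Preliminaries.

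The key accounting step is the page budget. Each proper outerpath uses $4$ pages, but two adjacent proper pieces share the edge $e_i$ whose endpoints are consecutive and whose incident-at-$t$ edges occupy at most two pages (by $e_i$-consecutiveness). The crucial observation, analogous to the argument in \cref{lem:good-st-outerpath} where two free pages suffice to avoid conflicts at the shared sink, is that when merging two pieces only the edges incident to the shared vertices can cross across the seam, and consecutiveness confines these to few pages. By \cref{lem:consecutive-v}, around any vertex at most two proper outerpaths can have that vertex internal, so a single vertex is a genuine merge-seam for at most a constant number of pieces. Reserving two disjoint palettes of $4$ pages each and alternating between them along the sequence of proper outerpaths — using one palette for the ``odd'' proper pieces and the other for the ``even'' ones — ensures that any two pieces that could interact across a seam use disjoint pages, while pieces that are far apart in the decomposition share no vertex and hence cannot cross. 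This yields the total of $2 \cdot 4 + \text{(constant for the intervening fans)}$, which I would show telescopes to the claimed bound of $16$ pages; the factor structure is $4$ pages per proper piece, doubled to decouple adjacent pieces, and doubled again to accommodate the at-most-two proper pieces that can share a common internal vertex on either side.

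The hard part will be verifying that no two edges assigned to the same page cross once all pieces are merged, i.e.\ that the palette-alternation scheme genuinely decouples all interacting pairs. The subtlety is that a single cut-vertex $v$ can be shared by up to four proper outerpaths (\cref{lem:consecutive-v}), so a naive two-palette alternation along the linear order of the decomposition may place two $v$-internal proper pieces in the same palette; I would need to argue, using the geometric structure from \cref{lem:consecutive-v} (at most two pieces have $v$ internal, and the others have $v$ as a source or sink where edges are more constrained), that the $e_i$-consecutiveness properties still prevent same-page crossings, or else increase the palette count to the full $16$ precisely to absorb this worst case. Checking these seam-crossing cases exhaustively — distinguishing whether the shared object is an edge $e_i$ (then $j=i+1$) or a single vertex $v$ (then $j>i+1$, per \cref{lem:consecutive-share}) and tracking which of the four pages each boundary edge lands on — is the bulk of the remaining work.
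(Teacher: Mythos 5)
Your outline coincides with the paper's own strategy: the same decomposition (\cref{lem:stouterexsists}), the same building blocks (\cref{lem:st-outerpath,lem:s-fan,lem:consecutive-share,lem:consecutive-fans,lem:consecutive-v}), and the same merge-then-separate-palettes idea. The paper, however, organizes the induction around \emph{bundles} (maximal sets of outerpaths sharing an edge or a single vertex), maintaining the invariant that the embedding built so far is $e_{g(l)}$-consecutive and that each individual outerpath occupies at most $4$ pages; bundle-maximality is what guarantees that any edge of the already-embedded graph crossing the newly inserted pieces belongs to the last outerpath of the previous bundle, so that a disjoint palette resolves all such conflicts. Your proposal processes proper pieces one at a time and does not set up this invariant, but that is repairable.

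The genuine gap is the step you yourself flag as ``the bulk of the remaining work,'' and the fix you sketch would not work. In a bundle whose four proper outerpaths (call them IN, EAST, WEST, OUT in decomposition order, as in the paper) share a vertex $v$, your odd/even palette alternation assigns IN and WEST one palette and EAST and OUT the other. But EAST and OUT can cross: EAST has $v$ internal, so it has edges spanning over $v$ in the linear order, and these interleave with edges of OUT incident to or beyond $v$; this is precisely one of the interactions the paper's \textsf{CASE 2} must neutralize, and it is explicitly listed there (``the OUT edges only cross \dots the edges of EAST, and the edges of WEST''). So a two-palette parity scheme is unsound, and your fallback (``increase the palette count to the full 16'') is left unspecified. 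The paper's assignment is not an alternation but a tailored crossing analysis: IN and WEST share a palette because WEST's edges can cross only the adjacent fans and EAST, never IN; EAST gets a second palette; OUT, which can cross both EAST and WEST, gets a third; and the intervening fans get the fourth palette, split into two $2$-page halves used alternately (fans need only $2$ pages by \cref{lem:s-fan}, and only consecutive fans in the same run interact). This ``who can actually cross whom'' bookkeeping, rather than parity, is what makes $4 \times 4 = 16$ pages suffice; without it, your argument does not close.
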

\begin{proof}[Sketch]
Let $P_1,P_2,\dots,P_m$ be an $st$-outerpath decomposition of $G$ (\cref{lem:stouterexsists}). Based on \cref{lem:consecutive-share}, a \emph{bundle} is a maximal set of outerpaths that either share an edge or a single vertex. Let $G_b$ be the graph induced by the first $b$ bundles of $G$ (going from $P_1$ to $P_m$). We prove the statement by induction on the number $l$ of bundles of $G$. In particular, we can prove that $G_l$ admits a $e_{g(l)}$-consecutive $16$-page \ube $\langle \pi^l \sigma^l \rangle$, where $g(l)$ is the greatest index such that $P_{g(l)}$ belongs to $G_l$, and such that each single $s_it_i$-outerpath uses at most $4$ pages. In the inductive case, we distinguish whether the considered bundle contains only two outerpaths that share an edge, or at least three outerpaths that share a vertex. Here, we exploit the crucial properties of  \cref{lem:st-outerpath,lem:consecutive-v,lem:consecutive-fans}, which allow us to limit the interaction between different outerpaths in~terms~of~pages.\end{proof}

\newcommand{\thmMainProof}{\begin{proof}
Let $P_1,P_2,\dots,P_m$ be an $st$-outerpath decomposition of $G$, which exists by \cref{lem:stouterexsists}. Based on \cref{lem:consecutive-share}, a \emph{bundle} is a maximal set of outerpaths that either share an edge or a single vertex. Let $G_b$ be the graph induced by the first $b$ bundles of $G$ (from $P_1$ to $P_m$). We prove by induction on the number $l$ of bundles of $G$ that $G_l$ admits a $e_{g(l)}$-consecutive $16$-page \ube $\langle \pi^l \sigma^l \rangle$, where $g(l)$ is the largest index such that $P_{g(l)}$ belongs to $G_l$, and such that each single $s_it_i$-outerpath uses at most $4$ pages. 

Before giving the inductive proof, we observe that each $s_it_i$-outerpath $P_i$, with $i=1,\dots,m$, admits an $e_i$-consecutive $4$-page \ube $\langle \pi, \sigma \rangle$, which can be obtained by using \cref{lem:st-outerpath}. In particular, let $F_1,\dots,F_k$ be the $st$-fan decomposition given in input to \cref{lem:st-outerpath}, and denote by $s^j$, $t^j$ the source and sink of $F_j$, respectively.  We have three possible cases: (a) $F^i = F_k$, or (b) $F^i=F_1$ (and hence $P_i$ is \good), or (c) $F^i$ belongs to the \app $H_1$ of $P_i$. In all the three cases, the preconditions of \cref{lem:st-outerpath} applies. Namely, in case (a) $s_it_i = s^kt^k$ and $e \neq s_it_i$ because $P_i$ is incrementally maximal; in case (b) $s_it_i = s^1t^1$ and again $e \neq s_it_i$ because $P_i$ is incrementally maximal; in case (c) if must be that $s_it_i$ belongs to $H_1$ (otherwise $H_1$ would not contain $F^i$), and hence clearly $s_it_i \neq s^1t^1$. 

In the base case $l=0$, then $m=1$, and an $e_1$-consecutive $4$-page \ube $\langle \pi^0 \sigma^0 \rangle$ of $P_1$ can be obtained by using \cref{lem:st-outerpath} as observed before.

If $l>0$, let $B=\{P_{g'(l)},\dots,P_{g(l)}\}$ be the $l$-th bundle of $G_l$. Let $G_{l-1}$ be the graph obtained by removing from $G_l$ the outerpaths in $B$, except for $P_{g'(l)}$ (which belongs to the bundle before $B$ as well, if any). Note that $G_{l-1}$ contains $l-1$ bundles. By induction, let $\langle \pi_{l-1}, \sigma_{l-1} \rangle$ be an $e_{g(l-1)}$-consecutive $16$-page \ube of $G_{l-1}$. We distinguish whether $B$ contains only two outerpaths that share an edge, or it contains at least three outerpaths that share a vertex.

\begin{figure}[t]
    \centering
    \includegraphics[page=3]{figs/outerpath.pdf}
    \caption{Illustration for \textsf{CASE 1} in the proof of \cref{thm:main}.\label{fig:case-1}}
\end{figure}

\smallskip\noindent\textsf{CASE 1.} $B$ contains only two outerpaths $P_{g(l)-1}$ and $P_{g(l)}$ that share an edge $e_{g(l)-1}=uv$. Let $\langle \pi,\sigma \rangle$ be a $e_{g(l)}$-consecutive $4$-page \ube of $P_{g(l)}$, obtained by applying \cref{lem:st-outerpath}, as already observed. 
Referring to \cref{fig:case-1}, we set $\pi^l=\pi^{l-1}_{u^-} \circ \pi \circ \pi^{l-1}_{v^+}$.
Clearly, $\pi^l$ extends both $\pi^{l-1}$ and $\pi$. 
One easily verifies that if $e$ and $e'$ are such that $e \in G_{l-1}$ and $e' \in P_{g(l)}$, then $e$ is incident to either $u$ or $v$. Consequently, $e$ belongs to $P_{g(l)-1}$. Namely, if $e$ belonged to another outerpath $P \neq P_{g(l)-1}$, then $P,P_{g(l)-1}, P_{g(l)}$ would all share a vertex and hence would belong to the same bundle $B$ by the maximality of $B$. Hence we can set $\sigma^l(e)=\sigma^{l-1}(e)$ for each edge $e \in G_{l-1}$, while, for the edges of $P_{g(l)}$, we can use any set of $4$ pages in $\sigma^{l-1}$ not used by the edges of $P_{g(l)-1}$.  

\begin{figure}[t]
    \centering
    \includegraphics[page=4,width=\textwidth]{figs/outerpath.pdf}
    \caption{Illustration for \textsf{CASE 2} in the proof of \cref{thm:main}. The top figure shows the cyclic order of the outerpaths in $B$ around $v$. The bottom figure shows the relative ordering of the vertices of the different outerpaths. We remark that the colors used here are unrelated with those of \cref{fig:st-outer-dec}.\label{fig:case-2}}
\end{figure}

\smallskip\noindent\textsf{CASE 2.} $B$ contains at least three outerpaths that share a vertex. For ease of notation, let us denote $P_i=P_{g'(l)}$ and $P_{j}=P_{g(l)}$, and denote by $v$ the shared vertex. By \cref{lem:consecutive-v}, we know that in $P_i,\dots,P_j$ there are at most four proper outerpaths, and, by \cref{lem:consecutive-fans}, that between any two consecutive proper outerpaths there are non-proper outerpaths for which $v$ is either a common source or a common sink. We assume that all these four proper outerpaths exist, as the proof is simpler otherwise. For ease of notation, we denote by IN and OUT the first and the last proper outerpaths (in the order from $P_1$ to $P_m$). Note that $G_{l-1}$ contains IN. Also, we denote by EAST and WEST the proper outerpaths such that EAST comes before WEST. Moreover, we denote by IN-EAST fans, EAST-WEST fans, and WEST-OUT fans, the non-proper outerpaths between the IN and EAST, EAST and WEST, WEST and OUT, respectively. For each of them, we compute a \ube by \cref{lem:st-outerpath} as in \textsf{CASE 1}, and we merge the obtained \ube{s} one after the other as in  \textsf{CASE 1}. A schematic illustration of the obtained ordering is illustrated in \cref{fig:case-2}. One can verify that the edges of two fans do not cross unless they belong to the same set (IN-EAST, EAST-WEST, OUT-WEST) and share an edge. Thus, for all non-proper outerpaths we will use two alternating sets of $2$ pages each (rather than $4$, as these are simple fans), different from the $4$ pages used by IN. The edges of $G_{l-1}$ can cross only with the edges of the neighboring IN-EAST fan or with the edges of EAST. Since the IN-EAST fans use different pages, we only need to argue about the edges of EAST. Such edges only cross edges of $G_{l-1}$ that are incident to $v$, hence, as in \textsc{CASE 1}, such edges belong to IN by the maximality of $B$. Thus, we can use a set of $4$ pages for the edges of EAST different from the one used by IN. The edges of WEST only cross one neighboring EAST-WEST fan and one neighboring OUT-WEST fan, as well as the edges of EAST.  Hence, we can use the same $4$ pages used by IN. Similarly, the OUT edges only cross the edges of one neighboring OUT-WEST fan, the edges of EAST, and the edges of WEST. Thus we can use the last set of $4$ pages different from those used by IN (which are the same used by WEST), by EAST, and by the fans. Overall, the computed \ube still uses at most $16$ pages.\end{proof}}

\subsection{Upward outerplanar graphs}\label{sse:outerplanar}

We now deal with upward outerplanar graphs that may be non-triangulated and may have multiple sources and sinks, but whose blocks are $st$-DAGs. We begin with the following lemma, which generalizes \cref{lem:st-outerpath} in terms of UBT.

\begin{restatable}[$\star$]{lemma}{lemStOuterplanar}\label{lem:stouterplanar}Every biconnected $st$-outerplanar graph $G$ admits a $4$-page \ube.
\end{restatable}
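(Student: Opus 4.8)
The plan is to reduce $G$ to the already-understood case of $st$-outerpaths (\cref{lem:good-st-outerpath,lem:one-insertion}) by peeling off the parts of $G$ that branch away from a central ``spine'' and reinserting them as one-sided pieces. First I would assume, without loss of generality, that $G$ is internally triangulated: if it is not, I add chords oriented consistently with a fixed topological order so as to triangulate every inner face. Adding edges removes neither the unique source $s$ nor the unique sink $t$ and keeps all vertices on the outer face, so the augmented graph $G^+$ is again a biconnected $st$-outerplanar graph; since every $st$-planar graph is upward planar, $G^+$ is upward outerplanar (hence bimodal), and any $4$-page \ube of $G^+$ restricts to one of $G$.

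The key structural step is to exhibit $G^+$ as a single $st$-outerpath $P$ (the spine) together with a family of one-sided pieces glued onto outer edges of $P$. I would first record that, by bimodality, no internal vertex $w$ of a boundary arc can be a local source or sink: its two boundary edges are the extreme edges in the rotation at $w$, so if both were incoming (resp.\ outgoing) then all edges at $w$ would be incoming (resp.\ outgoing), making $w$ a global sink (resp.\ source). Hence the outer boundary of $G^+$ consists of exactly two directed $st$-paths $L$ and $R$. Call a chord \emph{crossing} if it joins an internal vertex of $L$ to an internal vertex of $R$; by planarity any two crossing chords are nested, so they are linearly ordered and, together with $s$ and $t$, cut $G^+$ into a linear sequence of slabs. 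After the fragments cut off by single-arc chords (both endpoints on $L$, or both on $R$) have been removed, the union $P$ of these slabs has a path as its weak dual and has $s$ as its only source and $t$ as its only sink, i.e.\ it is an $st$-outerpath. Each removed fragment is bounded by a single-arc chord $u_iv_i$ and a directed boundary sub-arc from $u_i$ to $v_i$, hence is a one-sided $u_iv_i$-outerplanar graph sharing with $P$ exactly the outer edge $u_iv_i$.

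To assemble the embedding I would write $P=P_{\mathrm{good}}\cup A$ as a \good $st$-outerpath plus its (one-sided) appendage, and absorb $A$ together with all one-sided fragments attached beyond it into a single enlarged one-sided piece — a union of one-sided pieces that share one edge and flow between the same two endpoints is again one-sided. Thus $G^+=P_{\mathrm{good}}\cup H_1\cup\dots\cup H_h$, a \good $st$-outerpath with one-sided pieces glued onto distinct outer edges $u_iv_i\neq e_i$. I would then take the $4$-page $e$-consecutive \ube of $P_{\mathrm{good}}$ furnished by \cref{lem:good-st-outerpath} and insert every $H_i$ via \cref{lem:one-insertion}, obtaining a $4$-page \ube of $G^+$ and hence of $G$. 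The two degenerate sub-cases fit this scheme and serve as a sanity check: if $st$ is an outer edge then $G$ is one-sided and \cref{basic:st-fan-drawing} already gives a single page, while if $st$ is an inner edge it splits $G$ into two one-sided halves that can be placed on two separate pages.

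The main obstacle is precisely the structural decomposition of the second and third paragraphs: one must verify carefully that the spine obtained from the nested crossing chords is a bona fide $st$-outerpath, that every remaining fragment attaches to it along a single outer edge as a one-sided $st$-DAG, and that the resulting attaching edges satisfy the hypotheses of \cref{lem:one-insertion} (distinct outer edges of the \good base, none equal to a shared fan edge $e_i$). It is these adjacency and orientation conditions — rather than the page counting, which the earlier lemmas hand to us — where the real work lies.
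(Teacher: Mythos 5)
Your proposal is correct in outline and ends exactly where the paper ends: a \good $st$-outerpath spine plus one-sided $u_iv_i$-outerplanar pieces glued onto distinct outer edges, assembled via \cref{lem:good-st-outerpath,lem:one-insertion}. The genuine difference is how the spine is extracted, and here the paper's route makes the step you yourself flag as ``where the real work lies'' essentially free. After triangulating (the paper invokes the augmentation technique of Di Battista and Tamassia~\cite{DBLP:journals/tcs/BattistaT88} rather than ad-hoc chord insertion; note that orienting new chords by an arbitrary topological order is not obviously compatible with upward outerplanarity), the paper picks any inner face $f_1$ incident to $s$, any inner face $f_h$ incident to $t$, and takes the primal graph $P$ of the unique $f_1$--$f_h$ path in the weak dual $\overline{G}$, which is a tree. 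By construction $\overline{P}$ is a path whose extreme face $f_1$ contains $s$, so $P$ is automatically \good and your appendage-absorption step is unnecessary; moreover $P$ is an $st$-DAG because each fragment $H_{uv}$ hanging off an outer edge $uv$ of $P$ has single source $u$ and single sink $v$, so it cannot supply the missing in- or out-edge that would turn an internal vertex of $P$ into a source or sink, and the same observation shows each $H_{uv}$ is one-sided. Your crossing-chord/slab decomposition does yield the same spine in the end (one can check that consecutive crossing chords must share an endpoint, so every slab remainder collapses to a single triangle), but verifying this --- together with the bimodality of the \emph{outerplanar} embedding on which your $L$/$R$ split relies, which requires $G$ to be upward outerplanar in that embedding and not merely upward planar in some embedding --- is precisely the work that the weak-dual argument sidesteps.
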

\begin{proof}[Sketch]
By exploiting a technique in~\cite{DBLP:journals/tcs/BattistaT88}, we can assume that $G$ is internally triangulated. Let $\langle f_1,\dots,f_h \rangle$ be a path in $\overline{G}$ whose primal graph $P \subset G$ is a \good $st$-outerpath. Each outer edge $uv$ of $P$ is shared by $P$ and by a one-sided $uv$-outerpath. Then a $4$-page \ube of $G$ exists by \cref{lem:one-insertion}.
\end{proof}
\newcommand{\lemStOuterplanarProof}{\begin{proof}
\begin{figure}[t]
    \centering
\includegraphics[page=1,width =.45\textwidth]{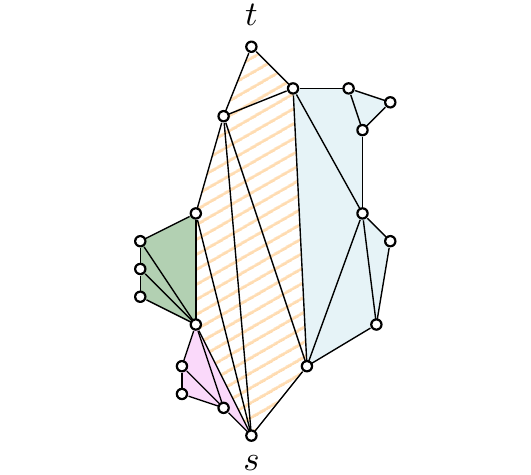}
    \caption{Illustration for the proof of \cref{lem:stouterplanar}.  An $st$-outerplanar graph $G$; an $st$-outerpath $P$ of $G$ is highlighted with a striped orange background, while the three corresponding subgraphs are light purple, green, and blue. \label{fig:stouterplanar-a}}
\end{figure}
Di Battista and Tamassia~\cite{DBLP:journals/tcs/BattistaT88} proved that every upward planar graph $G$ can be augmented, by only introducing edges, to an upward planar triangulation. In fact, their technique can also be applied to augment an upward outerplanar graph to an internally-triangulated upward outerplanar graph. Thus, we can assume that $G$ is internally triangulated. 

Refer to \cref{fig:stouterplanar-a}. Let $f_1$ be any inner face of $G$ that contains vertex $s$, and let $f_h$ be any inner face of $G$ that contains vertex $t$. Let $\Pi = \langle f_1,\dots,f_h \rangle$ be a path in $\overline{G}$. By construction, the primal graph $P$ of $\Pi$ is a \good $st$-outerpath. Each edge $uv$ on the outer face of $P$ either belongs to the outer face of $G$ or $uv$ is an inner edge of $G$ and therefore $\{u,v\}$ is a separation pair of $G$. We denote by $H_{uv}$ the maximal connected component of $G$ that contains $uv$ and no further vertex of $P$ (such component is unique by outerplanarity). Since $G$ has a single source and a single sink, $H_{uv}$ is an outerplanar graph with a single source $u$ and a single sink $v$. Moreover, since edge $uv$ is on the outer face of $H_{uv}$, it follows that $H_{uv}$ is a one-sided $uv$-outerplanar graph. Then a  $4$-page \ube of $G$ exists by  \cref{lem:one-insertion}.
\end{proof}}

We are now ready to show the main result of this subsection.

\begin{figure}[t]
    \centering
\includegraphics[page=1]{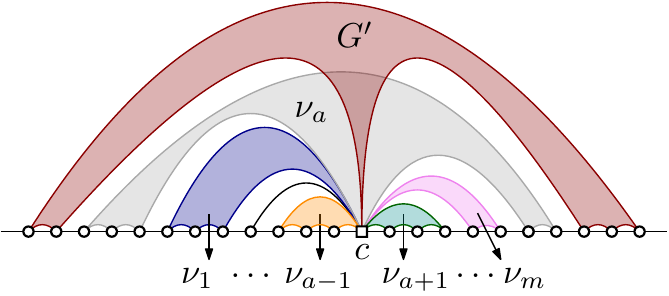}
    \caption{Illustration for the proof of \cref{thm:general}.  \label{fig:blocks}}
\end{figure}

\begin{restatable}[$\star$]{theorem}{thmGeneral}\label{thm:general}
Every upward outerplanar graph $G$ whose biconnected components are $st$-outerplanar graphs admits an $8$-page \ube.
\end{restatable}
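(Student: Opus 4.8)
The plan is to assemble the $8$-page \ube from $4$-page \ube{s} of the individual blocks and to combine them along the BC-tree of $G$. Each non-trivial block is a biconnected $st$-outerplanar graph, hence admits a $4$-page \ube by \cref{lem:stouterplanar} (trivial blocks are single edges). I would root the BC-tree at an arbitrary block and process the blocks top-down: each non-root block $\beta$ is attached to the already-embedded part at the single cut-vertex $c_\beta$ it shares with its parent block. Using the insertion/merging operations from the preliminaries, I would glue the $4$-page \ube of $\beta$ at $c_\beta$: if $c_\beta$ is the source (resp.\ sink) of $\beta$, the remaining vertices of $\beta$ are inserted into the gap right after (resp.\ before) $c_\beta$; if $c_\beta$ is internal to $\beta$, then $\beta$ \emph{straddles} $c_\beta$, and its sub-order below $c_\beta$ is inserted before $c_\beta$ while the sub-order above $c_\beta$ is inserted after it. Blocks attached at the same cut-vertex are nested in an onion-like fashion, with the straddling blocks placed outermost and the one-sided ones innermost.

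The key structural observation, and the reason $8 = 2\cdot 4$ pages suffice, is that with this nesting the only pairs of edges from distinct blocks that are forced to cross are those of two blocks that both straddle a common cut-vertex $c$: their arcs over $c$ interleave and must cross, whereas an edge of a block that is one-sided at $c$ can always be nested so that it either shares $c$ with, or lies nested inside an arc of, every other block meeting $c$. I would then define the \emph{conflict graph} $H$ on the blocks of $G$ by joining two blocks whenever they both straddle a common cut-vertex. By \cref{lem:bimodality}, every cut-vertex is straddled by at most two blocks, so each cut-vertex contributes at most one edge to $H$, and distinct edges of $H$ correspond to distinct cut-vertices.

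Next I would prove that $H$ is a forest, hence bipartite: a cycle $\beta_1,\dots,\beta_k,\beta_1$ in $H$ would use $k$ distinct cut-vertices $c_1,\dots,c_k$, so that $\beta_1,c_1,\beta_2,c_2,\dots,\beta_k,c_k,\beta_1$ would be a cycle in the BC-tree, contradicting that the BC-tree is a tree. A proper $2$-coloring of $H$ then partitions the blocks into two classes; I would assign the pages $\{1,2,3,4\}$ to one class and $\{5,6,7,8\}$ to the other, letting every block reuse its own $4$-page assignment inside the four pages of its class. Since every pair of crossing inter-block edges lies in two blocks joined by an edge of $H$, these blocks are in different classes and use disjoint page sets, so no two crossing edges share a page, while intra-block crossings are ruled out by \cref{lem:stouterplanar}. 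This yields the claimed $8$-page \ube.

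The main obstacle I expect is the placement argument certifying that blocks which do \emph{not} straddle a common cut-vertex never contribute a forced crossing. The delicate point is a block $\beta$ that straddles a cut-vertex $c$ which is the source or sink of its \emph{parent}: since $\beta$ and its parent are then not adjacent in $H$, they must not cross, which forces $\beta$ to be inserted so that its two intervals \emph{enclose} the parent's vertices incident to $c$ (i.e.\ $\beta$ wraps around the parent's contribution at $c$). Checking that such wrappings, together with the onion nesting of siblings, can be realized consistently by repeated applications of the insertion operation—so that every subtree of the BC-tree occupies a well-nested contiguous region and any two blocks that are distant in the BC-tree are separated—is the technical heart of the argument.
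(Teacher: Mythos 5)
Your page-accounting rests entirely on the claim that two blocks can be forced to cross only if both \emph{straddle} a common cut-vertex (i.e., are both internal at it), so that a proper $2$-coloring of the conflict forest $H$ suffices. That claim is false for every placement in which a block occupies intervals contiguous around its attachment vertex, and your proposed remedy (wrapping) undermines the rest of your argument. Concretely, let $\beta$ be a child block for which the cut-vertex $c$ is internal, and let $\mu$ be its parent with, say, $c$ a sink of $\mu$, so that $\beta$ and $\mu$ are non-adjacent in $H$ and may share a page class. Since $\beta$ is a biconnected $st$-DAG with $c$ internal, the boundary path of $\beta$ avoiding $c$ runs from $s_\beta$ (placed before $c$) to $t_\beta$ (placed after $c$), hence $\beta$ contains an edge $pq$ with $p<c<q$ and $p,q\neq c$. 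If $\beta$ is inserted into two intervals immediately around $c$, then every edge $xc$ of $\mu$ satisfies $x<p<c<q$, so $xc$ crosses $pq$: a straddling child always crosses its parent's edges at $c$, whether or not the parent straddles. Your wrapping fix---placing $\beta$'s intervals so that they enclose all neighbors of $c$ in $\mu$---removes this crossing, but it makes $\beta$'s edges span over a portion of the already-embedded graph that can contain vertices of blocks arbitrarily far from $\beta$ in the BC-tree; nothing in the $2$-coloring of $H$ prevents those blocks from lying in $\beta$'s own color class, and it also contradicts your stated requirement that every subtree occupy a well-nested contiguous region. You supply no invariant controlling what may lie inside a wrapped region, and you acknowledge this is unproven; it is not a deferred verification but the point where the argument fails.

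The paper never needs your dichotomy, and this is the essential difference. It argues by induction on the number of cut-vertices, processing a cut-vertex $c$ whose children blocks are all leaves, and maintains a \emph{stronger} invariant (page-separation): every block uses at most $4$ of the $8$ pages. All children are inserted contiguously around $c$, so the only inter-block crossings created involve edges incident to $c$, which belong to the parent $\mu$. Children for which $c$ is a source or sink create no crossings with the rest of the graph and simply reuse $\mu$'s four pages; the child for which $c$ is internal (there is at most one such child besides $\mu$ in the paper's main case, by \cref{lem:bimodality}) is assigned the four pages \emph{complementary} to $\mu$'s---irrespective of whether $\mu$ itself straddles $c$. In other words, the crossing between a straddling child and a non-straddling parent, which your scheme must eliminate globally, is simply permitted and paid for locally against the parent's known page set. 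If you wanted to repair your approach, you would have to add an $H$-edge from every straddling block to its parent, but then $H$ need not be bipartite with the required page budget, so the global coloring argument does not survive; the paper's local invariant is what makes the $8$-page bound go through.
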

\begin{proof}[Sketch]
 We prove a stronger statement. Let $T$ be a BC-tree of $G$ rooted at an arbitrary block $\rho$, then $G$ admits an $8$-page \ube $\langle \pi, \sigma \rangle$ that has the \emph{page-separation} property: For any block $\beta$ of $T$, the edges of $\beta$ are assigned to at most 4 different pages.  We proceed by induction on the number $h$ of cut-vertices in $G$. If $h=0$, then $G$ consists of a single block and the statement follows by \cref{lem:stouterplanar}. Otherwise, let $c$ be a cut-vertex whose children are all leaves. Let $\nu_1,\dots,\nu_m$ be the $m>1$ blocks representing the children of $c$ in $T$, and let $\mu$ be the parent block of $c$. Also, let $G'$ be the maximal subgraph of $G$ that contains $\mu$ but does not contain any vertex of $\nu_1,\dots,\nu_m$ except $c$, that is, $G=G'\cup \nu_1 \cup \dots \cup \nu_m$. By induction, $G'$ admits an $8$-page \ube $\langle \pi', \sigma' \rangle$ for which the page-separation property holds, as it contains at most $h-1$ cut-vertices. On the other hand, each $\nu_i$ admits a $4$-page \ube $\langle \pi^i, \sigma^i \rangle$ by \cref{lem:stouterplanar}. By \cref{lem:bimodality}, at most two blocks in $\{\mu\} \cup \{\nu_1,\dots,\nu_m\}$ are such that $c$ is internal.

Let us assume that there are exactly two such blocks and one of these two blocks is $\mu$, as otherwise the proof is just simpler. Also, let $\nu_a$, for some $1 \le a \le m$ be the other block for which $c$ is internal. Up to a renaming, we can assume that $\nu_1,\dots,\nu_{a-1}$ are $st$-outerplanar graphs with sink $c$, while $\nu_{a+1},\dots,\nu_m$ are $st$-outerplanar graphs with source $c$. Refer to \cref{fig:blocks}. Crucially, we set: 

$$
\pi=\pi'_{c^-} \cup \pi^a_{c^-} \cup \pi^1_{c^-} \cup \dots \cup \pi^{a-1}_{c^-} \cup \{c\} \cup \pi^{a+1}_{c^+} \cup \dots \cup \pi^{m}_{c^+} \cup \pi^a_{c^+} \cup \pi'_{c^+}.    
$$

The page assignment is based on the fact that $e$ and $e'$, such that $e \in \nu_i$ and $e' \in G'$, cross each other only if $i=a$ and in such a case $e'$ is incident to $c$.
\end{proof}
\newcommand{\thmGeneralProof}{
\begin{proof}
 We prove a stronger statement. Let $T$ be a BC-tree of $G$ rooted at an arbitrary block $\rho$, then $G$ admits an $8$-page \ube $\langle \pi, \sigma \rangle$ that has the \emph{page-separation} property: For any block $\beta$ of $T$, the edges of $\beta$ are assigned to at most 4 different pages.  We proceed by induction on the number $h$ of cut-vertices in $G$.

If $h=0$, then $G$ consists of a single block which is an $st$-outerplanar graph and the statement follows by \cref{lem:stouterplanar}. 

Otherwise, let $c$ be a cut-vertex whose children are all leaves. Let $\nu_1,\dots,\nu_m$ be the $m>1$ blocks representing the children of $c$ in $T$, and let $\mu$ be the parent block of $c$. Also, let $G'$ be the maximal subgraph of $G$ that contains $\mu$ but does not contain any vertex of $\nu_1,\dots,\nu_m$ except $c$, that is, $G=G'\cup \nu_1 \cup \dots \cup \nu_m$. By induction, $G'$ admits an $8$-page \ube $\langle \pi', \sigma' \rangle$ for which the page-separation property holds, as it contains at most $h-1$ cut-vertices. On the other hand, each $\nu_i$ admits a $4$-page \ube $\langle \pi^i, \sigma^i \rangle$ by \cref{lem:stouterplanar}. By \cref{lem:bimodality}, at most two blocks in $\{\mu\} \cup \{\nu_1,\dots,\nu_m\}$ are such that $c$ is internal. 

Let us assume that there are exactly two such blocks and one these two blocks is $\mu$, as otherwise the proof is just simpler. Also, let $\nu_a$, for some $1 \le a \le m$ be the other block for which $c$ is internal. Up to a renaming, we can assume that $\nu_1,\dots,\nu_{a-1}$ are $st$-outerplanar graphs with sink $c$, while $\nu_{a+1},\dots,\nu_m$ are $st$-outerplanar graphs with source $c$. Refer to \cref{fig:blocks}. We set: $$\pi=\pi'_{c^-} \cup \pi^a_{c^-} \cup \pi^1_{c^-} \cup \dots \cup \pi^{a-1}_{c^-} \cup \{c\} \cup \pi^{a+1}_{c^+} \cup \dots \cup \pi^{m}_{c^+} \cup \pi^a_{c^+} \cup \pi'_{c^+}.$$ 
Observe that $\pi$ extends each $\pi^i$, as well as $\pi'$. 
Then one easily verifies that no two edges $e$ and $e'$ such that $e \in \nu_i$ and $e' \in \nu_j$, for any $i \neq j$, cross each other in $\pi$. Similarly,  two edges $e$ and $e'$ such that $e \in \nu_i$ and $e' \in G'$, cross each other only if $i=a$ and in such a case $e'$ is incident to $c$. Concerning the page assignment, we let $\sigma(e) = \sigma'(e)$ for each edge $e$ of $G'$. Also, we let $\sigma(e)=\sigma^i(e)$ for each edge $e$ of $\nu_i$, except the edges of $\nu^a$, assuming that the set of pages of $\sigma^i$ is the same (up to a renaming) as the one used in $\sigma'$ for the edges of $\mu$  (whose size is $4$ by the page-separation property).  For the edges of $\nu_a$, we set $\sigma(e)=\sigma^a(e)$, assuming that the set of pages of $\sigma^a$ is the same (up to a renaming) as the one in $\sigma'$ {\em not} used for the edges of $\mu$ (whose size is again four). As we already observed, if two edges $e$ and $e'$ such that $e \in \nu_a$ and $e' \in G'$ cross each other, then $e'$ is incident to $c$ and hence belongs to $\mu$. Then our assignment guarantees $\sigma(e) \neq \sigma(e')$. Also, the computed $8$-page \ube respects the page-separation property.
\end{proof}}

\subsection{Upward cactus graphs}\label{sse:cactus}

\begin{figure}[tb!]
    \centering
    \subfigure[]{
    \centering
    \includegraphics[page=1]{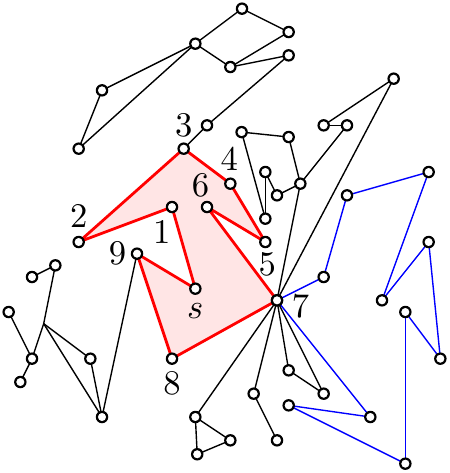}\label{fig:cactus}
    }
    \hfil
    \subfigure[]{
    \includegraphics[page=2]{figs/cactus.pdf}\label{fig:cycle}
    }
    \caption{(a) A cactus $G$ and (b) a 2-page \ube of the red non-trivial block of $G$.}
\end{figure}

The first lemma allows us to consider cactus graphs with no trivial blocks. 

\begin{restatable}[$\star$]{lemma}{lemNoTrivial}\label{lem:no-trivial}A cactus $G'$ can always be augmented to a cactus $G$ with no trivial blocks and such that the embedding of $G'$ is maintained.
\end{restatable}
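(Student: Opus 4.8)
The plan is to remove the trivial blocks of $G'$ one at a time, turning each of them into a triangle by inserting a single new vertex. Recall that the trivial blocks of a cactus are exactly its bridges. Fix an upward outerplanar embedding $\cal E'$ of $G'$. Since removing a bridge $uv$ (oriented $u \to v$ by the convention of the paper) disconnects its two sides, such an edge is incident to a single face, which by outerplanarity must be the outer face; hence $uv$ lies on the outer boundary of $\cal E'$. For each bridge I would introduce a new vertex $w$, placed in the outer face immediately alongside $uv$ and at a height strictly between those of $u$ and $v$, together with the two edges $u \to w$ and $w \to v$. This replaces the bridge $uv$ by the $3$-cycle on $\{u,v,w\}$, whose source is $u$ and whose sink is $v$, and leaves every other vertex and edge of $G'$ untouched. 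I would apply this operation to every bridge of $G'$; the new edges $uw,wv$ lie on a cycle and are therefore never bridges, so only the original trivial blocks are processed.

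Next I would check that the resulting graph $G$ is again a cactus with no trivial blocks. Because $uv$ was a bridge, no simple cycle of $G'$ contains both $u$ and $v$, so the new triangle meets any cycle through $u$ in the single vertex $u$, meets any cycle through $v$ in the single vertex $v$, and shares nothing else (its third vertex $w$ is new and has degree two). Two triangles built at two distinct bridges share at most one endpoint, since two distinct edges of a simple graph share at most one vertex. Consequently any two simple cycles of $G$ still meet in at most one vertex, so $G$ is a cactus, and every former bridge is now an edge of a $3$-cycle, so all blocks of $G$ are non-trivial.

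Finally I would argue that $\cal E'$ is maintained, i.e.\ that it is the restriction of an upward outerplanar embedding $\cal E$ of $G$. Inserting $w$ in the outer face next to $uv$ preserves the cyclic order of the original edges around every original vertex and keeps $u$, $v$, and $w$ on the outer boundary: the outer walk simply detours through $w$, and $uv$ becomes a chord bounding the new triangular inner face. Choosing the height of $w$ between $u$ and $v$ makes both new edges $y$-monotone. Moreover, inserting $u \to w$ next to the outgoing edge $u \to v$ in the rotation at $u$, and $w \to v$ next to the incoming edge $u \to v$ at $v$, keeps the incoming and outgoing edges contiguous at every vertex, so bimodality, and hence upward planarity, is preserved.

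The construction itself is essentially immediate; the only point requiring real care is the last one, namely verifying that the insertion of $w$ can always be performed so that outerplanarity and upwardness (bimodality together with $y$-monotonicity) are simultaneously preserved, which is where I would spend most of the argument.
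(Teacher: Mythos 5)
Your proposal is correct and follows essentially the same route as the paper: each trivial block (bridge) $uv$ is replaced by a triangle via a new vertex $w$ placed next to $uv$ at a height strictly between $u$ and $v$, with $y$-monotone edges $uw$ and $wv$, repeated until no trivial blocks remain. The paper states the preservation of the cactus property and of the upward outerplanar embedding as immediate, whereas you verify these points explicitly; this is a difference in level of detail, not in approach.
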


\newcommand{\lemNoTrivialProof}{
\begin{proof}
If the blocks of $G'$ are cycles, then there is nothing to be done. Otherwise, $G'$ contains a trivial block $\beta$ corresponding to the directed edge $uv$. 
We show how to extend $G'$ to a cactus $G''$, whose number of trivial blocks is one less than the number of trivial blocks of $G'$. For simplicity, we construct $G''$, by showing how to extend an upward outerplanar drawing $\Gamma'$ of $G'$ to an upward outerplanar drawing $\Gamma''$ of $G''$.
First, we initialize $\Gamma''= \Gamma'$. 
Second, we place in $\Gamma''$ a new vertex $w$ above $u$ and below $v$ and arbitrarily close to the drawing of the edge $uv$. Finally, we draw two new edges $uw$ and $wv$ as $y$-monotone curves so that $uv$, $uw$ and $wv$ form a face of the drawing. Clearly, $\Gamma''$ is an upward outerplanar drawing of $G''$, $G''$ is a cactus, and $G''$ contain one less trivial block of $G'$. By repeating the process until no trivial blocks exist, we eventually obtain $G$.\end{proof}
}

\noindent It is well known that any DAG whose underlying graph is a cycle admits a $2$-page \ube~\cite[Lemma~2.2]{HPT99a}. We can show a slightly stronger result, which will prove useful afterward; see \cref{fig:cycle}.

\begin{restatable}[$\star$]{lemma}{lemCycle}\label{lem:cycle}Let $G$ be a DAG whose underlying graph is a cycle and let $s$ be a source (resp. let $t$ be a sink) of $G$. Then, $G$ admits a $2$-page \ube $\langle \pi, \sigma \rangle$ where $s$ is the first vertex (resp.\ $t$ is the last vertex) in $\pi$.\end{restatable}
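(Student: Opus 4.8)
The plan is to prove \cref{lem:cycle} by an explicit construction that places the chosen source $s$ first (the sink case follows by a symmetric argument, or by reversing all edge orientations and applying the source case). Let the underlying cycle be $v_0, v_1, \dots, v_{n-1}, v_0$, and relabel so that $s = v_0$. Since $s$ is a source, both cycle edges incident to $s$, namely $v_0v_1$ and $v_0 v_{n-1}$, are directed away from $s$. Walking along the cycle in each of the two directions starting from $s$, the orientations must eventually reverse (because $G$ is acyclic, not a directed cycle), so each of the two arcs from $s$ is a monotone-then-possibly-alternating directed path. My first step is to identify how sources and sinks are distributed around the cycle: a DAG on a cycle has an equal number $p \ge 1$ of sources and sinks, alternating around the cycle, with $s$ being one of the sources.

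The core idea is to produce a topological ordering $\pi$ in which $s$ comes first, and then to $2$-color the edges so that no two same-colored edges nest improperly. First I would fix $\pi$ to be any topological order of $V$ that begins with $s$; such an order exists precisely because $s$ is a source, so $s$ can always be taken as the first vertex of a topological sort. This immediately guarantees property that $s$ is first in $\pi$ and that $\pi$ is a valid upward ordering (every edge points forward). The remaining and only real content is the page assignment $\sigma$ into two pages so that no two edges on the same page cross.

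For the page assignment I would exploit the cycle structure directly. Think of the cycle as two directed-path arcs joining $s$ to the rest of the graph; more robustly, observe that a Hamiltonian cycle drawn with vertices in the order $\pi$ on the spine partitions its $n$ edges so that each edge is a chord of the spine, and a cycle has a natural $2$-page book embedding where one page carries the edges that ``go around the top'' and the other those that ``go around the bottom.'' Concretely, I would scan the edges of the cycle in spine order and assign pages greedily/alternately so that the two pages correspond to the two sides one gets by cutting the Hamiltonian cycle; since the underlying graph is a single cycle, the chords on each page are laminar (non-crossing), which is the standard fact that any cycle has book thickness $2$ (cf.\ \cite[Lemma~2.2]{HPT99a}). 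The only adaptation over the cited lemma is that we must respect the additional constraint $\pi(s)=1$; I would verify that the $2$-page assignment produced by this cut is compatible with \emph{any} topological order beginning at $s$, not just the one used in the original lemma.

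The main obstacle I anticipate is precisely this compatibility check: the cited result gives \emph{some} $2$-page \ube, but we need one whose spine order has $s$ first, and we must confirm that forcing $s$ to the front does not create an unavoidable same-page crossing. To handle this I would argue structurally rather than rely on the black-box lemma. Since $s$ is a source placed first, both edges at $s$ emanate rightward and are the ``outermost'' chords from the left endpoint; I would then walk around the cycle from $s$ in both directions simultaneously, assigning the arc reached by going clockwise to page $1$ and the arc reached by going counterclockwise to page $2$, up to the point where the two walks meet. Because each arc is a simple path whose vertices appear in $\pi$ in an order consistent with the cut, the chords within a single arc form a nested (non-crossing) family on their page, and chords from different arcs sit on different pages and hence cannot conflict. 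This gives the required $2$-page \ube with $s$ first; reversing all orientations and applying the statement to the resulting source yields the sink version with $t$ last.
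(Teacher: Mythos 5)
Your proposal has a genuine gap: you decouple the choice of $\pi$ from the page assignment, and your key claim --- that for \emph{any} topological order beginning with $s$ the edges of each of the two arcs emanating from $s$ form a non-crossing family --- is false. Concretely, take the $5$-cycle with vertices $s,a,b,c,d$ in cyclic order and orientations $s\to a$, $b\to a$, $b\to c$, $d\to c$, $s\to d$ (so $s$ is a source), and the topological order $\pi=\langle s,b,a,d,c\rangle$, which begins with $s$. The clockwise arc from $s$ contains both $sa$ and $bc$, and these cross, since $\pi(s)<\pi(b)<\pi(a)<\pi(c)$; if instead the edge $bc$ ends up on the counterclockwise arc's page, it crosses $sd$ there, since $\pi(s)<\pi(b)<\pi(d)<\pi(c)$. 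So no matter where the ``meeting point'' of your two walks is placed, one page contains a crossing. (This $\pi$ does admit a valid $2$-page assignment, e.g.\ $\{sa,sd,ab,dc\}$ on one page and $\{bc\}$ on the other, but it is not of the one-arc-per-page form; so the failure is in the method, not just in this instance being infeasible.) The lesson is that the spine order and the page assignment must be constructed together; an arbitrary topological order starting at $s$ need not be compatible with the arc-based coloring.

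The paper's proof does exactly this coupling, and is considerably simpler: delete one of the two edges $sw$ incident to $s$. What remains is a directed path, i.e.\ a directed tree, which admits a $1$-page \ube whose spine order has $s$ as its first vertex~\cite{HPT99a}; take that order as $\pi$, keep all the path edges on page~$1$, and place the single removed edge $sw$ on page~$2$, where it can cross nothing. The sink case is symmetric, as in your outline. Your high-level plan (source case plus reversal, two pages via the cycle structure) is reasonable, but the core step needs an order dictated by a $1$-page embedding of the cycle minus one edge at $s$ --- not an arbitrary topological order followed by a fixed arc-based coloring.
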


\newcommand{\lemCycleProof}{
\begin{proof}
Let $sw$ be any of the two edges incident to~$s$. Let $G'$ be the path $G - sw$. Since $G'$ is a tree, it admits a $1$-page \ube $\langle \pi', \sigma' \rangle$ in which $s$ is the first vertex of $\pi'$; see \cref{fig:cycle}. We obtain a $2$-page \ube $\langle \pi, \sigma \rangle$ of $G$, by setting $\pi = \pi'$, $\sigma(e)=\sigma'(e)=1$ for each edge $e \neq sw$, and $\sigma(sw)=2$.
A symmetric argument can be used to prove the statement with respect to the vertex $t$.
\end{proof}}

\noindent
Using the proof strategy of \cref{thm:general}, we can exploit \cref{lem:cycle} to show: %

\begin{restatable}[$\star$]{theorem}{thmMainCactus}\label{thm:MainCactus}
Every upward outerplanar cactus $G$ admits a $6$-page \ube.
\end{restatable}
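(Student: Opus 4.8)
The plan is to mimic the inductive argument used for \cref{thm:general}, but specialized to cactus graphs, where the non-trivial blocks are cycles rather than general $st$-outerplanar graphs. By \cref{lem:no-trivial} we may first augment $G$ to a cactus with no trivial blocks, so every block is a cycle; this does not decrease the UBT, so a $6$-page \ube of the augmented graph yields one for $G$. We then root the BC-tree at an arbitrary block $\rho$ and prove by induction on the number $h$ of cut-vertices the stronger statement that $G$ admits a $6$-page \ube with a page-separation property analogous to the one in \cref{thm:general}: for every block $\beta$, the edges of $\beta$ lie on at most $2$ pages. The base case $h=0$ is a single cycle, which has a $2$-page \ube by \cref{lem:cycle}.

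For the inductive step I would pick a cut-vertex $c$ all of whose children in $T$ are leaves, with parent block $\mu$ and child blocks $\nu_1,\dots,\nu_m$, and let $G'$ be the subgraph obtained by deleting all the $\nu_i$ except $c$. By induction $G'$ has a $6$-page \ube $\langle \pi',\sigma'\rangle$ with the page-separation property, and each cycle $\nu_i$ has a $2$-page \ube. The crucial structural input is \cref{lem:bimodality}: at most two of the blocks in $\{\mu\}\cup\{\nu_1,\dots,\nu_m\}$ have $c$ as an internal vertex. Exactly as in \cref{thm:general}, I would assume the worst case that there are two such blocks, one being $\mu$ and the other some child $\nu_a$, and split the remaining children into those with sink $c$ and those with source $c$. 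Here \cref{lem:cycle} does the real work: because I can freely choose which source (resp.\ sink) of a child cycle is placed first (resp.\ last) in its order, I can make $c$ the first or last vertex of each $\pi^i$ as required, so that the children nest cleanly around $c$ in $\pi$ with the same interleaving pattern as in the proof of \cref{thm:general}.

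The page assignment then follows the same bookkeeping: edges of distinct children $\nu_i,\nu_j$ do not cross, and an edge of a child crosses an edge of $G'$ only when $i=a$ and the $G'$-edge is incident to $c$ (hence belongs to $\mu$). So I set $\sigma$ to agree with $\sigma'$ on $G'$; for every child $\nu_i$ with $i\neq a$ I reuse (up to renaming) the two pages that $\sigma'$ devotes to $\mu$, which is legitimate since those children's edges never cross $\mu$; and for $\nu_a$ I use two pages of $\sigma'$ \emph{not} used by $\mu$. Because $\mu$ occupies at most $2$ pages by the page-separation hypothesis, two free pages remain among the $6$, so this is well defined and the page-separation property is preserved. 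The total page budget is therefore $6$ rather than $8$, the savings coming precisely from cycles needing only $2$ pages instead of the $4$ required by general $st$-outerplanar blocks.

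I expect the main obstacle to be the case analysis around the cut-vertex, specifically verifying that \cref{lem:cycle} is strong enough to realize the nesting order dictated by the bimodal embedding at $c$: one must check that for each child cycle the role of $c$ (whether $c$ must appear first or last in that child's suborder) is compatible with $c$ being designated as a source or a sink of that cycle, and that the two blocks for which $c$ is internal can be handled by placing their two ``halves'' on opposite sides of $c$ in $\pi$ without forcing a third page onto any single cycle. Once the ordering is fixed so that crossings only occur between an edge of $\nu_a$ and a $\mu$-edge incident to $c$, the page argument is routine and mirrors \cref{thm:general} verbatim.
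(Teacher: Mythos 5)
Your plan follows the paper's proof quite closely (augmentation via \cref{lem:no-trivial}, rooted BC-tree, induction with a page-separation invariant of two pages per block, \cref{lem:cycle} for the child cycles, and nesting of the children around the cut-vertex), but it has a genuine gap in the case analysis at the cut-vertex $c$. \cref{lem:bimodality} bounds by two the number of blocks incident to $c$ for which $c$ is internal, but it does not say that one of them is the parent $\mu$: the two internal blocks can both be \emph{children}, say $\nu_a$ and $\nu_b$, with $c$ a source or a sink of $\mu$. You explicitly restrict attention to the configuration ``$\mu$ and one child $\nu_a$ internal'' and call it the worst case; in fact the other configuration is the hard one, and it is exactly the paper's \textsf{CASE A} with $r=2$. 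There, both $\nu_a$ and $\nu_b$ must have vertices on both sides of $c$ in $\pi$, so in any nesting of their halves around $c$ the following crossings are unavoidable: each of the two children has a spanning edge (an edge of the cycle minus $c$ joining a vertex before $c$ to a vertex after $c$; such an edge exists because the cycle minus $c$ is a path from $c$'s in-neighbor to $c$'s out-neighbor), these spanning edges cross the $\mu$-edges incident to $c$, and the inner child's spanning edges cross the outer child's edge entering $c$. Hence $\mu$, $\nu_a$ and $\nu_b$ need three pairwise disjoint pairs of pages, which is precisely why the bound is $6$ (the paper uses pages $\{1,2\}$, $\{3,4\}$, $\{5,6\}$). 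Your assignment rule instead treats $\nu_b$ as an ordinary ``$i\neq a$'' child and reuses $\mu$'s two pages for it, which produces same-page crossings with the $\mu$-edges at $c$; moreover, your ordering only provides slots for children whose vertices lie entirely before or entirely after $c$, so a second child straddling $c$ does not even fit into it.

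A quick sanity check exposes the omission: if your case analysis were exhaustive, your induction would establish a $4$-page bound, since at every step you would need only two pages for $\mu$ together with all non-internal children, plus two pages disjoint from $\mu$'s for the single internal child. The stated bound is $6$ precisely because of the two-internal-children case. The repair is what the paper does: first insert the children having $c$ as a source or sink on $\mu$'s own two pages; then, if $c$ is internal for $\mu$ (so at most one child is internal), give that child a fresh pair of pages; otherwise $c$ is a source or sink of $\mu$, up to two children are internal at $c$, and you nest them one inside the other around $c$, assigning them two fresh, mutually disjoint pairs of pages.
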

\begin{proof}[Sketch]
By \cref{lem:no-trivial}, we can assume that all the blocks of $G$ are non-trivial, i.e., correspond to cycles.
Also, let
$T$ be the BC-tree of $G$ rooted at any block. The theorem can be proved by induction on the number of blocks in $T$. In fact, we prove the following slightly stronger statement: 
$G$ admits a $6$-page \ube in which the edges of each block lie on at most two  pages. The proof crucially relies on \cref{lem:bimodality} and follows the lines of \cref{thm:general}.\end{proof}
\newcommand{\thmMainCactusProof}{\begin{proof}
By \cref{lem:no-trivial}, we can assume that all the blocks of $G$ are non-trivial, i.e., correspond to cycles.
Also, let
$T$ be the BC-tree of $G$ rooted at any block $\rho$. We prove the theorem by induction on the number $\beta(T)$ of blocks in $T$. In fact, we prove the following slightly stronger statement: 
$G$ admits a $6$-page \ube in which the edges of each block lie on at most two pages.

If $\beta(T)=1$, then $G$ consists of a single directed cycle, and the statement follows from \cite[Lemma~2.2]{HPT99a}.

If $\beta(T)>1$, then $T$ contains at least one cut-vertex $c$ whose children blocks $\nu_1,\dots,\nu_k$ are all leaves of $T$. Let $\mu$ be the parent block of $c$.
Consider the subgraph of $G'$ induced by all the edges of the blocks of $G$ different from $\nu_1,\dots,\nu_k$.  Note that, 
by induction, $G'$ admits a $6$-page \ube $\langle \pi', \sigma' \rangle$ in which the edges of $\mu$ lie on two distinct pages, say pages $1$ and $2$.

We show how to extend $\langle \pi',\sigma' \rangle$ to a $6$-page book embedding $\langle \pi,\sigma \rangle$ of $G$ in which the edges of each block $\nu_i$, for $i=1,\dots,k$, lie on two distinct pages, namely, either pages $1$ and $2$, or pages $3$ and $4$, or pages $5$ and $6$.

Recall that, by \cref{lem:bimodality}, there exist at most two blocks among $\nu_1,\dots,\nu_k$ for which $c$ is internal. %
After a possible renaming of the indexes, we may assume that
\begin{inparaenum}[(i)]
\item  $\nu_1$ and $\nu_2$ are the blocks, if any, for which $c$ is internal; let $0 \leq r \leq 2$ be the number of such children;
\item
$\nu_{r+1},\dots,\nu_h$ with $h \leq k$ are the blocks, if any, for which $c$ is a sink; and
\item $\nu_{h+1},\dots,\nu_k$ are the blocks, if any, for which $c$ is a source.
\end{inparaenum}

\begin{figure}[htb!]
    \centering
    \includegraphics[page=3]{figs/cactus.pdf}
    \caption{Construction of $\langle \pi'', \sigma'' \rangle$ from $\langle \pi', \sigma' \rangle$; edges incident to $c$ lie~on~two~pages.}\label{fig:monotonic}
\end{figure}
Consider the graph $G''$ induced by the edges of $G'$ and of the blocks $\nu_i$ with ${r+1} \leq i\leq k$.
First, we show how to extend $\langle \pi',\sigma' \rangle$ to a $6$-page \ube $\langle \pi'',\sigma'' \rangle$ of $G''$ in 
which the edges of the blocks $\nu_i$ with $r+1 \leq i\leq k$ lie on pages $1$ and $2$.
Refer to \cref{fig:monotonic}.
Let $\pi'= \pi'_{c^-} \circ \langle c \rangle \circ  \pi'_{c^+}$, where either $\pi'_{c^-}$  or $\pi'_{c^+}$ may be empty (recall that each block of $G$ is a cycle, and thus it contains at least three vertices).
For each $i=r+1,\dots,h$ (resp.\ $i=h+1,\dots,k$), construct a $2$-page \ube $\langle \pi^i, \sigma^i \rangle$ of $\nu_i$ on pages $1$ and $2$ in which $c$ is the last vertex (resp.\ the first vertex) of $\pi$, by \cref{lem:cycle}
We obtain $\pi''$ as follows. 
We set $\pi'' = \pi'_{c^-} \circ \pi^{r+1}_{c^-} \circ \dots \circ \pi^h_{c^-} \circ \langle c \rangle \circ \pi^{h+1}_{c^+} \circ \cdots \circ \pi^{k}_{c^+} \circ \pi'_{c^+}$. 
Further, we obtain $\sigma''$ by setting $\sigma''(e)=\sigma'(e)$, for any edge of $G'$, and by setting $\sigma''(e)=\sigma_i(e)$, for any edge of $\nu_i$ with ${r+1} \leq i \leq k$. 
By construction, no two edges belonging to distinct blocks among $\nu_{r+1},\dots,\nu_k$ intersect. Therefore, since the edges these blocks lie on the same two pages the edges of the block $\mu$ lie on, namely, pages $1$ and $2$, we have that
$\langle \pi'', \sigma'' \rangle$ is a $6$-page \ube with~the~desired~properties.

If $r=0$, by setting $\pi = \pi''$ and $\sigma = \sigma''$, we obtain the desired $6$-page book embedding of $G$. Next, we show how to obtain $\langle \pi,\sigma \rangle$ from $\langle \pi'',\sigma'' \rangle$, when $r>0$.
We distinguish two cases, based on whether ({\sf CASE A}) $c$ is a source or a sink of $\mu$, or  ({\sf CASE B}) $c$ is internal for $\mu$. Let $\pi'' = \pi''_{c^-} \circ \langle c \rangle \circ \pi''_{c^+}$. Also, for $i \leq r$, we construct a $2$-page \ube $\langle \pi_i, \sigma_i \rangle$ of $\nu_i$ on pages $2i+1$ and $2i+2$, by means of \cref{lem:cycle}, and let $\pi^i = \pi^i_{c^-} \circ \langle c \rangle \circ \pi^i_{c^+}$.

\begin{itemize}
    \item[{\sf CASE A}] Suppose that $c$ is a source of $\mu$; the case in which $c$ is a sink of $\mu$ being analogous. Then, $1 \leq r \leq 2$. We assume $r=2$, the case $r=1$ being simpler. Note that, 
    the vertex $c$ may be 
    internal for $\mu \cup \nu_3 \cup \dots \cup \nu_k$, as illustrated by the gray shaded region depicting this subgraph in \cref{fig:r2}.
We set $\pi=\pi''_{c^-} \circ \pi^2_{c^-}  \circ \pi^1_{c^-} \circ \{c\}  \circ \pi^1_{c^+} \circ \pi^2_{c^+} \circ \pi''_{c^+}$. Further, we obtain $\sigma$ by setting $\sigma(e)=\sigma''(e)$, for any edge of $G''$, and by setting $\sigma''(e)=\sigma_i(e)$, for any edge of $\nu_i$ with $i \in \{1,2\}$.
By construction, the edges of $\nu_1$ (resp.\ of $\nu_2$) may only intersect the edges of $\nu_{2}$ (resp.\ $\nu_1$) and the edges of $\mu \cup \nu_{3} \cup \dots \cup \nu_k$.
Therefore, since the edges of $\mu \cup \nu_{3} \cup \dots \cup \nu_k$ lie on pages $1$ and $2$, the edges of $\nu_1$ lie on pages $3$ and $4$, and the edges of $\nu_2$ lie on pages $5$ and $6$, we have that
$\langle \pi, \sigma \rangle$ is a $6$-page \ube with the desired properties.
\begin{figure}[htb!]
    \centering
    \includegraphics[page=4]{figs/cactus.pdf}
    \caption{Construction of $\langle \pi, \sigma \rangle$ from $\langle \pi'', \sigma'' \rangle$ focused on $\mu$ and~its~children ($r=2$).}\label{fig:r2}
\end{figure}
\item[{\sf CASE B}] 
Suppose now that $c$ is  internal for $\mu$. Then, $r = 1$. 
First, we construct a $2$-page \ube $\langle \pi_1, \sigma_1 \rangle$ of $\nu_1$ on pages $3$ and $4$, by means of \cref{lem:cycle}, and let $\pi_1 = \pi^1_{c^-} \circ \langle c \rangle \circ \pi^1_{c^+}$.
Then, we set $\pi= \pi''_{c^-} \circ \pi^1_{c^-} \circ \langle c \rangle  \circ \pi^1_{c^+} \circ \pi''_{c^+}$. Further, we obtain $\sigma$ by setting $\sigma(e)=\sigma''(e)$, for any edge of $G''$, and by setting $\sigma(e)=\sigma_1(e)$, for any edge of $\nu_1$.
By construction, the edges of $\nu_1$ may only intersect the edges of $\mu \cup \nu_{2} \cup \dots \cup \nu_k$.
Thereofore, since the edges of $\mu \cup \nu_{2} \cup \dots \cup \nu_k$ lie on pages $1$ and $2$ and the edges of $\nu_1$ lie on pages $3$ and $4$ lie on distinct pairs of pages, we have that
$\langle \pi, \sigma \rangle$ is a $6$-page \ube with the desired properties.
\end{itemize}
\noindent This concludes the proof of the theorem.
\end{proof}}

\section{Complexity Results}\label{se:complexity}

Recall that the upward book thickness problem is \NP-hard for any fixed $k \geq 3$~\cite{DBLP:conf/compgeom/BinucciLGDMP19}. This implies that the problem is para-\NP-hard, and thus it belongs neither to the \textsf{FPT} class nor to the \XP\ class, when parameterized by its natural parameter (unless \P=\NP). In this section, we investigate the parameterized complexity of the problem with respect to the domination number and the vertex cover number, showing a lower and an upper bound, respectively.

\subsection{Hardness Result for Graphs of Bounded Domination Number}\label{sse:hardness}

A \emph{domination set} for a graph $G=(V,E)$ is a subset  $D \subseteq V$ such that every vertex in $V \setminus D$ has {\em at least one} neighbor in $D$. The \emph{domination number} $\gamma(G)$ of $G$ is the number of vertices in a smallest dominating set for $G$. Given a DAG $G$ such that $\ubt{G} \leq k$, one may consider the trivial reduction obtained by considering the DAG $G'$ obtained from $G$ by introducing a new super-source (connected to all the vertices), which has domination number $1$ and for which it clearly holds that $\ubt{G'} \leq k+1$. However, the other direction of this reduction is not obvious, and indeed for this to work we show a more elaborated~construction.

\begin{restatable}[$\star$]{theorem}{lemReduction}\label{lem:reduction}
Let $G$ be an $n$-vertex DAG and let $k$ be a positive integer. It is possible to construct in $O(n)$ time an $st$-DAG $G'$  with $\gamma(G') \in O(k)$ such that $\ubt{G} \leq k$ if and only if  $\ubt{G'}=k+2$. 
\end{restatable}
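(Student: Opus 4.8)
The plan is to engineer $G'$ so that $\ubt{G'} = \max(\ubt{G}+2,\,k+2)$; since $\max(\ubt{G}+2,k+2)=k+2$ holds precisely when $\ubt{G}\le k$, this formula yields the claimed equivalence in one shot. I would assemble $G'$ from three parts. First, a super-source $s$ and a super-sink $t$ with $s\to v$ and $v\to t$ for every original vertex $v$, making $G'$ an $st$-DAG and giving a dominating set of size $O(1)$ for $V(G)$ (so $\gamma(G')\in O(k)$ once the remaining gadget adds only $O(k)$ vertices). Second, a \emph{rigid amplifier} $W$ on $O(k)$ vertices whose linear order is pinned down by a directed spanning path $w_1\to\cdots\to w_{2(k+2)}$ and which carries the $k+2$ chords $w_i\to w_{i+(k+2)}$; these chords pairwise interleave in the forced order, so $W$ alone forces at least $k+2$ pages. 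Third, a wiring of $s$, $t$, and $W$ into the layout that \emph{reserves} the two pages carrying the $s$- and $t$-incident edges, keeping them disjoint from the pages usable by the $G$-part.

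For the easy implication ($\ubt{G}\le k \Rightarrow \ubt{G'}\le k+2$) I would take a $k$-page \ube $\langle\pi,\sigma\rangle$ of $G$, prepend $s$ and append $t$, place the amplifier $W$ in a spine interval disjoint from that of $G$, and route all $s$-edges on one fresh page and all $t$-edges on a second fresh page. Edges incident to a common endpoint $s$ (resp.\ $t$) are mutually nested, hence crossing-free on their page; and because $W$ sits in a disjoint interval, its $k+2$ chords can reuse the full palette without clashing with the $G$-edges. Combining this with the unconditional lower bound $\ubt{G'}\ge k+2$ supplied by $W$ gives $\ubt{G'}=k+2$ exactly when $\ubt{G}\le k$, provided the converse lower bound also holds.

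The real content is the converse, equivalently the lower bound $\ubt{G'}\ge \ubt{G}+2$: I must show that in \emph{every} \ube of $G'$ at least two pages carry no $G$-edge, so that deleting $s$, $t$, and the amplifier from a $(k+2)$-page \ube of $G'$ leaves a $k$-page \ube of $G$. The clean obstruction to watch is that a super-source edge $sv$ crosses a $G$-edge $uw$ only when $\pi(u)<\pi(v)<\pi(w)$, so a $G$-edge joining two \emph{consecutive} vertices escapes all such crossings and could illegitimately occupy a reserved page; the symmetric issue arises at $t$. Closing this loophole for an adversarial linear order is exactly what forces the elaborate construction: I would use the rigidity of $W$ (together with portal edges tying $s$ and $t$ to fixed positions of $W$) to guarantee that each of the two reserved pages is spanned by gadget edges crossing every possible placement of a $G$-edge on it, independently of how the adversary orders $V(G)$. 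Verifying this page-separation in full generality, while simultaneously keeping the amplifier compatible with the easy-direction layout, is the main hurdle; the residual checks—acyclicity, the single-source/single-sink property, the $O(n)$ construction time, and that $W$ needs exactly $k+2$ and not more pages—are routine.
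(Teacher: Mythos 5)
Your overall architecture matches the paper's: a rigidity gadget whose vertex order is frozen by a directed Hamiltonian path, a family of $k+2$ pairwise-crossing chords forcing the lower bound $\ubt{G'}\ge k+2$, two hub vertices joined to every vertex of $G$, and a disjoint-interval, reuse-the-palette layout for the easy direction. But the theorem's entire content is the converse, and there your text is a plan rather than a proof: you never specify the ``wiring'' of $s$, $t$ and $W$, and you explicitly defer the page-separation claim (``the main hurdle'') that every \ube of $G'$ keeps two pages free of $G$-edges. In the paper this is exactly the part that requires the concrete two-sided gadget $H$ (the paths $p_1,p_2$, the crossing families $E_1,E_2$, and the edges $ae$, $bw_1$, $dh$, $v_kg$): one first proves that all of $V(G)$ is forced into the interval between $v_k$ and $w_1$, and then that each hub edge $cv$ (resp.\ $vf$) crosses $k+1$ pairwise-crossing gadget edges while being nested with $bw_1$ (resp.\ $v_kg$), hence is pinned to that single remaining page. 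Without an analogous explicit construction and argument, the equivalence is simply unproven.

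Moreover, the way you propose to close the ``consecutive vertices'' loophole cannot work. You want the two reserved pages to be spanned by gadget edges that cross \emph{every} possible placement of a $G$-edge on them; but a $G$-edge whose endpoints are adjacent in the spine order crosses nothing at all, so no gadget, however elaborate, can force it off a reserved page. The loophole is not closed in the construction but in the proof: after showing that the reserved pages carry all the hub edges, one observes that the only $G$-edges that can cohabit with them are edges joining consecutive vertices, and such edges can simply be \emph{reassigned} to any other page (they cross nothing), leaving a \ube of $G$ on the remaining $k$ pages. This reassignment step (or an equivalent) is absent from your proposal, and with your stated goal in its place the argument would terminate in an impossibility rather than in the desired conclusion.
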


\begin{proof}[Sketch]
\begin{figure}[tb!]
    \includegraphics[page=1,width=\textwidth]{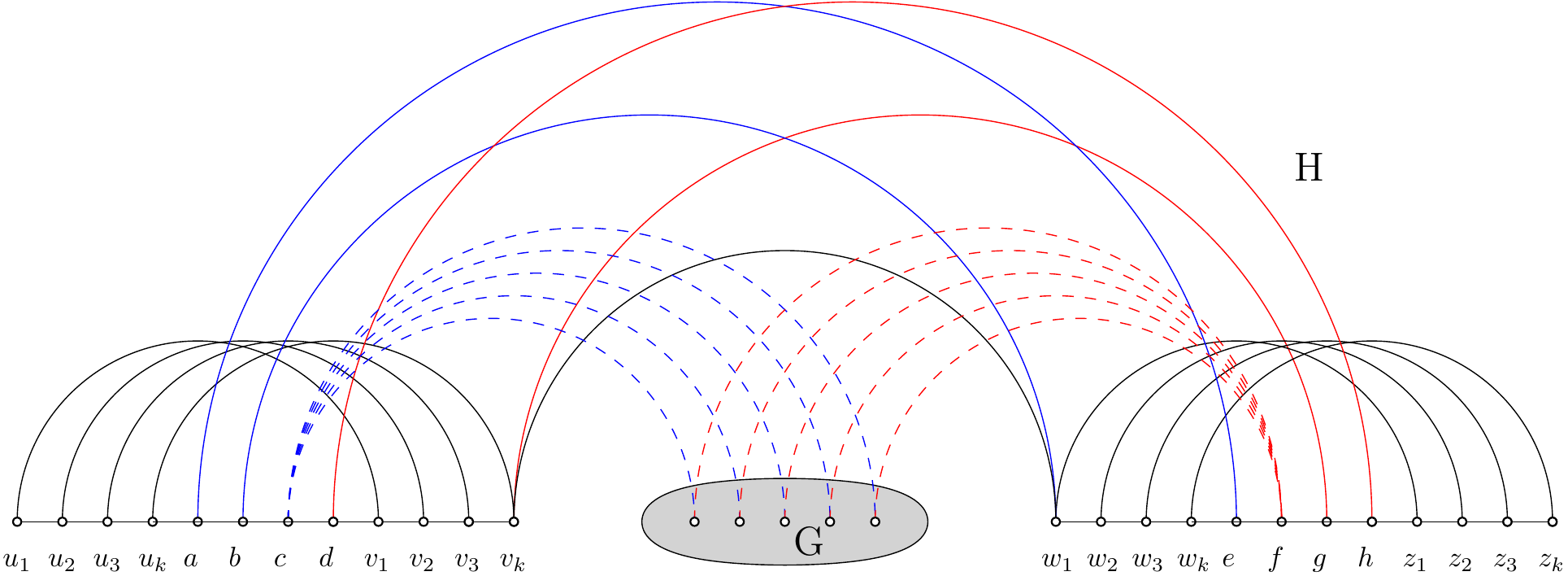}
    \caption{The graph $G'$ in the reduction of \cref{lem:reduction}. The edges of the auxiliary graph $H$ are solid. The black edges lie in $k$  pages.}
    \label{fig:dominating}
\end{figure}
The proof is based on the construction in \cref{fig:dominating}. We obtain graph $G'$ by suitably combining $G$ with an auxiliary graph $H$ whose vertices have the same order in any \ube of $H$, and $\ubt{H}=k+2$. The key property of $G'$ is that the vertices of $G$ are incident to vertices $c$ and $f$ of $H$, and that edges incident to each of these vertices \mbox{must lie in the same page in any \ube of $G'$.}\end{proof}

\newcommand{\lemReductionProof}{\begin{proof}
First, we construct an auxiliary graph $H$ as follows; refer to \cref{fig:dominating}. 
We initialize $H$ as the union of two directed paths
$p_1 = (u_1, u_2, \dots, u_k, a, b, c, d, v_1, v_2,$ \dots,$ v_k)$ and 
$p_2 = (w_1,w_2,\dots, w_k,e,f,g,h,z_1,z_2,\dots,z_k)$.
Further, we add to $H$ the sets of edges $E_1 := \{u_iv_i: 1 \leq i \leq k\}$ and $E_2:= \{ w_iz_i: 1 \leq i \leq k\}$. 
Then, we add the edges 
$ae$, $bw_1$, $dh$, $v_k w_1$, and $v_k g$.
Clearly, $H$ is an $st$-DAG with source $u_1$ and sink $z_k$. Moreover, since all the vertices of $H$ lie in a directed path from $u_1$ to $z_k$ their order $\pi_H$ is the same in any \ube of $H$. Note that, in $\pi_H$, the edges in $E_1 \cup \{ae,dh\}$ pairwise cross, and the same holds for the edges in $E_2 \cup \{ae,dh\}$. This implies that $\ubt{H} \geq k+2$. 
In fact, it is immediate to see that $\ubt{H}=k+2$.
We have the following properties for $H$. 
\begin{property}\label{prop:same-H}
In any  $(k+2)$-page \ube $\langle \pi_H, \sigma_H \rangle$ of $H$ it holds that
 $\sigma_H(ae)=\sigma_H(bw_1)$ and $\sigma_H(dh)=\sigma_H(v_k g)$.
\end{property}

We obtain $G'$ as follows. First, we initialize $G'$ to the union of $G$ and $H$. Then, we add to $G'$ the edges $c v$ and $vf$, for any vertex $v$ of $G$. Clearly, $G'$ is an $st$-DAG with source $u_1$ and sink $z_k$, and can be computed in $O(n)$ time.

First, we show that $\gamma(G') \in O(k)$. Note that $p_1$ and $p_2$ each contain $2k+4$ vertices. Therefore, it is possible to dominate each such path by selecting $k+2$ of its vertices. Further, we can dominate each vertex of $G$ by selecting either the vertex $c$ or the vertex $f$.

Next, we show that $G$ admits a $k$-page \ube $\langle \pi, \sigma\rangle$ if and only if $G'$ admits a $(k+2)$-page \ube $\langle \pi', \sigma' \rangle$.
Suppose first that $G$ admits a $k$-page \ube $\langle \pi, \sigma\rangle$. We obtain a $(k+2)$-page \ube $\langle \pi', \sigma' \rangle$ as follows.
Let $\pi^1_H$ and $\pi^2_H$ be the order of the vertices of $p_1$ and of $p_2$ in $\pi_H$, respectively. We set $\pi' = \pi^1_H \circ \pi \circ \pi^2_H$. Further, 
we assign the edges of $G$ to the same (up to $k$) pages in $\sigma'$ as in $\sigma$; then,
we set 
(i) $\sigma'(u_i v_i) = \sigma'(w_i z_i)=i$, for $i=1,\dots,k$, (ii) $\sigma'(ae) = \sigma'(bw_1) = k+1$ and $\sigma'(dh) = \sigma'(v_kg) = k+2$, and (iii) $\sigma'(cv)=k+1$ and $\sigma'(vf)=k+2$ for any vertex $v$ of $G$. This concludes the construction of $\langle \pi', \sigma' \rangle$. To see that $\langle \pi', \sigma' \rangle$ is indeed a $(k+2)$-page \ube of $G'$ observe, in particular, that $\sigma'$ fulfils  \cref{prop:same-H} and that the only edges of $E(G') \setminus E(G)$ that cross the edges of $G$ are the edges $cv$ and $vf$, where $v$ is a vertex of $G$, and that such edges are assigned to the pages $k+1$ and $k+2$, respectively, which are not used by the edges of $G$.

Suppose now that $G'$ admits a $(k+2)$-page \ube $\langle \pi', \sigma'\rangle$. We obtain a $k$-page \ube $\langle \pi, \sigma \rangle$ as follows. First, we show the following.

\begin{property} \label{claim:verticesofG}
In $\pi'$, all the vertices of $G$ lie after $v_k$ and before $w_1$.
\end{property}

\begin{proof}
Observe that, all the vertices of $G$ must trivially lie between $c$ and $f$, since edges $cv$ and $vf$ exist in $G'$ for any vertex $v$ of $G$. Suppose, for a contradiction, that a vertex $v$ of $G$ lies after $w_1$. Then, the edge $cv$ would cross (at least) the following $k+2$ edges: all the edges of $E_1$, the edge $dh$, and the edge $bw_1$. However, the mentioned edges pairwise cross, which yields a contradiction. The fact that no vertex of $G$ lies before $v_k$ can be proved analogously.
\end{proof}

By \cref{claim:verticesofG,prop:same-H}, we have that for each vertex $v$ of $G$, it holds that $\sigma'(cv)=\sigma'(ae)$ (which, w.l.o.g., we let be $k+1$) and $\sigma'(vf)=\sigma'(dh)$ (which, w.l.o.g., we let be $k+2$). Therefore, $\sigma'$ may place in pages $k+1$ and $k+2$ only edges connecting two consecutive vertices of $G$. Thus, by redefining $\sigma'(e)=1$ for any such an edge, we obtain a $(k+2)$-page \ube of $G'$ in which the edges of $G$ are assigned to $k$ pages. Removing the vertices of $H$ and their incident edges yields the desired $k$-page \ube $\langle \pi, \sigma \rangle$ of $G$.
\end{proof}}

Since testing for the existence of a $k$-page \ube is \NP-hard \mbox{when $k\geq 3$~\cite{DBLP:conf/compgeom/BinucciLGDMP19},} \cref{lem:reduction} implies that the problem remains \NP-hard even for inputs whose domination number is linearly bounded by $k$. We formalize this in the following.

\begin{theorem}\label{th:domination}
For any fixed $k \geq 5$, deciding whether an $st$-DAG $G$ is such that $\ubt{G}\le k$ is \NP-hard even if $G$ has domination number at most $O(k)$. 
\end{theorem}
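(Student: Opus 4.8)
The plan is to derive \cref{th:domination} as a direct corollary of \cref{lem:reduction} by combining it with the known hardness result for fixed small page numbers. Recall from~\cite{DBLP:conf/compgeom/BinucciLGDMP19} that deciding whether $\ubt{G} \le k$ is \NP-hard for every fixed $k \ge 3$. \cref{lem:reduction} provides, for any $n$-vertex DAG $G$ and any positive integer $k$, an $O(n)$-time construction of an $st$-DAG $G'$ with $\gamma(G') \in O(k)$ such that $\ubt{G} \le k$ if and only if $\ubt{G'} = k+2$. The idea is simply to chain these two facts together.

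First I would fix an arbitrary integer $k' \ge 5$ and set $k = k' - 2 \ge 3$. Starting from an instance $G$ of the \NP-hard problem ``$\ubt{G} \le k$'' (which is \NP-hard precisely because $k \ge 3$), I would apply the construction of \cref{lem:reduction} to obtain in $O(n)$ time an $st$-DAG $G'$ with $\gamma(G') \in O(k) = O(k')$ and with the equivalence $\ubt{G} \le k \iff \ubt{G'} = k+2 = k'$. Since the reduction runs in polynomial (indeed linear) time, this is a valid polynomial-time many-one reduction from the \NP-hard problem of deciding $\ubt{G} \le k$ to the problem of deciding whether a bounded-domination-number $st$-DAG has upward book thickness exactly $k'$.

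A small technical point to address is that the statement of \cref{th:domination} concerns deciding $\ubt{G'} \le k'$, whereas \cref{lem:reduction} gives an equivalence with $\ubt{G'} = k'$ (exactly). To bridge this, I would observe that the graph $H$ embedded inside $G'$ already forces $\ubt{G'} \ge \ubt{H} = k+2 = k'$ regardless of $G$, since the edges of $E_1 \cup \{ae, dh\}$ pairwise cross and likewise for $E_2 \cup \{ae, dh\}$; consequently, for the constructed $G'$ one has $\ubt{G'} \le k'$ if and only if $\ubt{G'} = k'$. Hence deciding $\ubt{G'} \le k'$ is equivalent to deciding $\ubt{G'} = k'$, and the reduction transfers hardness to the $\le k'$ formulation as required by the theorem statement.

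I expect the main (and only real) obstacle to be entirely encapsulated inside \cref{lem:reduction}, which we may assume; the remaining argument is a clean composition. The one place demanding care is confirming the $\ge k'$ lower bound on $\ubt{G'}$ so that the ``exactly $k'$'' guarantee of the lemma can be replaced by the ``at most $k'$'' decision question of the theorem, and verifying that $\gamma(G') \in O(k) = O(k')$ yields the claimed $O(k)$ bound on the domination number for the fixed target page count. Since the reduction is linear-time and the source problem is \NP-hard for every $k \ge 3$, the composition yields \NP-hardness of deciding $\ubt{G} \le k'$ on $st$-DAGs of domination number $O(k')$ for every fixed $k' \ge 5$, which is exactly the assertion of \cref{th:domination}.
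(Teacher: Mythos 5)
Your proposal is correct and follows essentially the same route as the paper: \cref{th:domination} is obtained by composing the known \NP-hardness of deciding $\ubt{G}\le k$ for fixed $k\ge 3$~\cite{DBLP:conf/compgeom/BinucciLGDMP19} with the reduction of \cref{lem:reduction}, taking $k=k'-2$. Your additional step bridging the lemma's ``$\ubt{G'}=k+2$'' guarantee to the theorem's ``$\ubt{G'}\le k'$'' question, via the lower bound $\ubt{G'}\ge \ubt{H}=k+2$ inherited from the subgraph $H$, is a detail the paper leaves implicit, and you handle it correctly.
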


\cref{th:domination} immediately implies that the upward book thickness problem parameterized by the domination number is para-\NP-hard. On the positive side, we next show that the problem parameterized by the vertex cover number admits a kernel and hence lies in the \textsf{FPT} class.

\subsection{FPT Algorithm Parameterized by the Vertex Cover Number}\label{sse:fpt}

We prove that the upward book thickness problem parameterized by the vertex cover number admits a (super-polynomial) kernel. We build on ideas in~\cite{DBLP:journals/jgaa/BhoreGMN20}. A \emph{vertex cover} of a graph $G=(V,E)$ is a subset  $C \subseteq V$ such that each edge in $E$ has at least one incident vertex in $C$ (a vertex cover is in fact a dominating set). The \emph{vertex cover number} of~$G$, denoted by $\tau$, is the size of a minimum vertex cover of~$G$. Deciding whether an $n$-vertex graph $G$ admits vertex cover of size $\tau$, and if so computing one, can be done in $O(2^\tau+\tau\cdot n)$ time~\cite{DBLP:journals/tcs/ChenKX10}. Let $G=(V,E)$ be an $n$-vertex DAG with vertex cover number $\tau$. Let $C=\{c_1,c_2,\dots,c_\tau\}$ be a vertex cover of $G$ such that $|C| = \tau$. The next lemma matches an analogous~result~in~\cite{DBLP:journals/jgaa/BhoreGMN20}.

\begin{restatable}[$\star$]{lemma}{lemVCUpperBound}\label{le:vc-upperbound}$G$ admits a $\tau$-page \ube that can be computed in $O(\tau \cdot n)$ time.
\end{restatable}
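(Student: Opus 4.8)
The plan is to fix any topological ordering of $G$ as the vertex order and then assign pages based on the vertex cover. First I would observe that, since $G$ is a DAG, a topological ordering $\pi$ of $V$ exists and can be computed by a standard linear-time scan. Taking $\pi$ as the spine order immediately enforces the upward constraint, because $\pi(u) < \pi(v)$ holds for every directed edge $uv$ by definition of topological order; hence it only remains to produce a crossing-free page assignment $\sigma$ using $\tau$ pages.

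For the page assignment I would exploit the defining property of $C$: the set $I = V \setminus C$ is independent, so every edge of $G$ has at least one endpoint in $C$. Accordingly, for each edge $e = uv$ I assign $e$ to page $i$, where $c_i$ is a cover-vertex endpoint of $e$ (breaking ties, when both endpoints lie in $C$, by choosing the smaller index). This clearly uses at most $\tau$ pages, one per vertex of $C$.

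To verify correctness I would argue that no two edges on a common page cross. Any two edges placed on page $i$ are both incident to the cover vertex $c_i$ and therefore share the endpoint $c_i$. Two edges that share an endpoint can never cross in a linear layout, since the crossing condition $\pi(u) < \pi(w) < \pi(v) < \pi(x)$ requires four \emph{distinct} spine positions, which is impossible when the two edges have a vertex in common. Thus every page is crossing-free, and $\langle \pi, \sigma \rangle$ is a valid $\tau$-page \ube of $G$.

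The only quantitative point to check is the running time. Because every edge is incident to one of the $\tau$ cover vertices and each such vertex has degree at most $n-1$, the graph has $m \le \tau \cdot (n-1) = O(\tau \cdot n)$ edges. Consequently, both the topological sort (in $O(n+m)$ time) and the single pass over the edges performing the page assignment (in $O(m)$ time) run in $O(\tau \cdot n)$ time overall. There is no real obstacle in the construction; the only mild subtlety is precisely this edge bound $m = O(\tau \cdot n)$, which is what lets us state the running time in terms of $\tau$ and $n$.
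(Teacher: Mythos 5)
Your proposal is correct and matches the paper's proof essentially verbatim: both use an arbitrary topological ordering as the spine, assign each edge to a page indexed by a cover-vertex endpoint (the tie-breaking direction when both endpoints lie in $C$ is immaterial), and conclude via the observation that edges sharing an endpoint cannot cross, with the $O(\tau \cdot n)$ bound following from $m = O(\tau \cdot n)$.
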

\newcommand{\lemVCUpperBoundProof}{\begin{proof}
Let $\pi$ be any topological ordering of $G$, that is, any ordering such that for every edge $uv$ of $G$, $\pi(u)<\pi(v)$. It is well-known that any DAG with $n$ vertices and $m$ edges admits at least one topological ordering, which can be computed in $O(n+m)$ time (see, e.g.,~\cite{DBLP:books/daglib/0023376}). Since $G$ has $O(\tau \cdot n)$ edges, $\pi$ can be computed in $O(\tau \cdot n)$ time.  Now consider the following page assignment $\sigma$, which again can be computed in $O(\tau \cdot n)$ time. Let $U=V\setminus C$; for each $1 \le i \le \tau$, we set $\sigma(e)=i$ for all edges $e=(u, c_i)$ with $u \in U\cup \{c_1,\dots,c_{i-1}\}$. By construction, for each $1 \le i \le \tau$, all edges in  page $i$ are incident to $c_i$, and thus no two of them cross each other. Moreover, by definition of topological ordering, for every  edge $uv$ of $G$, it holds $\pi(u)<\pi(v)$. Therefore, the pair $\langle \pi, \sigma \rangle$ is a $\tau$-page \ube of $G$ and has been computed in $O(\tau \cdot n)$ time.\end{proof}}

For a fixed $k \in \mathbb{N}$, if $k \ge \tau$, then $G$ admits a $k$-page \ube by \cref{le:vc-upperbound}. Thus we assume $k < \tau$. 
Two vertices $u,v \in V \setminus C$ are of the same \emph{type} $U$ if they have the same set of neighbors $U \subseteq C$ and, for every $w \in U$, the edges connecting $w$ to $u$ and $w$ to $v$ have the same orientation.  We proceed with the following reduction rule. For each type $U$, let $V_U$ denote the set of vertices of type $U$. 

\smallskip\noindent\textbf{R.1:} {\em If there exists a type $U$ such that $|V_U| \ge 2 \cdot k^\tau +2$, then pick an arbitrary vertex $u \in V_U$ and set $G \coloneqq G - u$.}

\smallskip\noindent  Since there are $2^{\tau}$ different neighborhoods of size at most $\tau$, and for each of them there are at most $2^\tau$ possible orientations, the type relation yields at most $2^{2\tau}$ distinct types. Therefore assigning a type to each vertex and applying \textbf{R.1} exhaustively can be done in $2^{O(\tau)} + \tau \cdot n$ time. We can prove that the~rule~is~safe.

\begin{restatable}[$\star$]{lemma}{lemSafe}\label{lem:safe}
The reduction rule \textbf{R.1} is safe.
\end{restatable}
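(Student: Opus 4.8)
The forward direction is immediate: since $G-u$ is a subgraph of $G$, any $k$-page \ube of $G$ restricts to one of $G-u$, so $\ubt{G}\le k$ implies $\ubt{G-u}\le k$. The content of the lemma is the converse. The plan is to take a $k$-page \ube $\langle \pi,\sigma\rangle$ of $G-u$ and reinsert $u$ without creating any crossing. Since $u$ is removed from a type set of size at least $2k^\tau+2$, there remain at least $2k^\tau+1$ vertices of type $U$ in $G-u$; call them \emph{twins}. Every twin $v$ has all its edges going to the at most $\tau$ vertices of $U\subseteq C$, and I call its \emph{signature} the map $U\to\{1,\dots,k\}$ sending each $w\in U$ to $\sigma(vw)$. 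There are at most $k^{|U|}\le k^\tau$ distinct signatures, so by the pigeonhole principle at least two twins $v_1<v_2$ (in the order $\pi$) share a common signature $s$.

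The crucial step, and the main obstacle, is to observe that the mere existence of two twins with the same signature forces $s$ to be injective, and moreover that the pages used on the incoming edges of a twin are disjoint from those used on its outgoing edges. I would derive this directly from the crossing conditions of the given (valid) embedding. For two common successors $y<w$ of $v_1,v_2$, the edges $v_1y$ and $v_2w$ have interleaving endpoints $v_1<v_2<y<w$ and hence cross; as they lie on pages $s(y)$ and $s(w)$, validity forces $s(y)\neq s(w)$, so $s$ is injective on successors. The symmetric argument on two common predecessors shows $s$ is injective on predecessors. Finally, for a predecessor $w$ and a successor $y$, the edges $v_2w$ and $v_1y$ have endpoints $w<v_1<v_2<y$ and interleave, hence cross, giving $s(w)\neq s(y)$; thus the page set of $u$'s incoming edges is disjoint from that of its outgoing edges. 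In particular $s$ is injective on all of $U$, so $|U|\le k$ (equivalently, if $|U|>k$ then two equally-coloured twins would cross and no $k$-page \ube of $G-u$ exists, making the claim vacuous).

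With this structural fact in hand, I would reinsert $u$ immediately to the right of $v_1$ in $\pi$ and copy the signature, setting $\sigma(uw)=s(w)=\sigma(v_1w)$ for every $w\in U$. Because $u$ and $v_1$ have the same neighbours with the same orientations, placing $u$ directly after $v_1$ keeps the order a valid topological order. It then remains to verify that no two edges sharing a page cross. Edges incident to $u$ pairwise share $u$ and cannot cross. For an edge $e$ incident neither to $u$ nor to $v_1$, since $u$ sits directly after $v_1$ the interval spanned by $uw$ contains exactly the same other vertices as the interval spanned by $v_1w$; hence $uw$ crosses $e$ iff $v_1w$ does, and since $v_1w$ never crosses a same-page edge in the valid embedding, neither does $uw$ (this case also covers every edge of $v_2$, whose crossings with $v_1w$ are already on distinct pages). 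The only remaining conflicts are between an edge $uw$ and an edge $v_1y$ of $v_1$ itself; a short case analysis shows that whenever $uw$ and $v_1y$ interleave geometrically, $w$ and $y$ are either both predecessors, both successors, or form a predecessor--successor pair, and in each such case $s(w)\neq s(y)$ by injectivity or by page-disjointness, so the two edges lie on distinct pages.

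Putting these checks together shows that $\langle \pi,\sigma\rangle$ extends to a $k$-page \ube of $G$, establishing the converse and hence the safety of \textbf{R.1}. I expect the geometric crossing bookkeeping of the second paragraph to be the delicate part: it is precisely the injectivity and incoming/outgoing page-disjointness forced by a \emph{repeated} signature that makes the ``copy an adjacent twin'' strategy crossing-free, and this is exactly what the counting bound $2k^\tau+2$ guarantees through the pigeonhole principle over the at most $k^\tau$ signatures.
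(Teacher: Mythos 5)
Your proof is correct, and it follows the same overall skeleton as the paper's: pigeonhole over the at most $k^{\tau}$ possible page assignments of the twins, then reinsert $u$ immediately next to a page-equivalent twin $v_1$ and copy its page assignment. Where you genuinely diverge is in how the key structural fact --- that a repeated signature must be \emph{injective} on $U$ --- is established. The paper extracts \emph{three} page-equivalent twins (this is why \textbf{R.1} uses the threshold $2k^{\tau}+2$) and argues that if two neighbours in $U$ shared a page, that page would contain a $K_{2,3}$, contradicting the fact that the subgraph on a single page must be outerplanar. You instead use only \emph{two} page-equivalent twins $v_1<v_2$ and read off injectivity (together with the incoming/outgoing page-disjointness, which is just injectivity on mixed pairs) directly from the interleaving patterns $v_1<v_2<y<w$, \ $y<w<v_1<v_2$, and $w<v_1<v_2<y$; these crossings are exactly right, and your explicit case analysis of the edges incident to $v_1$ after the insertion is in fact more careful than the paper's justification, which says the new edge ``follows the curve of $u_1w$'' and silently relies on injectivity to exclude same-page edges at $u_1$ itself. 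What each approach buys: the paper's argument is a one-line appeal to a standard obstruction and matches the style of the undirected kernelization it builds on; yours is more elementary (no outerplanarity of pages, no $K_{2,3}$) and shows as a bonus that the threshold in \textbf{R.1} could be lowered to $k^{\tau}+2$, since only two equally-signed twins are ever needed.
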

\begin{proof}[Sketch]
Let $u \in V_U$, such that $|V_U| \ge 2 \cdot k^\tau +2$. Suppose that $G - u$ admits a $k$-page \ube $\langle \pi, \sigma \rangle$. Two vertices $u_1, u_2 \in V_U \setminus \{u\}$ are \emph{page equivalent}, if for each vertex $w \in U$, the edges $u_1w$ and $u_2w$ are both assigned to the same page according to $\sigma$. By definition of type, each vertex in $V_U$ has degree exactly $|U|$, hence this relation partitions the vertices of $V_U$ into at most 
$k^{|U|} \le k^\tau$ sets. Since $|V_U  \setminus \{u\}| \ge 2 \cdot k^\tau + 1$, at least three vertices of this set are page equivalent. One can prove that these three vertices are incident to only one vertex in $C$. Then we can extend $\pi$ by introducing $u$ right next to any of these three vertices, say $u_1$,  and assign each edge $uw$ incident to $u$ to the same page as $u_1w$. 
\end{proof}
\newcommand{\lemSafeProof}{\begin{proof}
Since removing a vertex $u$ from a $k$-page \ube yields a $h$-page \ube of $G - u$ with $h \le k$, one direction follows. 

For the other direction, let $u \in V_U$, such that $|V_U| \ge 2 \cdot k^\tau +2$. Suppose that $G - u$ admits a $k$-page \ube $\langle \pi, \sigma \rangle$. Two vertices $u_1, u_2 \in V_U \setminus \{u\}$ are \emph{page equivalent}, if for each vertex $w \in U$, the edges $u_1w$ and $u_2w$ are both assigned to the same page according to $\sigma$. By definition of type, each vertex in $V_U$ has degree exactly $|U|$, hence this relation partitions the vertices of $V_U$ into at most 
$k^{|U|} \le k^\tau$ sets. Since $|V_U  \setminus \{u\}| \ge 2 \cdot k^\tau + 1$, at least three vertices of this set, which we denote by $u_1$, $u_2$, and $u_3$,  are page equivalent. Consider now the graph induced by the edges of these three vertices that are assigned to a particular page. Since $u_1,u_2,u_3$ are all incident to the same set of $c \ge 1$ vertices in $C$, such a graph is $K_{c,3}$. However, since $K_{2,3}$ is not outerplanar (and hence has no $1$-page \ube), we have that $c=1$. Then we can extend $\pi$ by introducing $u$ right next (or equivalently right before) to $u_1$ and assign each edge $uw$ (or $wu$) incident to $u$ to the same page as $u_1w$ ($wu_1$). Consider any edge $uw$ (or symmetrically $wu$) introduced as described above. Such edge follows the curve of $u_1w$ and hence, since $u_1w$ is uncrossed in its page, $uw$ does not cross any other edge of the same page. Moreover, since $u$ and $u_1$ are of the same type, the edges $uw$ and $u_1w$ are oriented consistently (either both outgoing from or incoming to $w$). Therefore, we obtained a $k$-page \ube of $G$, which implies that \textbf{R.1} is safe.\end{proof}}

\begin{restatable}[$\star$]{theorem}{thmFPTVertexCoverAndThickness}\label{thm:fpt-vertex-cover-and-thickness}
The upward book thickness problem parameterized by the vertex cover number $\tau$ admits a kernel of size $k^{O(\tau)}$. 
\end{restatable}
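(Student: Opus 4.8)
The plan is to assemble the preceding lemmas into a standard kernelization argument, so the bulk of the work will be bookkeeping rather than new ideas. First I would dispose of the easy regime: if $k \ge \tau$, then \cref{le:vc-upperbound} already yields a $\tau$-page (hence $k$-page) \ube, so the instance is trivially a yes-instance and I may assume $k < \tau$; the corner case $k=1$ is covered by the linear-time recognition of UBT-$1$ graphs, so I may further take $k \ge 2$. I would also compute a minimum vertex cover $C=\{c_1,\dots,c_\tau\}$ in $O(2^\tau + \tau \cdot n)$ time via~\cite{DBLP:journals/tcs/ChenKX10}, which is what the type-based reduction relies on.

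Next I would partition the independent set $V \setminus C$ into types. As already observed before \textbf{R.1}, each type is determined by a neighbourhood $U \subseteq C$ together with an orientation of the edges to $U$, so there are at most $2^{2\tau}$ distinct types, and assigning a type to every vertex takes $2^{O(\tau)}+\tau \cdot n$ time. I would then apply reduction rule \textbf{R.1} exhaustively; by \cref{lem:safe} each application preserves the answer, so the fully reduced graph $G^\star$ is a yes-instance if and only if $G$ is.

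The heart of the theorem is then the size bound on $G^\star$. Once \textbf{R.1} is no longer applicable, every type class contains at most $2k^\tau + 1$ vertices of $V \setminus C$, so $G^\star$ has at most $\tau + 2^{2\tau}\,(2k^\tau+1)$ vertices. Using $k \ge 2$ (so that $2^{2\tau}=4^\tau \le k^{2\tau}$ and $2^{O(\tau)} \subseteq k^{O(\tau)}$) this is $k^{O(\tau)}$. Since $C$ is a vertex cover, every edge of $G^\star$ is incident to $C$, and thus the number of edges is at most $\tau$ times the number of vertices, again $k^{O(\tau)}$. Hence $G^\star$ is a kernel of total size $k^{O(\tau)}$; as it is produced within $2^{O(\tau)}+\tau \cdot n$ time, the problem is in \textsf{FPT} with respect to $\tau$.

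The main obstacle is not conceptual: the only genuinely delicate ingredient, the safeness of \textbf{R.1} (in particular the $K_{2,3}$ argument forcing $c=1$), is already discharged in \cref{lem:safe}. What I expect to require the most care is the final accounting step, namely verifying that the per-type cap multiplies out to a $k^{O(\tau)}$ bound on \emph{both} vertices and edges simultaneously, and confirming that each phase (vertex cover, typing, exhaustive reduction) stays within the claimed running time. No new combinatorial insight beyond the lemmas should be needed.
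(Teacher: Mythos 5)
Your proposal is correct and follows essentially the same route as the paper: exhaustive application of \textbf{R.1}, the $2^{2\tau}$ bound on the number of types, and the per-type cap of $2k^\tau+1$ vertices, yielding $k^{O(\tau)}$ vertices and (via the vertex cover) $\tau\cdot k^{O(\tau)}$ edges. Your explicit handling of the $k=1$ corner case (needed so that $2^{2\tau}\le k^{O(\tau)}$ holds) is a small point of rigor the paper's proof glosses over, but it does not change the argument.
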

\newcommand{\thmFPTVertexCoverAndThicknessProof}{\begin{proof}Let $G^*$ be the graph obtained by applying \textbf{R.1} exhaustively.  We have already seen that there are at most $2^{2\tau}$ distinct types. After the application of \textbf{R.1}, each type $U$ is such that  $|V_U| \le 2 \cdot k^\tau+1$ vertices, and therefore $G^*$ contains at most $(2 \cdot k^\tau+1) \cdot 2^{2\tau}+\tau \le k^{O(\tau)}$ vertices and at most $\tau \cdot  k^{O(\tau)}$ edges.\end{proof}}

\begin{restatable}[$\star$]{corollary}{corVCUbe}\label{cor-vc-ube}
Let $G$ be an $n$-vertex graph with vertex cover number $\tau$. For any $k \in \mathbb{N}$, we can decide whether $\ubt{G} \le k$  in $O(\tau^{\tau^{O(\tau)}} + \tau \cdot n)$ time. Also, within the same time complexity, we can compute a $k$-page \ube of $G$, if it exists.
\end{restatable}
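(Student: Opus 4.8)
The plan is to combine the kernelization of \cref{thm:fpt-vertex-cover-and-thickness} with an exhaustive search on the kernel, using \cref{le:vc-upperbound} to dispose of the easy regime. First I would compute a minimum vertex cover $C$ of $G$ with $|C|=\tau$ in $O(2^\tau + \tau \cdot n)$ time~\cite{DBLP:journals/tcs/ChenKX10}. If $k \ge \tau$, then \cref{le:vc-upperbound} already yields a $\tau$-page (hence $k$-page) \ube in $O(\tau \cdot n)$ time and we are done; so we may assume $k < \tau$ from now on.

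Next I would apply \textbf{R.1} exhaustively to obtain the kernel $G^*$, which by \cref{thm:fpt-vertex-cover-and-thickness} has $N = k^{O(\tau)} \le \tau^{O(\tau)}$ vertices and, since every vertex outside $C$ has degree at most $\tau$, at most $m = O(\tau \cdot N) = \tau^{O(\tau)}$ edges; this step costs $2^{O(\tau)} + \tau \cdot n$ time. Since \textbf{R.1} is safe (\cref{lem:safe}), it holds that $\ubt{G} \le k$ if and only if $\ubt{G^*} \le k$, so it suffices to decide the latter. I would do this by brute force: enumerate all orderings of $V(G^*)$ and, for each, all assignments of its edges to $k$ pages, testing in $\mathrm{poly}(N)$ time whether the resulting pair is a valid $k$-page \ube (this test also enforces the upward constraint on the ordering). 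The number of configurations is $N! \cdot k^m$; using $N! \le N^N = \tau^{\tau^{O(\tau)}}$ and $k^m \le \tau^{\tau^{O(\tau)}}$, the whole search runs in $\tau^{\tau^{O(\tau)}}$ time.

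To return an actual \ube, and not merely decide, I would reverse the kernelization. The proof of \cref{lem:safe} is constructive: given a $k$-page \ube of the graph obtained after a single application of \textbf{R.1}, it reinserts the deleted vertex $u$ immediately next to a page-equivalent vertex $u_1$ and routes each edge of $u$ parallel to the corresponding edge of $u_1$, copying its page. Applying this reinsertion once per removed vertex, in the reverse order of the reductions, lifts the \ube found on $G^*$ to one on $G$; each reinsertion touches only $u$ and its at most $\tau$ incident edges, so the lifting costs $O(\tau \cdot n)$ in total. Summing the three phases yields the claimed $O(\tau^{\tau^{O(\tau)}} + \tau \cdot n)$ bound.

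The routine parts are the complexity bookkeeping; the step deserving the most care is the lifting. One must check that the hypothesis needed by \cref{lem:safe} — the existence of at least three page-equivalent vertices of the relevant type, which forces the $K_{2,3}$-obstruction and hence a single shared neighbor per page — is preserved throughout the reinsertion process. This holds because $G^*$ retains $2k^\tau+1 > 2k^\tau$ vertices of each reduced type, so by the pigeonhole principle a page-equivalent triple is always present, and reinserting further vertices only enlarges these classes; thus a valid landing spot for each $u$ is guaranteed at every step.
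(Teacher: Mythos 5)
Your proposal is correct and follows essentially the same route as the paper: dispose of the case $k \ge \tau$ via \cref{le:vc-upperbound}, kernelize with \textbf{R.1}, brute-force the kernel, and lift the embedding back by applying the constructive reinsertion of \cref{lem:safe} in reverse order of the reductions. The only (immaterial) difference is in the brute-force analysis: you bound the search by the naive $N!\cdot k^m$ count, while the paper counts orderings up to type equivalence, and both estimates land at the same $\tau^{\tau^{O(\tau)}}$ bound.
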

\newcommand{\corVCUbeProof}{\begin{proof}
If $k \ge \tau$, by \cref{le:vc-upperbound}, we can immediately return a positive answer and compute a $k$-page \ube in $O(\tau \cdot n)$ time. So assume $k < \tau$. By~\cite{DBLP:journals/tcs/ChenKX10}, we compute a vertex cover $C$ of $G$ of size $\tau$ in $O(2^\tau + \tau \cdot n)$ time. We have seen that computing the kernel $G^*$ of $G$ can be done in $O(2^{O(\tau)} + \tau \cdot n)$ time. Since we have $2^{2\tau}$ types, and each of the $2 \cdot k^\tau + 2$ elements of the same type are equivalent in the book embedding (the position of two elements of the same type can be exchanged in a linear order without affecting the page assignment), the number of linear orders is $(2^{2\tau})^{O(k^\tau)}=2^{{O(\tau^{\tau+1})}}$. 
Since $G^*$ contains $\tau \cdot k^{O(\tau)}=\tau^{O(\tau)}$ edges, the number of page assignments is then $k^{\tau^{O(\tau)}}$. Overall, the book thickness of $G^*$ can be determined in $2^{{O(\tau^{\tau+1})}} \cdot k^{\tau^{O(\tau)}}=k^{\tau^{O(\tau)}}=\tau^{\tau^{O(\tau)}}$ time. If $G^*$ admits a $k$-page \ube, in $O(\tau \cdot n)$ time, we can reinsert the vertices in $G \setminus G^*$ as in the proof of \cref{lem:safe}, thus obtaining a $k$-page \ube of $G$ in overall $O(\tau^{\tau^{O(\tau)}}{+}\tau \cdot n)$ time.\end{proof}}

\section{Open Problems}
The next questions naturally arise from our research: (i)~Is the UBT of upward outerplanar graphs bounded by a constant? (ii)~Are there other parameters that are larger than the domination number (and possibly smaller than the vertex cover number) for which the problem is in \textsf{FPT}? (iii)~Does the upward book thickness problem parameterized by vertex cover number~admit~a~polynomial~kernel?

\bibliographystyle{splncs04}
\bibliography{bibliography}
\clearpage
\appendix
\section{Missing Proofs of \cref{ref:prelims}}

\lembimodality*
\lembimodalityproof

\section{Missing Proofs of \cref{se:outerplanar}}

\subsection{Missing Proofs of \cref{sse:max-outerpaths}}

\fandecunique*
\fandecuniqueProof

\lemSharededge*
\lemSharededgeProof

\lemOneIsertion*
\lemOneIsertionProof

\lemStouterpath*
\lemStouterpathProof

\lemStOuterExsists*
\lemStOuterExsistsProof

\lemConsecutiveshare*
\lemConsecutiveshareProof

\lemConsecutiveFans*
\lemConsecutiveFansProof

\lemConsecutiveV*
\lemConsecutiveVProof

    \thmMain*
\thmMainProof

\subsection{Missing Proofs of \cref{sse:outerplanar}}

\lemStOuterplanar*
\lemStOuterplanarProof

\thmGeneral*
\thmGeneralProof

\subsection{Missing Proofs of \cref{sse:cactus}}

\lemNoTrivial*
\lemNoTrivialProof

\lemCycle*
\lemCycleProof

\thmMainCactus*
\thmMainCactusProof

\section{Missing Proofs of \cref{se:complexity}}

\subsection{Missing Proofs of \cref{sse:hardness}}

\lemReduction*
\lemReductionProof

\subsection{Missing Proofs of \cref{sse:fpt}}

\lemVCUpperBound*
\lemVCUpperBoundProof

\lemSafe*
\lemSafeProof

\thmFPTVertexCoverAndThickness*
\thmFPTVertexCoverAndThicknessProof

\corVCUbe*
\corVCUbeProof
\end{document}